\newtheorem{thm}{Theorem}
\newtheorem{prop}{Proposition}
\newtheorem{lemma}{Lemma}
\newtheorem{corol}{Corollary}
\theoremstyle{remark}
\newtheorem{remark}{Remark} 
\theoremstyle{definition}
\numberwithin{equation}{section}
\newcommand{\tr}{\operatorname{tr}}
\newcommand{\sinc}{\operatorname{sinc}}
\newcommand{\T}{\operatorname{\mathcal{T}}}
\newcommand{\E}{\operatorname{\mathbb{E}}}
\newcommand{\Prob}{\operatorname{Prob}}
\renewcommand{\d}{\mathrm{d}}
\newcommand{\rmd}{\mathrm{d}}
\newcommand{\rmi}{\mathrm{i}}
\newcommand{\rme}{\mathrm{e}}
\newcommand{\abs}[1]{\left|{#1}\right|}
\definecolor{dgreen}{rgb}{0,0.5,0}
\definecolor{delete}{cmyk}{0.5,0,0,0}
\begin{document}
\title[One bound to rule them all]{One bound to rule them all: from Adiabatic to Zeno}
\date{May 23, 2022}
\author{Daniel Burgarth}
\orcid{0000-0003-4063-1264}
\affiliation{Center for Engineered Quantum Systems, Macquarie University, 2109 NSW, Australia}
\author{Paolo Facchi}
\orcid{0000-0001-9152-6515}
\affiliation{Dipartimento di Fisica and MECENAS, Universit\`a di Bari, I-70126 Bari, Italy}
\affiliation{INFN, Sezione di Bari, I-70126 Bari, Italy}
\author{Giovanni Gramegna}
\orcid{0000-0001-7532-1704}
\affiliation{Dipartimento di Fisica, Universit\`a di Trieste, I-34151 Trieste, Italy}
\affiliation{INFN, Sezione di Trieste, I-34151 Trieste, Italy}
\affiliation{Eberhard-Karls-Universit\"at T\"ubingen, Institut f\"ur Theoretische Physik, 72076 T\"ubingen, Germany}
\author{Kazuya Yuasa}
\orcid{0000-0001-5314-2780}
\affiliation{Department of Physics, Waseda University, Tokyo 169-8555, Japan}
\maketitle
\begin{abstract}
We derive a universal nonperturbative bound on the distance between unitary evolutions generated by time-dependent Hamiltonians in terms of the difference of their integral actions. We apply our result to provide explicit error bounds for the rotating-wave approximation and generalize it beyond the qubit case. 
We discuss the error of the rotating-wave approximation over long time and in the presence of time-dependent amplitude modulation. 
We also show how our universal bound can be used to derive and to generalize other known theorems such as the strong-coupling limit, the adiabatic theorem, and product formulas, which are relevant to quantum-control strategies including the Zeno control and the dynamical decoupling. Finally, we prove generalized versions of the Trotter product formula, extending its validity beyond the standard scaling assumption.
\end{abstract}

\section{Motivations}
Compare two unitary evolutions
\begin{equation}
U_1(t)=\T \exp\!\left(-\rmi \int_0^t \d s\, H_1(s)\right), \qquad U_2(t)=\T \exp\!\left(-\rmi \int_0^t \d s\, H_2(s)\right),
\end{equation}
where $\T$ denotes the time-ordering operator. What is the relation between these evolutions and their (possibly time-dependent) Hamiltonians $H_1(t)$ and $H_2(t)$?

If the Hamiltonians are close $H_1(t)\approx H_2(t)$, say in a time interval $t\in[0,T]$, then the evolutions they generate are close. 
The converse is not necessarily true.
The evolutions can be close $U_1(t)\approx U_2(t)$ and their distance can be small even if their Hamiltonians $H_1(t)$ and $H_2(t)$ are not close. 
A simple commutative example is a pair of Hamiltonians  
\begin{equation}
H_1(t)= H , \qquad 
H_2(t)= (1+ \kappa \cos \kappa^2 t) H,
\end{equation}
with a bounded self-adjoint operator $H=H^\dag$ and a large constant $\kappa\gg 1$. 
Their difference is very large, $H_2(t)-H_1(t)=O(\kappa)$, although highly oscillating, while the corresponding evolutions,
\begin{equation}
U_1(t) = \rme^{-\rmi t H}, \qquad 
U_2(t)=\rme^{-\rmi \left(t + \frac{1}{\kappa} \sin \kappa^2 t\right) H},
\end{equation} 
are very close, $U_1(t)\approx U_2(t)$, with a distance of order $O(1/\kappa)$. 
What happens is that the large difference of the Hamiltonians is averaged out by the fast oscillations and has negligible effects on the evolution. 
This is the essence of the averaging methods in dynamical systems~\cite{Arnold,Sanders}.

A peculiar consequence of noncommutativity is that the same phenomenon can take place also for constant \emph{time-independent} Hamiltonians, as shown in the following example. 
Consider the Hamiltonians
\begin{equation}
H_1 = \kappa Z, \qquad 
H_2 = \kappa Z + X,
\end{equation}
where $X$ and $Z$ are the first and third Pauli matrices, respectively (we will also use the second Pauli matrix $Y$), and $\kappa\gg 1$ as above. 
Both Hamiltonians are of order $O(\kappa)$ and their difference $H_2-H_1=X$ is of order $O(1)$. 
However, one can show that~\cite{ref:unity1}
\begin{equation}
\rme^{-\rmi t(\kappa Z + X)} = \rme^{-\rmi \kappa tZ } + O\!\left(\frac{1}{\kappa}\right),
\end{equation}
which would be blatantly false if $X$ and $Z$ commuted.
Again, the source of this phenomenon can be traced back to an averaging effect. 
In the rotating frame generated by $H_1$ (namely, in the interaction picture with respect to $H_1$), one gets $\hat{H}_1(t)=0$ and
\begin{equation}
\hat{H}_2(t) = \rme^{\rmi  \kappa tZ } X \rme^{-\rmi  \kappa tZ} = \cos (2 \kappa t)X - \sin (2 \kappa t)Y,
\end{equation} 
instead of $H_1$ and $H_2$, respectively.
Their difference $\hat{H}_2(t)-\hat{H}_1(t)$ is of order $O(1)$ but is fast oscillating and has negligible effects $O(1/\kappa)$ on the evolution.

The moral is that in general the distance between two Hamiltonians loosely bounds the distance between the evolutions they generate. 
A better physical quantity that controls the divergence is instead the difference of the two integral actions 
\begin{equation}
S_{21}(t) = \int_0^t \d s\,[H_2(s)-H_1(s)],
\label{eq:distact}
\end{equation}
in a suitable rotating frame.
We will give an explicit bound on the distance between two evolutions in terms of the difference between their actions~\eqref{eq:distact}, and thus
show that 
\begin{equation}
S_{21}\approx 0  \quad \Rightarrow \quad U_2\approx U_1.
\label{eq:actioncontrol}
\end{equation}
Then, we will show the effectiveness of such bound in proving a plethora of old and new results in simple ways.

\subsection{Overview of Applications and Previous Related Results}
\paragraph{Rotating-Wave Approximation.}
Coherent driving of quantum systems plays a central role in many areas of physics and chemistry. The specific task with arguably the highest demands on the accuracy of the desired operations carried out by such driving is the manipulation of two-level systems, or qubits, that form the basic elements of a quantum computer. Periodically driven two-level systems are important prototypes in diverse phenomena in nearly every subfield of physics such as optics~\cite{Eberly}, nuclear magnetic resonance~\cite{Mehring}, superconductor devices~\cite{ref:GuideSupercondQubits}, solid-state systems~\cite{ref:RWA-experiment}, and their applications to quantum information.

The rotating-wave approximation (RWA) replaces highly oscillatory components in the Schr\"odinger equation with time-averaged quantities, which often leads to much simpler and analytically tractable models, as well as capturing the essential physics. But how is it justified? At first encounter, the removal of terms from a differential equation which are usually \emph{of the same order} as the others seems ad hoc. Some geometrical interpretation (often using a lot of literal \emph{hand-waving} describing rotations of Bloch vectors) or the term \emph{off-resonance} might be provided to back up the approximation, but often these explanations have a tautological flavor. On the other hand, there are many perturbative methods which lead to a RWA\@. To name a few, there are average-Hamiltonian methods~\cite{Mehring, Haeberlen, ref:JamesAverageHamiltonian} and approaches based on the Magnus expansion and the Floquet theory~\cite{Series,ref:DiVincenzo}. While these methods are very efficient at higher orders of the RWA (such as the Bloch-Siegert shift~\cite{Eberly}), they do not give bounds on how good the naive RWA is and how it becomes exact in a certain limit. The convergence of these series is also often a subtle issue, as can be seen already for the Magnus series~\cite{ref:Blanes-Mugnus}. Recently, analytical solutions of  the time-dependent Schr\"odinger equation were discovered for linearly driven qubit systems~\cite{Xie} and were put in relationship with the Floquet theory~\cite{Schmidt}. These solutions are however rather unwieldy and make it hard to provide simple bounds for the RWA, as well as not being applicable to more general cases.

In the mathematical-physics literature, rigorous proofs of the RWA are well established~\cite{ref:RWA-Chambrion2012,ref:RWA-Augier2019,ref:RWA-RobinAugierBoscainSigalotti2020,ref:RWA-AugierBoscainSigalotti2020}, but not well known in the wider community. The basic idea is to use averaging methods developed by Krylov, Bogoliubov, and Mitropolsky (KBM) for nonlinear dynamical systems in the 1950s~\cite{Sanders}. In particular, it is shown that the solution in the RWA converges uniformly to the true solution on finite intervals of time as the drive frequency goes to infinity. This is also the idea we shall follow here. However, the KBM theory is not well suited to quantum mechanics, which is manifestly linear. In quantum theory, it should be possible to develop a self-contained approach which essentially follows similar steps to the KBM theory, but provides much better and more explicit nonperturbative bounds. The integration-by-part lemma presented in Sec.~\ref{sec:IntegrationByPartLemma} allows us to develop such bounds and to generalize the RWA further. A related but different idea is also mentioned in Ref.~\cite[Chap.~8]{ref:DAlessandroControlText2}, but error bounds are not explicitly worked out. We will also prove that the RWA is not \emph{eternal}~\cite{ref:EternalAdiabatic}: small errors accumulate over time. Such bounds are useful in quantum information and control, where increasing precision is needed to match stringent fault-tolerance thresholds, and the validity of the RWA is not always good~\cite{ref:RWA-experiment}.

\paragraph{Adiabatic Evolutions and Strong Coupling.}
Adiabatic theorems, concerning approximations of the evolutions generated by slowly varying Hamiltonians~\cite{ref:Messiah,ref:KatoAdiabatic,ref:Avron1999Gapless,ref:Joye2007,ref:Avron2012Contracting,ref:unity1}, rely on the same principle as that of the RWA, although maybe in a different flavor. In this case, the separation of the timescale on which the evolution occurs and the timescale on which the Hamiltonian changes hinders the transitions between different eigenspaces of the Hamiltonian during the evolution. At first sight, the connection between this mechanism and the RWA might seem difficult to grasp. It however becomes clear when the adiabatic evolution is studied in a suitable frame rotating with the eigenspaces of the slowly varying Hamiltonian. In this frame, one can see that the transitions between different eigenspaces are generated by fast components which are averaged out in the effective evolution. It is then intriguing to see how these apparently unrelated results may be derived using the same technique.

A version of adiabatic theorem involving a time-independent Hamiltonian also plays an important role in the control of quantum systems via the quantum Zeno dynamics~\cite{ref:QZS,ref:PaoloSaverio-QZEreview-JPA,ref:ZenoPaoloMarilena}, and precisely in its manifestation through the strong-coupling limit~\cite{ref:SchulmanJumpTimePRA,ref:ControlDecoZeno,ref:QZEExp-Ketterle,ref:QZEExp-Schafer:2014aa,ref:unity1,ref:GongPRL,ref:GongPRA,ref:EternalAdiabatic,ref:ZenoKAM}. The framework developed here can be used to reproduce these results independently. The simplicity of our approach also allows us to derive new versions of adiabatic theorem, which require less stringent assumptions on the form of the slowly varying Hamiltonians.

\paragraph{Generalized Product Formulas.} 
Trotter's product formula is widely used in physics at various levels, ranging from fundamental problems such as Feynman's path-integral formulation of quantum mechanics~\cite{Feynman,Simon} to practical ones such as Hamiltonian simulation on quantum computers~\cite{Suzuki,suzuki1991general,Sieberer19}. A matter of utmost importance for applications is the ability to control the digitization errors introduced by product formulas, and to find suitable generalizations of the formulas which can be used flexibly in particular practical problems considered. A remarkable example of this versatility is represented by symmetry protected quantum simulations~\cite{ref:ProductFormulaTaylor}, where the symmetries of the target Hamiltonian are exploited to greatly reduce the error of the simulation, which is achieved by alternating the simulation steps with unitary transformations generated by the symmetries of the system. The rapid alternation of several noncommuting Hamiltonians can be formally represented by an evolution generated by a time-varying Hamiltonian which is piecewise constant. Then, if the alternating Hamiltonians follow some particular structure, the resulting evolution can be effectively described by their average effect, a feature which can be used in practical applications of several sorts. We will show how various product formulas can be derived using our main theorem, and we will provide their explicit error bounds, which improve some of the existing ones~\cite{ref:ProductFormulaTaylor}. Another motivation for studying product formulas was recently highlighted in Ref.~\cite{ref:alphaTrot} in the attempt to establish a bridge between different control techniques such as the strong-coupling and bang-bang controls, which both yield quantum Zeno dynamics. To this aim, a ``rescaled'' version of the Trotter product formula was proved in Ref.~\cite{ref:alphaTrot}, with an analytical bound on the error which depends on the scaling parameter, although numerical evidence suggested an error uniform in the scaling parameter. Using the main theorem introduced in this paper, we will prove that the error in the rescaled product formula is indeed uniform.

\subsection{Summary of Results and Paper Outline}
This paper contains several new results together with improvements or simple proofs of known results. Our main purpose is to explore the wide range of applications of the universal bound in Lemma~\ref{lemma:divergence}\@. In particular, we will obtain the following new results (see also Fig.~\ref{ring} and Table~\ref{tab:Summary}):
\begin{figure}
\centering 
\includegraphics[width=.8\textwidth]{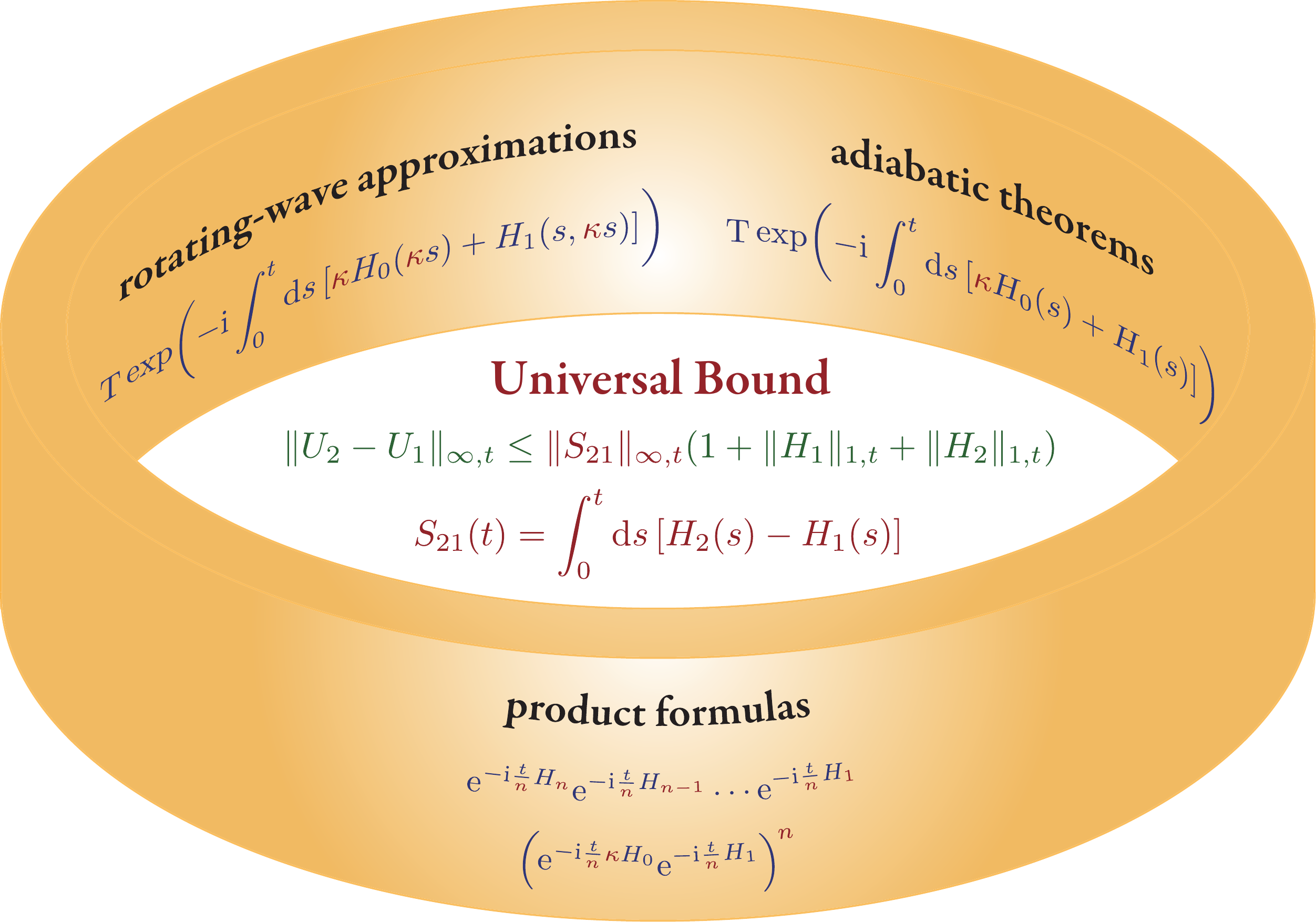}
\caption{One universal bound on the difference between two evolutions in terms of an integral action rules a variety of situations involving the separation of timescales, ranging from RWAs to adiabatic evolutions to strong-coupling limits to Zeno dynamics to dynamical decoupling to product formulas.} 
\label{ring}
\end{figure}
\begin{table}
\caption{List of evolutions whose limits as $\kappa\to+\infty$ or $n\to+\infty$ are proved in this paper, from adiabatic theorems to strong-coupling limits to RWAs to product formulas, including Zeno dynamics, dynamical decoupling, and bang-bang control. Explicit bounds on their convergence are all ruled and provided by one universal bound given in Lemma~\ref{lemma:divergence}. The evolutions are generated by Hamiltonians $H(s)$, $H_0(s)$, $H_1(s)$, etc., and include pulsed dynamics, where continuous Hamiltonian evolutions are interspersed with instantaneous unitaries $U_j$, $V_j$, etc., while their limit evolutions are generated by average Hamiltonians $\overline{H}$, Zeno Hamiltonians $H_Z(s)$, adiabatic connections $A(s)$, etc. See the relevant sections/statements for their definitions and the conditions required for the limits. Many results are new or are improvements of known results. See the explanation in the text.}
\label{tab:Summary}
\centering 
\begin{tabular}{l}
\hline
\hline
Universal Bound (Lemma~\ref{lemma:divergence}, Theorem~\ref{thm:OneMoreThm}, Corollary~\ref{cor:OneMoreThmbar})
\\
\medskip
\qquad
$\displaystyle
\| U_2 -U_1 \|_{\infty,t}  \le \| S_{21}\|_{\infty,t}( 1 +  \|H_1\|_{1,t}  + \|H_2\|_{1,t})
$,\quad
$\displaystyle
S_{21}(t) = \int_0^t \d s\,[H_2(s)-H_1(s)]
$
\\
Theorem~\ref{thm:periodic}. Eternal Approximation of Periodic Hamiltonian $H(s)=H(s+\tau)$
\\
\smallskip
\qquad
$\displaystyle
\T\rme^{-\rmi\int_0^t\rmd s\,H(\kappa s)}
\approx
\rme^{-\rmi t\overline{H}_\kappa}
\vphantom{\int_0^t}
$,\quad for any long times $t$
\\
Section~\ref{sec:RWA}. Rotating-Wave Approximation
\\
\smallskip
\qquad
$\displaystyle
\T\rme^{-\rmi\int_0^t\rmd s\,[\kappa H_0(\kappa s)+H_1(s,\kappa s)]}
\approx
\T\rme^{-\rmi\int_0^t\rmd s\,\kappa H_0(\kappa s)}
\T\rme^{-\rmi\int_0^t\rmd s\,\overline{H}(s)}
\vphantom{\int_0^t}
$
\\
Theorem~\ref{thm:StrongCouplingLimit}. Strong-Coupling Limit
\\
\smallskip
\qquad
$\displaystyle
\rme^{-\rmi t(\kappa H_0+H_1)}
\approx
\rme^{-\rmi t(\kappa H_0+H_Z)}
\vphantom{\int_0^t}
$
\\
Theorem~\ref{thm:AdiabaticTheorem}. Adiabatic Theorem
\\
\smallskip
\qquad
$\displaystyle
\T\rme^{-\rmi\int_0^t\rmd s\,\kappa H_0(s)}
\approx
\T\rme^{-\rmi\int_0^t\rmd s\,[\kappa H_0(s)+A(s)]}
\vphantom{\int_0^t}
$
\\
Theorem~\ref{thm:genadiabatic}. Generalized Adiabatic Theorem
\\
\qquad
$\displaystyle
\T\rme^{-\rmi\int_0^t\rmd s\,H_\kappa(s)}
\approx
\T\rme^{-\rmi\int_0^t\rmd s\,[\kappa H_0(s)+G_{\kappa,Z}(s)+A(s)]}
\vphantom{\int_0^t}
$,\quad for $\frac{1}{\kappa}H_\kappa(t)\to H_0(t)$
\\
Application of Theorem~\ref{thm:genadiabatic}. Strong-Coupling Limit for Time-Dependent Hamiltonians
\\
\smallskip
\qquad
$\displaystyle
\T\rme^{-\rmi\int_0^t\rmd s\,[\kappa H_0(s)+H_1(s)]}
\approx
\T\rme^{-\rmi\int_0^t\rmd s\,[\kappa H_0(s)+H_Z(s)+A(s)]}
\vphantom{\int_0^t}
$
\\
Theorem~\ref{thm:ATrott} and Corollary~\ref{cor:RandomTrott}. Ergodic-Mean and Random Trotter Formulas
\\
\qquad
$\displaystyle
\Bigl(\rme^{-\rmi\frac{t}{n}H_n}\rme^{-\rmi\frac{t}{n}H_{n-1}}\cdots \,\rme^{-\rmi\frac{t}{n}H_1} \Bigr)^n
\to
\rme^{-\rmi t\overline{H}}
\vphantom{\int_0^t}
$
\\
\smallskip
\qquad
$\displaystyle
\Bigl(
\rme^{-\rmi\frac{t}{np}H_p}\rme^{-\rmi\frac{t}{np}H_{p-1}}\cdots \,\rme^{-\rmi\frac{t}{np}H_1}
\Bigr)^n
\to
\rme^{-\rmi t \overline{H}}
\vphantom{\int_0^t}
$
\\
Corollary~\ref{thm:Ukicks}. Frequent Unitary Kicks
\\
\smallskip
\qquad
$\displaystyle
U_{n+1}\rme^{-\rmi\frac{t}{n}H} U_n\rme^{-\rmi\frac{t}{n}H}\cdots\,U_2\rme^{-\rmi\frac{t}{n}H} U_1
\approx
U_{n+1}\cdots U_1\rme^{-\rmi t \overline{H}}
\vphantom{\int_0^t}
$
\\
Application of Corollary~\ref{thm:Ukicks}. Dynamical Decoupling
\\
\smallskip
\qquad
$\displaystyle
\Bigl( V_p^\dag\rme^{-\rmi\frac{t}{n p}H} V_p\cdots 	V_1^\dag\rme^{-\rmi\frac{t}{n p}H}V_1 \Bigr)^n 
\to
\rme^{-\rmi t \overline{H}}
\vphantom{\int_0^t}
$
\\
Corollary~\ref{cor:UnitaryKick}. Bang-Bang Control
\\
\smallskip
\qquad
$\displaystyle
\Bigl(U\rme^{-\rmi\frac{t}{n}H}\Bigr)^n
\approx
U^n \rme^{-\rmi tH_Z}
\vphantom{\int_0^t}
$
\\
Theorem~\ref{GTF}. Generalized Trotter Formula
\\
\medskip
\qquad
$\displaystyle
\Bigl(
\rme^{-\rmi \frac{t}{n}\kappa H_0}\rme^{-\rmi \frac{t}{n}H}
\Bigr)^n
\approx
\rme^{-\rmi t(\kappa H_0+H_1)}
\vphantom{\int_0^t}
$,\quad for $\kappa\le\frac{\theta}{\eta t}n$
\\
\hline
\hline
\end{tabular}
\end{table}
\begin{itemize}
\item Theorem~\ref{thm:OneMoreThm} is a specific instance of the  averaging method for quantum systems with explicit nonperturbative bound.
\item Theorem~\ref{thm:periodic} provides an averaged generator with eternal validity for periodic Hamiltonians.
\item Section~\ref{sec:RWA} applies our bound to the important case of the RWA for a qubit, develops nonperturbative bounds, and settles the question of the long-time (in)validity of the RWA\@. It also provides several generalizations beyond the qubit case.
\item Theorem~\ref{thm:AdiabaticTheorem} provides a concise and explicit bound on the adiabatic theorem.
\item Theorem~\ref{thm:genadiabatic} generalizes the adiabatic theorem to a larger class of Hamiltonians.
\item Remark~\ref{rmk:timedepstrong} generalizes the strong-coupling limit to time-dependent Hamiltonians.

\item Theorem~\ref{thm:ATrott} generalizes Trotter's product formula to ergodic sequences.
\item An interesting application of Theorem~\ref{thm:ATrott} is the random Trotter formula in Corollary~\ref{cor:RandomTrott}.
\item Bound~(\ref{eq:DD}) on the dynamical decoupling is exponentially better in $t$ than a previous bound~\cite{ref:unity2}.
\item Bound~(\ref{eqn:BoundUnitaryKick}) in 
Corollary~\ref{cor:UnitaryKick} improves a bound on the unitary kicks obtained in Ref.~\cite{ref:ProductFormulaTaylor} from $O(\frac{1}{n}\log n)$ to $O(\frac{1}{n})$. 
\item Theorem~\ref{GTF} solves a numerical conjecture of Ref.~\cite{ref:alphaTrot} by another generalization of Trotter's formula.
\end{itemize}

The rest of the paper is structured as follows. 
In Sec.~\ref{sec:IntegrationByPartLemma}, we derive the universal bound, which will be used throughout the paper in various applications. 
The main result is split in Lemma~\ref{lemma:divergence}, Theorem~\ref{thm:OneMoreThm}, and Corollary~\ref{cor:OneMoreThmbar}\@.
Lemma~\ref{lemma:divergence} gives a bound on the distance between unitary evolutions generated by general time-dependent Hamiltonians.
It is applied to time-dependent Hamiltonians depending on a common control parameter, and a bound on the convergence error in the distance in the limit of the parameter is provided in Theorem~\ref{thm:OneMoreThm}\@.
A particular case in which one of the two time-dependent Hamiltonians is independent of the control parameter is stated in Corollary~\ref{cor:OneMoreThmbar}\@.
In Sec.~\ref{sec:periodic}, the application of the universal bound to periodic Hamiltonians is discussed, along with an eternal approximation of the evolution valid for any long times. 
In Sec.~\ref{sec:RWA}, we apply our results to the RWAs, first considering the standard case with a qubit, and then discussing generalizations beyond the qubit and when there are multiple timescales brought into play. 
In Sec.~\ref{sec:AdiabaticTh}, we use the universal bound for adiabatic evolutions, including the adiabatic theorem, strong-coupling limit, and their generalizations. 
In Sec.~\ref{sec:productFormulas}, we focus on the applications involving product formulas, including the standard Trotter product formula, random Trotter formula, dynamical decoupling, bang-bang control, etc.
Finally, in Sec.~\ref{sec:Conclusions}, we conclude the paper with some comments.
The main results, Lemma~\ref{lemma:divergence}, Theorem~\ref{thm:OneMoreThm}, and Corollary~\ref{cor:OneMoreThmbar}, are all valid for just locally integrable generators (which are not necessarily continuous).
This is explicitly demonstrated in Appendix~\ref{app:locInt}\@.
A few basic lemmas relevant to the ergodic means are collected in Appendix~\ref{app:Ergodic}\@.
Moreover, a converse of~(\ref{eq:actioncontrol}) is shown in Appendix~\ref{app:canongauge}, and in Appendix~\ref{app:isospectral} we provide an interesting general lower bound for the long-term divergence between evolutions generated by non-isospectral Hamiltonians.

\section{The Bound to Rule Them All}
\label{sec:IntegrationByPartLemma}
Let us start by stating our main instrument, which will be used throughout the paper to estimate a variety of limit quantum evolutions.

In the following, we will always consider operators on a separable (\emph{not} necessarily finite-dimensional) Hilbert space $\mathcal{H}$. In particular, our main object of investigations will be unitary propagators $U(t)$ on $\mathcal{H}$ generated by locally integrable (\emph{not} necessarily continuous) time-dependent bounded Hamiltonians $H(t)$.

Given an operator-valued function $A(t)$ on $\mathbb{R}$, consider its $L^\infty$ and $L^1$ norms
\begin{equation}
\label{eq:Linf-L1def}
\|A\|_{\infty,t}=\sup_{s\in[0,t]}\|A(s)\|,\qquad
\|A\|_{1,t}=\int_0^t\d s\,\|A(s)\|,
\end{equation}
for all $t\ge 0$, where we use the spectral norm $\|A(s)\|$ for operators.
The time-dependent operator $A(t)$ is bounded if $\|A(t)\|$ is finite for every $t$, and it is locally integrable if $\|A\|_{1,t}$ is finite for all $t$.
[Notice that, if $A(t)$ is not continuous in $t$, $\|A\|_{\infty,t}$ might diverge even if $A(t)$ is bounded for all $t$.]

Now we state and prove the universal bound that will be used throughout the paper. 
\begin{lemma}[Integration-by-part lemma]
\label{lemma:divergence}
Consider two families of locally integrable time-dependent Hamiltonians $t\in \mathbb{R} \mapsto H_j(t)$, with $H_j(t)$ bounded and self-adjoint for all $t\in \mathbb{R}$ and $j=1,2$.
Let $t\mapsto U_j(t)$ be the unitary propagator generated by $H_j(t)$,
\begin{equation}
U_j(t) =\T\exp\!\left(
-\rmi\int_0^t\d s\, H_j(s)\right),
\end{equation}
for $j=1,2$, and define the integral action
\begin{equation}
\label{eq:S21def}
S_{21}(t) = \int_0^t \d s\,[H_2(s)-H_1(s)].
\end{equation}
Then, one has for $t\ge 0$
\begin{equation}
\label{eq:divergence}
U_2(t)-U_1(t) = -\rmi S_{21}(t) U_2(t) - \int_0^t \d s\, U_1(t) U_1(s)^\dag[H_1(s) S_{21}(s) - S_{21}(s) H_2(s)]U_2(s),
\end{equation}
so that
\begin{equation}
	\|U_2(t)-U_1(t)\| 
\le \| S_{21}(t)\|  + \int_0^t \d s\,\|S_{21}(s)\|(\|H_1(s)\|+\|H_2(s)\|), 
\label{eqn:IntermediateBound}
\end{equation}
and the following bound holds
\begin{equation}
\label{eq:divergencebound}
\| U_2 -U_1 \|_{\infty,t}  \le \| S_{21}\|_{\infty,t}( 1 +  \|H_1\|_{1,t}  + \|H_2\|_{1,t}).
\end{equation}
\end{lemma}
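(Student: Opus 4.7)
The plan is to derive the identity~(\ref{eq:divergence}) via Duhamel's formula followed by an integration by parts, and then extract the norm bounds by the triangle inequality. Concretely, I would first note that, since both propagators start from the identity,
\begin{equation}
U_2(t)-U_1(t)=U_1(t)\int_0^t\d s\,\frac{\d}{\d s}\!\left[U_1(s)^\dag U_2(s)\right].
\end{equation}
Using $\frac{\d}{\d s}U_j(s)=-\rmi H_j(s)U_j(s)$ (and the adjoint for $U_1^\dag$), the integrand is $-\rmi U_1(s)^\dag[H_2(s)-H_1(s)]U_2(s)$, which gives the standard Duhamel-type expression
\begin{equation}
U_2(t)-U_1(t)=-\rmi\int_0^t\d s\,U_1(t)U_1(s)^\dag[H_2(s)-H_1(s)]U_2(s).
\end{equation}

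Next, I would exploit the key observation $H_2(s)-H_1(s)=\frac{\d}{\d s}S_{21}(s)$ and integrate by parts in $s$. Since $S_{21}(0)=0$ and $U_j(0)=I$, the only surviving boundary contribution is at $s=t$ and yields $-\rmi S_{21}(t)U_2(t)$; the interior term picks up derivatives of $U_1(s)^\dag$ and $U_2(s)$ and produces, after canceling the factors of $\rmi$, precisely the integral of $U_1(t)U_1(s)^\dag[H_1(s)S_{21}(s)-S_{21}(s)H_2(s)]U_2(s)$. Putting the pieces together reproduces~(\ref{eq:divergence}).

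The two norm bounds then follow from routine estimates. Taking the spectral norm in~(\ref{eq:divergence}), using unitarity of $U_1(t)$, $U_1(s)^\dag$, $U_2(s)$ (which makes them isometries for the operator norm), and the submultiplicative bound $\|H_1(s)S_{21}(s)-S_{21}(s)H_2(s)\|\le\|S_{21}(s)\|(\|H_1(s)\|+\|H_2(s)\|)$ delivers~(\ref{eqn:IntermediateBound}). Replacing $\|S_{21}(s)\|$ under the integral with its supremum $\|S_{21}\|_{\infty,t}$ and recognizing the remaining integrals as $\|H_j\|_{1,t}$ then yields the final bound~(\ref{eq:divergencebound}).

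The only subtle point is that the Hamiltonians are only assumed locally integrable, not continuous, so the pointwise differentiations of $U_j(s)$ used above hold only almost everywhere, and the integration by parts must be justified in that generality. I would handle this either by invoking the absolute continuity of $s\mapsto U_j(s)$ (which is standard for strongly continuous propagators with locally integrable generators) together with $S_{21}$ being absolutely continuous with derivative $H_2-H_1$ almost everywhere, or by approximating the $H_j$ by continuous Hamiltonians, proving the identity in the smooth case, and passing to the limit using the local-$L^1$ continuity of both sides; this is precisely the strategy deferred to Appendix~\ref{app:locInt}. Apart from this technical step, the proof is a short combination of Duhamel and integration by parts.
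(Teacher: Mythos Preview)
Your proposal is correct and follows essentially the same approach as the paper: Duhamel's formula for $U_2(t)-U_1(t)$, integration by parts using $H_2(s)-H_1(s)=\dot S_{21}(s)$ to obtain~(\ref{eq:divergence}), then the triangle inequality and unitarity to get~(\ref{eqn:IntermediateBound}) and~(\ref{eq:divergencebound}). The paper likewise flags the locally integrable case as needing the absolute-continuity argument deferred to Appendix~\ref{app:locInt}, just as you do.
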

\begin{proof}
If $H_j(t)$ is continuous, then $\dot{U}_j(t) = - \rmi H_j(t) U_j(t)$, for $j=1,2$, and the proof is just an integration by parts.
Write
\begin{align}
U_2(t)-U_1(t) 
&=\int_0^t \d s\, U_1(t) \frac{\d}{\d s}[U_1(s)^\dag U_2(s)]
\nonumber\\
&=-\rmi \int_0^t \d s\, U_1(t)  U_1(s)^\dag[H_2(s)-H_1(s)]U_2(s) .
\label{eq:U2U1-H2H1}
\end{align}
Since $H_2(s)-H_1(s) = \dot{S}_{21}(s)$, by integrating by parts one has
\begin{align}
U_2(t)-U_1(t) &= -\rmi \int_0^t \d s\, U_1(t)  U_1(s)^\dag    \dot{S}_{21}(s) U_2(s) 
\nonumber\\
&= -\rmi S_{21}(t) U_2(t) + \rmi \int_0^t \d s\, U_1(t)[\dot{U}_1(s)^\dag S_{21}(s) U_2(s) + U_1(s)^\dag    S_{21}(s) \dot{U}_2(s)]
\nonumber
\displaybreak[0]
\\
&= -\rmi S_{21}(t) U_2(t) - \int_0^t \d s\, U_1(t) U_1(s)^\dag[H_1(s) S_{21}(s) - S_{21}(s) H_2(s)]U_2(s).
\end{align}
This is~\eqref{eq:divergence}.
Since integration by parts holds for absolutely continuous functions, the result remains valid for locally integrable $H_j(t)$.
See Theorem~\ref{thm:5} for an explicit derivation.
By taking the norm of the divergence~\eqref{eq:divergence}, one gets~\eqref{eqn:IntermediateBound}, that implies
\begin{equation}
\|U_2(t)-U_1(t)\| 
\le \| S_{21}(t)\|  + \| S_{21}\|_{\infty,t}(\|H_1\|_{1,t}  + \|H_2\|_{1,t}),	
\end{equation}
and the bound~\eqref{eq:divergencebound} follows.
\end{proof}

\begin{remark}[Constant Hamiltonians]
\label{rem:Constant}
If $H_j(t)= H_j$ ($j=1,2$) are independent of time, then $S_{21}(t) = t (H_2-H_1)$, and
\begin{equation}
\|S_{21}\|_{\infty,t} = t \|H_2-H_1\|, \qquad  \|H_j\|_{1,t} = t \|H_j\|,
\end{equation}
so that the bound~(\ref{eq:divergencebound}) yields
\begin{equation}
\| U_2 -U_1 \|_{\infty,t}  \le t \|H_2-H_1\|( 1 +  t \|H_1\| + t \|H_2\|).
\end{equation}
Notice, however, that in such a situation with constant Hamiltonians one can get a sharper bound directly from~\eqref{eq:U2U1-H2H1} as
\begin{equation}
\|U_2-U_1\|_{\infty,t}  \le \| H_2 - H_1 \|_{1,t} =  t \|H_2-H_1\|.
\label{eq:simplebound}
\end{equation}
As we will momentarily see, the bound~\eqref{eq:divergencebound} is very useful for time-dependent Hamiltonians when they tend to compensate on average so that even if the difference between the two Hamiltonians $H_2(t)-H_1(t)$ might be large their action is small,
\begin{equation}
S_{21}(t) = \int_0^t \d s\, H_2(s) - \int_0^t \d s\, H_1(s) \approx 0.
\end{equation}
\end{remark}

\begin{remark}[$p$-norm]
\label{rmk:pnorm}
Since
\begin{equation}
\|H_j\|_{1,t} \le t \|H_j\|_{\infty,t},
\end{equation}
for $j=1,2$, the bound~(\ref{eq:divergencebound}) can be further bounded by
\begin{equation}
\|U_2-U_1\|_{\infty,t}  \le \|S_{21}\|_{\infty,t}( 1 +  t \|H_1\|_{\infty,t} + t \|H_2\|_{\infty,t}).
\end{equation}
More generally, let
\begin{equation}
\|H_j\|_{p,t}  = \biggl( \int_0^t \d s\, \|H_j(s)\|^p \biggr)^{1/p}<+\infty,
\end{equation} 
with $1\le p\le \infty$. Then, by the H{\"o}lder inequality
\begin{equation}
\|H_j\|_{1,t} \le t^{1-\frac{1}{p}} \|H_j\|_{p,t},
\end{equation}
one gets
\begin{equation}
\| U_2 -U_1 \|_{\infty,t}  \le \|S_{21}\|_{\infty,t}\left( 1 +  t^{1-\frac{1}{p}} \|H_1\|_{p,t} + t^{1-\frac{1}{p}} \|H_2\|_{p,t}\right).
\end{equation}
\end{remark}

According to Remark~\ref{rem:Constant}, it would be useful for constant Hamiltonians to consider their actions in a rotating frame with respect to an arbitrary reference Hamiltonian $H_0(t)$, so that the relevant action in the rotating frame may be small as a consequence of an averaging mechanism. 
\begin{prop}[Rotating frame]
	\label{lem:divergenceR}
	Let $U_0(t)$ be the unitary propagator generated by some reference Hamiltonian $H_0(t)$ and define 
	\begin{equation}
		\label{eq:S21defR}
		\hat{S}_{21}(t) = \int_0^t \d s\, U_0(s)^\dag[H_2(s)-H_1(s)]U_0(s).
	\end{equation}
	Then,
	\begin{equation}
		\label{eq:divergenceboundR}
		\| U_2 -U_1 \|_{\infty,t}  \le \| \hat{S}_{21}\|_{\infty,t}(1 +  \|H_1- H_0\|_{1,t}  + \|H_2-H_0\|_{1,t}).
	\end{equation}
\end{prop}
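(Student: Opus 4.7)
The plan is to reduce the claim to the already-proved Lemma~\ref{lemma:divergence} by passing to the interaction picture with respect to the reference Hamiltonian $H_0(t)$. Concretely, I would introduce the rotated propagators $\hat U_j(t) = U_0(t)^\dag U_j(t)$ for $j=1,2$. A standard calculation (differentiate, noting that $U_0^\dag$ cancels the $H_0$ generator from $U_j$) shows that each $\hat U_j$ is a unitary propagator generated by
\begin{equation}
\hat H_j(t) = U_0(t)^\dag[H_j(t)-H_0(t)]U_0(t),
\end{equation}
which is bounded, self-adjoint, and locally integrable because $U_0(t)$ is unitary and $H_j(t)-H_0(t)$ has these properties.

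Now I would apply Lemma~\ref{lemma:divergence} to the pair $(\hat H_1,\hat H_2)$ and the associated propagators $(\hat U_1,\hat U_2)$. The integral action for this pair is exactly
\begin{equation}
\int_0^t \d s\,[\hat H_2(s)-\hat H_1(s)] = \int_0^t \d s\, U_0(s)^\dag[H_2(s)-H_1(s)]U_0(s) = \hat S_{21}(t),
\end{equation}
matching the definition in~\eqref{eq:S21defR}. Since the spectral norm is unitarily invariant, $\|\hat H_j(s)\|=\|H_j(s)-H_0(s)\|$, and therefore $\|\hat H_j\|_{1,t}=\|H_j-H_0\|_{1,t}$. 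Lemma~\ref{lemma:divergence} then yields
\begin{equation}
\|\hat U_2-\hat U_1\|_{\infty,t}\le \|\hat S_{21}\|_{\infty,t}\bigl(1+\|H_1-H_0\|_{1,t}+\|H_2-H_0\|_{1,t}\bigr).
\end{equation}

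To finish, I would observe that $U_2(t)-U_1(t) = U_0(t)[\hat U_2(t)-\hat U_1(t)]$, so unitary invariance of the spectral norm gives $\|U_2-U_1\|_{\infty,t}=\|\hat U_2-\hat U_1\|_{\infty,t}$, and~\eqref{eq:divergenceboundR} follows immediately. There is no real obstacle here beyond checking that the interaction-picture Hamiltonians remain in the locally integrable, bounded, self-adjoint class required by Lemma~\ref{lemma:divergence}; this is automatic from the unitarity of $U_0$ and the assumptions on $H_0, H_1, H_2$. The only conceptual point worth emphasizing is that the reference $H_0(t)$ is entirely arbitrary, which is what makes the bound useful: one can always try to choose $H_0$ so that the rotated difference $U_0^\dag(H_2-H_1)U_0$ oscillates rapidly and produces a small $\hat S_{21}$.
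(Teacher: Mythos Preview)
Your proposal is correct and follows essentially the same approach as the paper: define $\hat U_j(t)=U_0(t)^\dag U_j(t)$ with generators $\hat H_j(t)=U_0(t)^\dag[H_j(t)-H_0(t)]U_0(t)$, apply Lemma~\ref{lemma:divergence} to this pair, and use unitary invariance of the norm to identify $\|\hat U_2-\hat U_1\|_{\infty,t}=\|U_2-U_1\|_{\infty,t}$ and $\|\hat H_j\|_{1,t}=\|H_j-H_0\|_{1,t}$.
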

\begin{proof}
	This follows simply by applying Lemma~\ref{lemma:divergence} to 
	$\hat{U}_j(t) = U_0(t)^\dag U_j(t)$ ($j=1,2$), whose generators are given by
	\begin{equation}\label{eq:refHamiltonian}
		\hat{H}_{j}(t)  = U_0(t)^\dag[H_j(t) - H_0(t)]U_0(t), 
	\end{equation}
	and by noticing that
	\begin{equation}
		\|\hat{U}_2(t) - \hat{U}_1(t)\| =  \| U_0(t)^\dag[U_2(t) - U_1(t)]\| = \| U_2(t) - U_1(t) \|,
	\end{equation}
	and $\|\hat{H}_{j}(t) \| = \| H_j(t) - H_0(t) \|$.
\end{proof}
\begin{remark}[Gauge freedom]
\label{rmk:gauge}
The freedom in the choice of the gauge Hamiltonian $H_0(t)$ makes this bound very useful in many applications, beyond the above-mentioned case of constant Hamiltonians.
In particular, by choosing as a reference Hamiltonian the average
\begin{equation}
H_0(t) = \frac{1}{2}[H_1(t) + H_2(t)],
\end{equation}
one gets
\begin{equation}
H_2(t)- H_0(t)  = -[H_1(t) - H_0(t)]  =  \frac{1}{2}[H_2(t) - H_1(t)],
\end{equation}
whence the bound~(\ref{eq:divergenceboundR}) is reduced to
\begin{equation}
\label{eq:divergenceboundS}
\| U_2 -U_1 \|_{\infty,t}  \le \| \hat{S}_{21}\|_{\infty,t}(1 +  \|H_2 - H_1 \|_{1,t}).
\end{equation}
This is a more symmetric version of Lemma~\ref{lemma:divergence}, which involves only the difference $H_2(t)-H_1(t)$ of the Hamiltonians and the action $\hat{S}_{21}(t)$ in the average frame as defined in~\eqref{eq:S21defR}.
For further discussion about rotating frames, and a proof of a converse of the inequality~\eqref{eq:divergenceboundR}, see Appendix~\ref{app:canongauge}\@.
\end{remark}

\begin{remark}[Unbounded Hamiltonians]
\label{rmk:UnboundedH}
Notice that the bound~\eqref{eq:divergenceboundR} can be easily extended to \emph{unbounded} Hamiltonians $H_1(t)$ and $H_2(t)$ whose difference
$H_2(t)-H_1(t)$ is bounded, by suitably choosing the (unbounded) reference Hamiltonian~$H_0(t)$ so that $H_j(t)-H_0(t)$, with $j=1,2$, are both bounded. Typical examples are the controlled Schr\"odinger operators in Sec.~4.1 of Ref.~\cite{ref:RWA-Chambrion2012}.
\end{remark}

An immediate consequence of Lemma~\ref{lemma:divergence} in the case of time-dependent Hamiltonians which depend on a control parameter $\kappa$ is the following convergence result, which we state as a Theorem due to its importance in applications.
\begin{thm}	
\label{thm:OneMoreThm}
Consider two families of integrable time-dependent Hamiltonians $t\in[0,T]\mapsto H_\kappa (t)$ and $t\in[0,T] \mapsto \overline{H}_\kappa (t)$, with $H_\kappa (t)$ and  $\overline{H}_\kappa (t)$ self-adjoint and bounded for all $t\in[0,T]$ and $\kappa\in A$, with $A\subset \mathbb{R}$ a set with a limit point $\kappa_0$ (possibly $\kappa_0=\infty$). 
Let $t\mapsto U_\kappa (t)$ and  $t\mapsto\overline{U}_\kappa(t)$ be the unitary propagators generated by $H_\kappa (t)$ and
$\overline{H}_\kappa (t)$, respectively,
\begin{equation}
U_\kappa (t) =\T\exp\!\left(
-\rmi\int_0^t\d s\, H_\kappa (s)\right),
\quad 
\overline{U}_\kappa (t) =\T\exp\!\left(
-\rmi\int_0^t\d s\, \overline{H}_\kappa (s)\right).
\end{equation}
Assume that 
\begin{equation}
\|S_\kappa\|_{\infty,T}(1+\| H_\kappa \|_{1,T} + \| \overline{H}_\kappa \|_{1,T})  \to 0 , \quad \text{as}\quad \kappa\to \kappa_0,
\label{eqn:ErgodicMeannew}
\end{equation}
where
\begin{equation}
S_\kappa (t)= \int_0^t \d s\,[H_\kappa (s) - \overline{H}_\kappa (s)].
\end{equation}
Then, one gets
\begin{equation}
U_\kappa (t) - \overline{U}_\kappa (t) \to 0, \quad\text{as}\quad \kappa \to \kappa_0,
\label{eq:adiabaticnew1}
\end{equation}
uniformly for $t\in[0,T]$.
The convergence error is bounded by
\begin{equation}
\label{eq:divergenceboundbarKK}
\| U_\kappa (t)  - \overline{U}_\kappa (t)\|   \le \| S_{\kappa}\|_{\infty,T}( 1 +  \| H_\kappa \|_{1,T}  + \|\overline{H}_\kappa \|_{1,T}).
\end{equation}
\end{thm}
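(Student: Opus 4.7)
The plan is to derive Theorem~\ref{thm:OneMoreThm} as an essentially immediate packaging of Lemma~\ref{lemma:divergence}\@. The lemma is a pointwise-in-$\kappa$ statement that already provides the desired bound, so all that remains is to check that the bound behaves uniformly in $t\in[0,T]$ and then invoke the hypothesis~\eqref{eqn:ErgodicMeannew} to pass to the limit.

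First, I would fix an arbitrary $\kappa\in A$ and apply Lemma~\ref{lemma:divergence} with $H_1(t)\leftarrow H_\kappa(t)$ and $H_2(t)\leftarrow \overline{H}_\kappa(t)$. The hypotheses of the lemma are satisfied: both Hamiltonian families are locally integrable (even integrable on $[0,T]$), bounded, and self-adjoint. The integral action in~\eqref{eq:S21def} becomes exactly the operator $S_\kappa(t)$ defined in the statement (up to an overall sign that is harmless inside a norm). Then bound~\eqref{eq:divergencebound} reads
\begin{equation}
\|U_\kappa -\overline{U}_\kappa\|_{\infty,t} \le \|S_\kappa\|_{\infty,t}\bigl(1 + \|H_\kappa\|_{1,t} + \|\overline{H}_\kappa\|_{1,t}\bigr),
\end{equation}
valid for every $t\in[0,T]$.

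Next, I would observe that each of the three norms on the right-hand side is monotonically nondecreasing in $t$, since $\|\cdot\|_{\infty,t}$ is a supremum over $[0,t]$ and $\|\cdot\|_{1,t}$ is an integral of a nonnegative function on $[0,t]$. Consequently, for every $t\in[0,T]$,
\begin{equation}
\|U_\kappa(t)-\overline{U}_\kappa(t)\| \le \|U_\kappa -\overline{U}_\kappa\|_{\infty,T} \le \|S_\kappa\|_{\infty,T}\bigl(1 + \|H_\kappa\|_{1,T} + \|\overline{H}_\kappa\|_{1,T}\bigr),
\end{equation}
which is exactly the bound~\eqref{eq:divergenceboundbarKK} advertised in the theorem. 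Note that the right-hand side depends only on $\kappa$, so the bound holds uniformly in $t\in[0,T]$.

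Finally, I would pass to the limit $\kappa\to\kappa_0$. By the standing assumption~\eqref{eqn:ErgodicMeannew}, the right-hand side vanishes in that limit, which yields
\begin{equation}
\sup_{t\in[0,T]}\|U_\kappa(t)-\overline{U}_\kappa(t)\| \to 0 \quad\text{as}\quad \kappa\to\kappa_0,
\end{equation}
i.e., the uniform convergence~\eqref{eq:adiabaticnew1}. There is no genuine obstacle here: the only subtle point is the need to use $T$-norms on the right-hand side rather than $t$-norms, which is resolved by the monotonicity remark. In short, the theorem is a clean rewriting of Lemma~\ref{lemma:divergence} adapted to families depending on a limiting parameter, and the whole proof is a few lines long.
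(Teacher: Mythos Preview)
Your proposal is correct and matches the paper's proof almost verbatim: the paper also fixes $\kappa$, applies Lemma~\ref{lemma:divergence} (with the assignment $H_1=\overline{H}_\kappa$, $H_2=H_\kappa$, which differs from yours only by an irrelevant sign in $S_\kappa$), and then lets $\kappa\to\kappa_0$. Your explicit monotonicity step to pass from $t$-norms to $T$-norms is left implicit in the paper's one-line argument, but otherwise the two proofs are identical.
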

\begin{proof}
From  Lemma~\ref{lemma:divergence} with $H_1(t)= \overline{H}_\kappa (t)$ and $H_2(t)= H_\kappa (t)$, one gets
\begin{equation}
\| U_\kappa (t)  - \overline{U}_\kappa (t)\|   
\le\| U_\kappa -\overline{U}_\kappa \|_{\infty,t}  \le \| S_\kappa \|_{\infty,t}( 1  + \|H_\kappa \|_{1,t}  +  \|\overline{H}_\kappa \|_{1,t})  \to 0,
\end{equation}
as $\kappa\to \kappa_0$.
\end{proof}

If the second Hamiltonian $\overline{H}_\kappa (t)$ in Theorem~\ref{thm:OneMoreThm} is independent of $\kappa$, we get the following result.
\begin{corol}	
\label{cor:OneMoreThmbar}
Consider an integrable time-dependent Hamiltonian $t\in[0,T] \mapsto H_\kappa (t)$, with $H_\kappa (t)$ self-adjoint and bounded for all $t\in[0,T]$ and $\kappa\in A$, with $A\subset\mathbb{R}$ a set with a limit point $\kappa_0$. 
Let $t\mapsto U_\kappa (t)$ be the unitary propagator generated by $H_\kappa (t)$,
\begin{equation}
U_\kappa (t) =\T\exp\!\left(
-\rmi\int_0^t\d s\, H_\kappa (s)\right).
\end{equation}
Assume that there exists an integrable time-dependent Hamiltonian $t \mapsto \overline{H}(t)$, with $\overline{H}(t)$ self-adjoint and bounded, such that
\begin{equation}
\|S_\kappa \|_{\infty,T}(1+\|H_\kappa \|_{1,T})  \to 0 , \quad \text{as}\quad \kappa\to \kappa_0,
\label{eqn:ErgodicMeannewbar}
\end{equation}
where
\begin{equation}
S_\kappa (t)= \int_0^t \d s\, [H_\kappa (s) - \overline{H}(s)].
\end{equation}
Then, one gets
\begin{equation}
U_\kappa (t) \to \overline{U}(t), \quad\text{as}\quad \kappa \to \kappa_0,
\label{eq:adiabaticnew1bar}
\end{equation}
uniformly for $t\in[0,T]$, where $t\mapsto \overline{U}(t)$ is the unitary propagator generated by $\overline{H}(t)$,
\begin{equation}
\overline{U}(t) = \T\exp\!\left(-\rmi\int_0^t\d s\, \overline{H}(s)\right).
\label{eq:adiabaticnew2bar}
\end{equation}
The convergence error is bounded by
\begin{equation}
\label{eq:divergenceboundbar}
\| U_\kappa (t)  - \overline{U}(t) \|   \le \| S_{\kappa}\|_{\infty,T} ( 1 +  \|H_\kappa \|_{1,T}  + \|\overline{H}\|_{1,T}).
\end{equation}
\end{corol}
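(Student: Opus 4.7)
The plan is to obtain Corollary~\ref{cor:OneMoreThmbar} as an immediate specialization of Theorem~\ref{thm:OneMoreThm}, taking the second family to be the constant (in $\kappa$) family $\overline{H}_\kappa(t):=\overline{H}(t)$. With this identification the propagator $\overline{U}_\kappa(t)$ defined in Theorem~\ref{thm:OneMoreThm} coincides with the $\kappa$-independent $\overline{U}(t)$ of~\eqref{eq:adiabaticnew2bar}, so the conclusion $U_\kappa(t)-\overline{U}_\kappa(t)\to 0$ of Theorem~\ref{thm:OneMoreThm} becomes exactly~\eqref{eq:adiabaticnew1bar}, and the quantitative bound~\eqref{eq:divergenceboundbarKK} reads exactly as~\eqref{eq:divergenceboundbar}.

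The only point that requires a line of justification is that the hypothesis~\eqref{eqn:ErgodicMeannewbar} of the corollary implies the hypothesis~\eqref{eqn:ErgodicMeannew} of Theorem~\ref{thm:OneMoreThm} in this specialization. The latter demands
\begin{equation}
\|S_\kappa\|_{\infty,T}\bigl(1+\|H_\kappa\|_{1,T}+\|\overline{H}\|_{1,T}\bigr)\to 0,\quad \kappa\to\kappa_0,
\end{equation}
while the former only postulates $\|S_\kappa\|_{\infty,T}(1+\|H_\kappa\|_{1,T})\to 0$. Since $\overline{H}$ is a fixed integrable, bounded self-adjoint Hamiltonian on $[0,T]$, the number $C:=\|\overline{H}\|_{1,T}$ is a finite constant independent of $\kappa$; hence
\begin{equation}
\|S_\kappa\|_{\infty,T}\bigl(1+\|H_\kappa\|_{1,T}+C\bigr)\le (1+C)\,\|S_\kappa\|_{\infty,T}\bigl(1+\|H_\kappa\|_{1,T}\bigr)\to 0,
\end{equation}
so~\eqref{eqn:ErgodicMeannew} holds and Theorem~\ref{thm:OneMoreThm} applies.

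I do not foresee any real obstacle: the corollary is purely a corollary by specialization. The single subtle point, worth stating explicitly in the write-up, is that the weaker-looking assumption~\eqref{eqn:ErgodicMeannewbar} is in fact equivalent to the apparently stronger condition used in the parent theorem, precisely because $\overline{H}$ is fixed and hence contributes only a constant to the bracket. The uniformity in $t\in[0,T]$ is inherited from Theorem~\ref{thm:OneMoreThm} since the right-hand side of the bound does not depend on $t$ once the $L^\infty$ and $L^1$ norms are taken over $[0,T]$.
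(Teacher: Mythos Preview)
Your proposal is correct and matches the paper's approach exactly: the paper presents the corollary as the specialization of Theorem~\ref{thm:OneMoreThm} to $\overline{H}_\kappa(t)\equiv\overline{H}(t)$ and does not even spell out a separate proof. Your added line showing that the weaker hypothesis~\eqref{eqn:ErgodicMeannewbar} implies~\eqref{eqn:ErgodicMeannew} via the fixed constant $\|\overline{H}\|_{1,T}$ is a useful detail that the paper leaves implicit.
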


\begin{remark}[$\|H_\kappa \|_{1,T}$ can be unbounded in $\kappa$]
\label{rmk:6}
Assumption~(\ref{eqn:ErgodicMeannewbar}) implies the condition
\begin{equation}
S_\kappa (t)\to0, \quad \text{as}\quad \kappa\to \kappa_0,
\label{eqn:ErgodicMeannew2}
\end{equation}
uniformly for $t\in[0,T]$.
Moreover, condition~\eqref{eqn:ErgodicMeannewbar} follows from assumption~\eqref{eqn:ErgodicMeannew2} if 
$\|H_\kappa \|_{1,T}$ is  bounded in $\kappa$, i.e.,
\begin{equation}
\sup_\kappa \|H_\kappa \|_{1,T} <+\infty,
\end{equation}
which happens, for instance, if $\sup_\kappa \|H_\kappa \|_{\infty,T} <+\infty$.
However, this condition is not necessary for Corollary~\ref{cor:OneMoreThmbar} to hold.
For example, 
\begin{equation}
H_\kappa (t)= \kappa^{1/3} \sin(\kappa t) H
\end{equation} 
has an unbounded norm 
\begin{equation}
\|H_\kappa \|_{1,t}\ge 2 \kappa^{1/3} \|H\| (t/\pi -1/\kappa)\to+\infty,\quad\text{as}\quad \kappa\to +\infty,
\end{equation} 
but the action $S_\kappa (t)$ with $\overline{H}(t)=0$ vanishes as $S_\kappa (t) = 2 \kappa^{-2/3} \sin^2(\kappa t/2) H \to 0$, and 
\begin{equation}
\|S_\kappa \|_{\infty,t} \|H_\kappa \|_{1,t} \le t \|S_\kappa \|_{\infty,t} \|H_\kappa \|_{\infty,t}  \le 2 t \kappa^{-1/3} \|H\|^2 \to 0,
\end{equation} 
as $\kappa\rightarrow+\infty$.
Therefore, Corollary~\ref{cor:OneMoreThmbar} applies.
\end{remark}

\begin{remark}[Rotating frame]
\label{rmk:effectiveH}
Most of the applications we are going to consider make use of the strategy introduced in Proposition~\ref{lem:divergenceR}, that consists in identifying a suitable rotating frame $U_{0,\kappa}(t)$ such that the averaged action~\eqref{eq:S21defR} vanishes as $\kappa\to \kappa_0$.
\begin{enumerate}
\item 
Let $H_{0,\kappa}(t)$ be the reference Hamiltonian generating $U_{0,\kappa}(t)$ and apply Theorem~\ref{thm:OneMoreThm} to the evolutions $\hat{U}_\kappa (t) = U_{0,\kappa}(t)^\dag U_\kappa (t)$ and $\hat{\overline{U}}_\kappa (t) = U_{0,\kappa}(t)^\dag \overline{U}_\kappa (t)$. 
If 
\begin{equation}
\|\hat{S}_{\kappa}\|_{\infty,T} 
(1+ \| H_\kappa - H_{0,\kappa} \|_{1,T} + \| \overline{H}_\kappa -  H_{0,\kappa} \|_{1,T})  \to 0 , \quad\text{as}\quad \kappa\to \kappa_0,
\end{equation}
where
\begin{equation}
\hat{S}_\kappa (t)= \int_0^t \d s\,  U_{0,\kappa}(s)^\dag [H_\kappa (s) - \overline{H}_\kappa (s)] U_{0,\kappa}(s),
\end{equation}
one gets, using~(\ref{eq:divergenceboundR}), that
\begin{equation}
\label{eq:divergenceboundbarKKe}
\|U_\kappa (t) - \overline{U}_\kappa (t)\| \le \|\hat{S}_{\kappa}\|_{\infty,T} 
(1+ \| H_\kappa - H_{0,\kappa} \|_{1,T} + \| \overline{H}_\kappa -  H_{0,\kappa} \|_{1,T})  \to 0 , 
\end{equation}
as $\kappa\to \kappa_0$,  for all $t\in[0,T]$.

\item
Moreover, let 
\begin{equation}
\label{eq:tildeHKAe}
\hat{H}_\kappa (t) = U_{0,\kappa}(t)^\dag[H_\kappa (t)   - H_{0,\kappa}(t)]U_{0,\kappa}(t)
\end{equation}
be the generator of $\hat{U}_\kappa (t) = U_{0,\kappa}(t)^\dag U_\kappa (t)$. Then, Corollary~\ref{cor:OneMoreThmbar} applied to $\hat{U}_\kappa (t)$
implies that as $\kappa\to \kappa_0$ the Hamiltonian 
\begin{equation}
H_\kappa (t) = H_{0,\kappa}(t) + U_{0,\kappa}(t)\hat{H}_\kappa (t)U_{0,\kappa}(t)^\dag
\end{equation} 
can be replaced by the effective Hamiltonian 
\begin{equation}
H^{\mathrm{eff}}_\kappa (t) = H_{0,\kappa}(t) + U_{0,\kappa}(t) \overline{H}(t) U_{0,\kappa}(t)^\dag,
\label{eq:effH}
\end{equation}
provided that
\begin{equation}
\| \hat{S}_{\kappa}\|_{\infty,T}( 1 +  \|\hat{H}_\kappa \|_{1,T})\to0,\quad\text{as}\quad \kappa\to \kappa_0,
\label{eqn:CondReference2}
\end{equation}
where
\begin{equation}
\hat{S}_\kappa (t)= \int_0^t \d s\, [\hat{H}_\kappa (s)  - \overline{H}(s)].
\label{eq:actionKbar}
\end{equation}
Indeed, the divergence between the evolutions $U_\kappa (t)$ and  $U^{\mathrm{eff}}_\kappa (t) = U_{0,\kappa}(t)\overline{U}(t)$ generated by $H_\kappa(t)$ and $H_\kappa^{\mathrm{eff}}(t)$, respectively, vanishes uniformly in time, with an error bounded by
\begin{equation}
\label{eq:divergenceboundbarAe}
\| U_\kappa (t)  - U^{\mathrm{eff}}_\kappa (t)\|   \le \| \hat{S}_{\kappa}\|_{\infty,T}( 1 +  \|\hat{H}_\kappa \|_{1,T}  + \|\overline{H}\|_{1,T})\to0,
\end{equation}
as $\kappa\to \kappa_0$,  for all $t\in[0,T]$.
\end{enumerate}
\end{remark}

\subsection{Periodic Hamiltonians}
\label{sec:periodic}
In most applications, the evolution $U_\kappa(t)$ is generated by a periodic Hamiltonian and the parameter $\kappa$ is related to its frequency, that can be very large. In such situations, we can get an even better control on the error between the true evolution $U_\kappa(t)$ and a unitary group $\exp(-\rmi t \overline{H}_\kappa)$ generated by a time-independent Hamiltonian $\overline{H}_\kappa$ [which is not necessarily the average of $H_\kappa(t)$]. In particular, we will show that by a suitable choice of $\overline{H}_\kappa$ the error can be uniformly bounded \emph{for arbitrarily large times}.

Consider a 1-periodic Hamiltonian
\begin{equation}
	H(t) = H(t+1),
\end{equation}
and the dynamics generated by $H_\kappa (t) = H(\kappa t)$, where $\kappa$ is large.
We can approximate the evolution $U_\kappa (t)$ generated by $H_\kappa(t)$ with the unitary group $\overline{U}(t)= \rme^{-\rmi t \overline{H}}$ generated by the (constant) average Hamiltonian
\begin{equation}
	\overline{H} 
	= \int_0^1 \d s\, H(s).
	\label{eq:avgHam}
\end{equation}  
Indeed, the relevant integral action reads
\begin{align}
	S_\kappa (t) 
	&=\int_0^t \d s\, [H_\kappa (s)- \overline{H}]
	\nonumber\\
	&=\frac{1}{\kappa} \int_0^{\kappa t} \d s\, H(s)- t \overline{H} 
	\nonumber\\
	&=\frac{\lfloor \kappa t \rfloor}{\kappa} \overline{H} +
	\frac{1}{\kappa} \int_0^{\{\kappa t\}} \d s\, H(s) - t \overline{H}
\nonumber
\displaybreak[0]
\\
	&=- \frac{\{ \kappa t \}}{\kappa} \overline{H} +
	\frac{1}{\kappa} \int_0^{\{\kappa t\}} \d s\, H(s), 
\end{align}
where $\lfloor x\rfloor\in\mathbb{N}$ is the integer part of $x$, i.e.\ the greatest integer less than or equal to $x$, and $\{x\}=x-\lfloor x\rfloor\in [0,1)$  is its fractional part. 
Therefore, we get $\|S_\kappa (t)\|  \leq 2 \|H\|_{1,1} /\kappa$ uniformly in $t$.
Since
\begin{equation}
	\| H_\kappa \|_{1,T} = \frac{1}{\kappa} \int_0^{\kappa T} \d s\, \|H(s)\| 
	\leq \left(T + \frac{1}{\kappa}\right)  \|H\|_{1,1}
\end{equation}
and $\|\overline{H}\|_{1,T}\leq T \|H\|_{1,1}$, from Corollary~\ref{cor:OneMoreThmbar} we get
\begin{equation}
	\| U_\kappa - \overline{U} \|_{\infty,T} 
	\leq  \frac{2}{\kappa} \|H\|_{1,1} \left[
	1 + \left(2T+\frac{1}{\kappa}\right)\|H\|_{1,1}
	\right]
	\to 0,
	\label{eq:boundper}
\end{equation}
as $\kappa\to+\infty$, so that $U_\kappa (t) \to \overline{U}(t)$ uniformly for $t\in[0,T]$.

Notice, however, that in general the distance~\eqref{eq:boundper} between the approximation $\overline{U}(t)$ and the true evolution $U_\kappa (t)$ does not remain small uniformly in time and it can increase with~$T$. 
In the following theorem, we will show how one can exploit the periodicity in order to give a uniform bound which is \emph{eternal} in time.
To this end, recall the Floquet theorem~\cite{Teschl}. Since $H_\kappa (t)=H(\kappa t)$ is ($1/\kappa$)-periodic, one has
\begin{equation}
	U_\kappa (t)
	=U_\kappa \!\left(\frac{\{\kappa t\}}{\kappa}\right) 
	\left[U_\kappa \!\left(\frac{1}{\kappa}\right)\right]^{\lfloor  \kappa t\rfloor}.
	\label{eq:floq}
\end{equation}
The strategy is to take, in place of the average Hamiltonian $\overline{H}$ in~\eqref{eq:avgHam}, a constant ($\kappa$-dependent) Hamiltonian $\overline{H}_\kappa = \overline{H} + O(1/\kappa)$ which generates \emph{exactly the same} evolution as $U_\kappa(t)$ at all periods $t=n/\kappa$ with $n\in\mathbb{N}$, namely, $U_\kappa (n/\kappa) =[U_\kappa (1/\kappa)]^n= \rme^{-\rmi\frac{n}{\kappa} \overline{H}_\kappa}$. 
This allows us to reduce the distance between the approximation $\overline{U}_\kappa(t)=\rme^{-\rmi t\overline{H}_\kappa}$ and the true evolution $U_\kappa(t)$ for any $t$ to the distance during only one period $0\le t<1/\kappa$.
\begin{thm}[Eternal approximation of periodic Hamiltonian]
	\label{thm:periodic}
	Consider an integrable time-dependent Hamiltonian $t\in\mathbb{R} \mapsto H(t)$, with $H(t)$ self-adjoint, bounded, and 1-periodic, 
	\begin{equation}
		H(t+1) = H(t),
	\end{equation} 
	for all $t$.
	For all $\kappa >0$, let $t\mapsto U_\kappa (t)$ be the unitary propagator generated by $H_\kappa (t)= H(\kappa t)$,
	\begin{equation}
		U_\kappa (t) =\T\exp\!\left(
		-\rmi\int_0^t\d s\, H(\kappa s)\right),
	\end{equation}
	and let $\overline{H}_\kappa$ be a bounded self-adjoint operator such that
	\begin{equation}
		\exp \!\left(
		-\frac{\rmi}{\kappa} \overline{H}_\kappa \right) =
		U_\kappa \!\left(\frac{1}{\kappa}\right) .
		\label{eq:coincide0}
	\end{equation}
	Then, one gets
	\begin{equation}
		\sup_{t\in\mathbb{R}}\| U_\kappa (t) - \rme^{-\rmi t \overline{H}_\kappa}\| = O\!\left(\frac{1}{\kappa} \right), \quad\text{as}\quad \kappa \to +\infty,
	\end{equation}
	the convergence error being bounded by
	\begin{equation}
		\label{eq:divergenceboundper}
		\| U_\kappa (t)  - \rme^{-\rmi t \overline{H}_\kappa} \|   \leq \frac{\theta}{\kappa}\|H\|_{1,1}\left( 1 +  \frac{\theta}{\kappa}\|H\|_{1,1}\right),
	\end{equation}
	with $\theta\leq 1+2\log2$.
\end{thm}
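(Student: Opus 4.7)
My plan is to combine the Floquet periodicity, which reduces the eternal estimate to a bound on a single period, with the universal inequality of Corollary~\ref{cor:OneMoreThmbar} applied on $[0,1/\kappa]$, and finally to bound $\|\overline{H}_\kappa\|$ via a principal-logarithm argument.

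\emph{Floquet reduction.} Since $H_\kappa(s)=H(\kappa s)$ has period $1/\kappa$, the Floquet identity~\eqref{eq:floq} gives $U_\kappa(t)=U_\kappa(\{\kappa t\}/\kappa)\,[U_\kappa(1/\kappa)]^{\lfloor\kappa t\rfloor}$. The defining property $\rme^{-\rmi\overline{H}_\kappa/\kappa}=U_\kappa(1/\kappa)$, combined with the fact that $\overline{H}_\kappa$ commutes with itself, produces the same decomposition for the approximation, $\rme^{-\rmi t\overline{H}_\kappa}=\rme^{-\rmi\{\kappa t\}\overline{H}_\kappa/\kappa}[U_\kappa(1/\kappa)]^{\lfloor\kappa t\rfloor}$. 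Subtracting and exploiting unitarity of $[U_\kappa(1/\kappa)]^{\lfloor\kappa t\rfloor}$ collapses the eternal supremum,
\begin{equation*}
\sup_{t\in\mathbb{R}}\|U_\kappa(t)-\rme^{-\rmi t\overline{H}_\kappa}\|=\sup_{\tau\in[0,1/\kappa]}\|U_\kappa(\tau)-\rme^{-\rmi\tau\overline{H}_\kappa}\|.
\end{equation*}

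\emph{One-period bound.} I then apply Corollary~\ref{cor:OneMoreThmbar} on $[0,1/\kappa]$ with $\overline{H}(s)=\overline{H}_\kappa$. A change of variable yields $\|H_\kappa\|_{1,1/\kappa}=\|H\|_{1,1}/\kappa$ and $\|\overline{H}_\kappa\|_{1,1/\kappa}=\|\overline{H}_\kappa\|/\kappa$, while the triangle inequality applied to $S_\kappa(\tau)=\int_0^\tau H(\kappa s)\,\rmd s-\tau\overline{H}_\kappa$ gives $\|S_\kappa\|_{\infty,1/\kappa}\le(\|H\|_{1,1}+\|\overline{H}_\kappa\|)/\kappa$ on the short interval. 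Substituting into~\eqref{eq:divergenceboundbar} reduces the task to producing a bound of the form $\|\overline{H}_\kappa\|\le c\,\|H\|_{1,1}$ with a universal constant $c$, after which the stated estimate follows with $\theta=1+c$.

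\emph{Control of $\|\overline{H}_\kappa\|$.} The Dyson estimate yields $\|U_\kappa(1/\kappa)-I\|\le\|H\|_{1,1}/\kappa$. For $\kappa\ge 2\|H\|_{1,1}$ this is at most $1/2$, so the principal logarithmic series $\log U=\sum_{n\ge 1}(-1)^{n-1}(U-I)^n/n$ converges on $U_\kappa(1/\kappa)$; choosing $\overline{H}_\kappa:=\rmi\kappa\log U_\kappa(1/\kappa)$ one finds $\|\overline{H}_\kappa\|\le-\kappa\log(1-\|H\|_{1,1}/\kappa)\le 2\log 2\cdot\|H\|_{1,1}$, where the last inequality uses that $x\mapsto-\log(1-x)/x$ is increasing on $[0,1/2]$ with terminal value $2\log 2$. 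This delivers $c=2\log 2$ and hence $\theta=1+2\log 2$. For $\kappa<2\|H\|_{1,1}$, the right-hand side of~\eqref{eq:divergenceboundper} already exceeds the trivial bound $\|U_\kappa-\rme^{-\rmi\cdot\overline{H}_\kappa}\|_{\infty,\mathbb{R}}\le 2$ and nothing needs to be proved.

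The main obstacle is Step~3: pinning down the sharp universal constant. A naive spectral-calculus route through $\arcsin$ gives a larger numerical value and is not tight enough to conclude; one must instead exploit the monotonicity of the elementary scalar function $-\log(1-x)/x$ on $[0,1/2]$. That monotonicity is the single quantitative fact which closes the proof with the clean value $\theta\le 1+2\log 2$ advertised in the statement.
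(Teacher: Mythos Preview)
Your proof is correct and follows essentially the same route as the paper: Floquet reduction to a single period, application of the universal bound (Lemma~\ref{lemma:divergence}/Corollary~\ref{cor:OneMoreThmbar}) on that period with the same action estimate $\|S_\kappa\|_{\infty,1/\kappa}\le(\|H\|_{1,1}+\|\overline{H}_\kappa\|)/\kappa$, control of $\|\overline{H}_\kappa\|$ via the principal-logarithm series and the monotonicity of $-\log(1-x)/x$ to get $2\log 2\cdot\|H\|_{1,1}$, and the trivial extension to $\kappa<2\|H\|_{1,1}$. The only cosmetic difference is that you work directly on $[0,1/\kappa]$ whereas the paper rescales to $[0,1]$ before applying Lemma~\ref{lemma:divergence}; the arithmetic and constants are identical.
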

\begin{proof}
	The identity~\eqref{eq:coincide0} forces the approximate evolution $\overline{U}_\kappa (t) = \rme^{-\rmi t \overline{H}_\kappa}$ to coincide with $U_\kappa (t)$ after one period at $t=1/\kappa$, 
	namely $U_\kappa (1/\kappa) = \overline{U}_\kappa (1/\kappa)$,
	and thus, by the Floquet theorem~(\ref{eq:floq}), at all periods, $U_\kappa(n/\kappa)=\overline{U}_\kappa(n/\kappa)$ with $n\in \mathbb{N}$.
	The approximate Hamiltonian $\overline{H}_\kappa$ is given by a logarithm of the true evolution over one period, $\overline{H}_\kappa =\rmi \kappa \log U_\kappa (1/\kappa)$, and it is a perturbation of $\overline{H}$.
	To see this, note that
	\begin{equation}
		U_\kappa (t) =\T\exp\!\left(
		-\rmi\int_0^t\d s\, H_\kappa (s)\right) 
		= \T\exp\!\left(
		-\frac{\rmi}{\kappa}\int_0^{\kappa t}\d s\, H(s)\right), 
		\label{eq:rescale}
	\end{equation}
	and hence,
	\begin{equation}
		U_\kappa \!\left(\frac{1}{\kappa}\right)
		= \T\exp\!\left(
		-\frac{\rmi}{\kappa}\int_0^1\d s\, H(s)\right)
		= \exp \!\left(
		-\frac{\rmi}{\kappa} \overline{H}_\kappa \right)
		= \overline{U}_\kappa \!\left(\frac{1}{\kappa}\right).
		\label{eq:coincide}
	\end{equation}
	Then, for $\kappa>\|H\|_{1,1}$,
	\begin{align}
		\overline{H}_\kappa 
		& = \rmi \kappa \log U_\kappa \!\left(\frac{1}{\kappa}\right) 
		\nonumber\\
		&=	\int_0^1 \d s\, H(s) -\frac{\rmi}{2\kappa}\int_0^1 \d s \int_0^s \d u\, [H(s), H(u)] + O\!\left(\frac{1}{\kappa^2} \right) 
		\nonumber\\
		& = \overline{H} + O\!\left(\frac{1}{\kappa} \right) .
		\label{eq:Magnus}
	\end{align}
	This is nothing but the Magnus expansion, which in fact can be proved to be valid for $\kappa>\|H\|_{1,1}/\pi$. See Theorem~9 of Ref.~\cite{ref:Blanes-Mugnus}. 
	This fact, combined with the bound~\eqref{eq:boundper}, ensures that the evolution $\overline{U}_\kappa(t)$ generated by $\overline{H}_\kappa$ also approximates the true evolution $U_\kappa(t)$ at $O(1/\kappa)$ at least for $t\in[0,T]$.
	Moreover, from~\eqref{eq:floq} and~\eqref{eq:coincide}, we have
	\begin{equation}
		U_\kappa (t) - \overline{U}_\kappa (t) = \left[ U_\kappa \!\left(\frac{\{\kappa t\}}{\kappa}\right) - \overline{U}_\kappa \!\left(\frac{\{\kappa t\}}{\kappa}\right) \right]\overline{U}_\kappa \!\left(\frac{\lfloor \kappa t \rfloor}{\kappa}\right),
	\end{equation}
	whence
	\begin{equation}
		\| U_\kappa (t) - \overline{U}_\kappa (t) \| = \left\| U_\kappa \!\left(\frac{\{\kappa t\}}{\kappa}\right) - \overline{U}_\kappa \!\left(\frac{\{\kappa t\}}{\kappa}\right) \right\| ,
	\end{equation}
	so that the distance is traced back to one period 
	$0\leq \{\kappa t\}/\kappa < 1/\kappa$ and is thus bounded uniformly in time $t$.

	Let us prove the explicit error bound~\eqref{eq:divergenceboundper}. From~\eqref{eq:rescale}, we get
	\begin{equation}
		U_\kappa \!\left(\frac{\{\kappa t\}}{\kappa}\right) = \T\exp\!\left(
		-\frac{\rmi}{\kappa}\int_0^{\{\kappa t\}}\d s\, H(s)\right) ,
	\end{equation}
	while 
	\begin{equation}
		\overline{U}_\kappa \!\left(\frac{\{\kappa t\}}{\kappa}\right) = 
		\exp \!\left( -\rmi \frac{\{\kappa t\}}{\kappa} \overline{H}_\kappa \right) .
	\end{equation}
	By using Lemma~\ref{lemma:divergence} for the evolutions $U_1(t)$ and $U_2(t)$ generated by $H_1 =  \overline{H}_\kappa /\kappa$ and $H_2(t)= H(t)/\kappa$, respectively, one has
	\begin{align}
		\left\| U_\kappa \!\left(\frac{\{\kappa t\}}{\kappa}\right) - \overline{U}_\kappa \!\left(\frac{\{\kappa t\}}{\kappa}\right) \right\|
		&=\| U_2(\{\kappa t\}) -U_1(\{\kappa t\}) \|
		\nonumber\\
		&\le\| U_2 -U_1 \|_{\infty,1}
		\vphantom{\frac{1}{\kappa}}
		\nonumber\\
		&\leq \| S_{21}\|_{\infty,1}\left( 1 +  \frac{1}{\kappa}\|H\|_{1,1}  + \frac{1}{\kappa}\|\overline{H}_\kappa \|\right),
	\end{align}
	where the action is given by
	\begin{equation}
		S_{21}(t)=\frac{1}{\kappa} \int_0^t \d s\, [H(s) - \overline{H}_\kappa],
	\end{equation}
	and is bounded by 
	\begin{equation}
		\| S_{21}\|_{\infty,1} \leq \frac{1}{\kappa}	(\|H\|_{1,1} + \|\overline{H}_\kappa \|).
	\end{equation}
	Moreover,
	\begin{equation}
		\|\overline{H}_\kappa \| = \kappa \left\| \log\!\left(1 + \left[U_\kappa \!\left(\frac{1}{\kappa}\right)  -1 \right]\right)\right\|\leq 
		- \kappa \log\!\left(1 - \left\| U_\kappa \!\left(\frac{1}{\kappa}\right)  -1 \right\| \right),
	\end{equation}
	for $\| U_\kappa (1/\kappa) -1 \|<1$.
	Now, notice that
	\begin{equation}
		\left\| U_\kappa \!\left(\frac{1}{\kappa}\right) -1 \right\| 
		= \frac{1}{\kappa}\,\biggl\|\int_0^1 \d s \, H(s) U_\kappa \!\left(\frac{s}{\kappa}\right) \biggr\|\leq \frac{1}{\kappa} \|H\|_{1,1},
	\end{equation}
	whence
	\begin{equation}
		\|\overline{H}_\kappa \| \leq 
		- \kappa \log\!\left(1 - \frac{1}{\kappa}\|H\|_{1,1} \right) \leq (2\log 2)\|H\|_{1,1},
	\end{equation}
	for $\kappa \geq 2 \|H\|_{1,1}$.
	Therefore,
	\begin{equation}
		\| U_\kappa (t) - \overline{U}_\kappa (t) \| 
		= \left\| U_\kappa \!\left(\frac{\{\kappa t\}}{\kappa}\right) - \overline{U}_\kappa \!\left(\frac{\{\kappa t\}}{\kappa}\right) \right\| 
		\leq \frac{\theta}{\kappa}\|H\|_{1,1} \left( 1 +  \frac{\theta}{\kappa}\|H\|_{1,1}\right),
	\end{equation}
	with $\theta=1+2\log2$.
	Since the right-hand side is larger than 2 for 
	$\kappa < 2 \|H\|_{1,1}$, this bound can be trivially extended to all $\kappa >0$.
\end{proof}

\begin{remark}[Large-time approximation]
\label{rmk:largetime}
In general, one does not have an explicit analytic expression for the solution $\overline{H}_\kappa$ of equation~\eqref{eq:coincide0}, and for sufficiently large $\kappa$ one relies on the series expansion of the logarithm as in~\eqref{eq:Magnus}. Notice, however, that for practical purposes one might be content to consider an asymptotic expansion of $\overline{H}_\kappa$ up to a given order $O(1/\kappa^\ell)$ for some $\ell\geq 2$, and 
approximate the evolution with the unitary group generated by a Hamiltonian
\begin{equation}
	\widetilde{H}_\kappa = \overline{H}_\kappa + O\!\left(\frac{1}{\kappa^\ell}\right).
\end{equation}
In such a case, in general the approximation $\exp(-\rmi t \widetilde{H}_\kappa)$ is no longer eternal (but see Remark~\ref{rmk:isospectral} for a notable exception). However,
it can be proved to work up to a time $T=O(\kappa^{\ell -1})$, which becomes larger and larger for higher and higher $\ell$. Indeed,
from~\eqref{eq:simplebound} we get
\begin{equation}
	\| \rme^{-\rmi t \widetilde{H}_\kappa}  - \rme^{-\rmi t \overline{H}_\kappa} \| \leq  t \| \widetilde{H}_\kappa  - \overline{H}_\kappa \| = O\!\left(\frac{t}{\kappa^\ell}\right),
\end{equation}
and by the triangle inequality and~\eqref{eq:divergenceboundper},
\begin{equation}
	\sup_{t\in[0,T]}\| U_\kappa (t)  - \rme^{-\rmi t \widetilde{H}_\kappa} \|   \leq \frac{\theta}{\kappa}\|H\|_{1,1}\left( 1 +  \frac{\theta}{\kappa}\|H\|_{1,1}\right) + O\!\left(\frac{T}{\kappa^\ell}\right).
	\end{equation}
Therefore,
\begin{equation}
	\sup_{t\in[0,T]} \| U_\kappa (t)  - \rme^{-\rmi t \widetilde{H}_\kappa} \|   = O\!\left(\frac{1}{\kappa}\right),
\quad \text{for} \quad T=O(\kappa^{\ell -1}), 
\label{eq:largetime}
	\end{equation}
which should be compared with the approximation~\eqref{eq:boundper},
\begin{equation}
	\sup_{t\in[0,T]} \| U_\kappa (t)  - \rme^{-\rmi t \overline{H}} \|   = O\!\left(\frac{1}{\kappa}\right),
\quad \text{for} \quad T=O(1),
	\end{equation}	
given by the average Hamiltonian $\overline{H}$.
\end{remark}

\begin{remark}[Isospectral perturbations]
	\label{rmk:isospectral}
	Notice that 
	any \emph{isospectral} ($1/\kappa$)-perturbation of  $\overline{H}_\kappa$ in Theorem~\ref{thm:periodic}
	does the job. Indeed, let 
	\begin{equation}
		\label{eq:overoverH}
		\overline{\overline H}_\kappa = W_\kappa \overline{H}_\kappa W_\kappa^\dag,
	\end{equation}
	with $W_\kappa = 1 + O(1/\kappa)$ and unitary. Then,
	\begin{equation}
		\bigl\| \rme^{-\rmi t \overline{\overline H}_\kappa} - \rme^{-\rmi t \overline{H}_\kappa}\bigr\|	= \bigl\|\bigl[ W_\kappa, \rme^{-\rmi t \overline{H}_\kappa}\bigr] W_\kappa^\dag \bigr\|	  = \bigl\| \bigl[ W_\kappa -1 , \rme^{-\rmi t \overline{H}_\kappa} \bigr] \bigr\| \leq 2 \| W_\kappa -1 \| = O\!\left(\frac{1}{\kappa}\right),
	\end{equation}
	uniformly in time $t$.
	Therefore, by the triangle inequality
	\begin{equation}
		\sup_{t\in\mathbb{R}}\,\bigl\| U_\kappa (t) - \rme^{-\rmi t \overline{\overline{H}}_\kappa}\bigr\| = O\!\left(\frac{1}{\kappa} \right), \quad\text{as}\quad \kappa \to +\infty.
	\end{equation}
	Moreover, if $\overline{H}_\kappa$  has a pure point spectrum, the isospectral Hamiltonians in~\eqref{eq:overoverH} are the only generators of eternal approximations as shown in Proposition~\ref{prop:isospectral} in Appendix~\ref{app:isospectral}\@.
	The proof is based on the idea that evolutions which are not isospectral eventually diverge. See Ref.~\cite[Eq.~(36) of the Supplemental Material]{ref:ZenoKAM}.
\end{remark}

\begin{remark}[General period]
	A periodic Hamiltonian with a general period $\tau>0$, 
	\begin{equation}
		H(t) = H(t+\tau),
	\end{equation}
	is reduced to a 1-periodic Hamiltonian $\widetilde{H}(s)=\widetilde{H}(s+1)$ by scaling time as $t\to s=t/\tau$ and
	\begin{equation}
		H(t) \to \widetilde{H}(s)=\tau H(\tau s).
	\end{equation}
	The evolution operator $U_\kappa(t)$ generated by $H(\kappa t)$ can then be obtained by the evolution operator $\widetilde{U}_\kappa(t)$ generated by $\widetilde{H}(\kappa s)$ as $U_\kappa(t)=\widetilde{U}_\kappa(t/\tau)$. The average Hamiltonian~\eqref{eq:avgHam} is replaced by
	\begin{equation}
		\overline{H} 
		=\frac{1}{\tau}\int_0^\tau \d s\, H(s),
		\label{eq:avgHamtau}
	\end{equation} 
	and the distance between the group $\overline{U}(t)=\rme^{-\rmi t \overline{H}}$ and the true evolution $U_\kappa(t)$ is bounded by
	\begin{align}
		\| U_\kappa - \overline{U} \|_{\infty,T}
		&= \bigl\|\widetilde{U}_\kappa -\overline{\widetilde{U}}\bigr\|_{\infty,T/\tau}
		\nonumber
		\\
		&\leq\frac{2}{\kappa } \|\widetilde{H}\|_{1,1} \left[
		1 + \left(\frac{2 T}{\tau} +\frac{1}{\kappa}\right)\|\widetilde{H}\|_{1,1} \right]  
		\nonumber
		\displaybreak[0]
		\\
		&=\frac{2}{\kappa } \|H\|_{1,\tau} \left[
		1 + \left(\frac{2 T}{\tau} +\frac{1}{\kappa}\right)\|H\|_{1,\tau} \right].
		\label{eq:boundpertau}
	\end{align}
	The eternal bound~\eqref{eq:divergenceboundper} is translated into
	\begin{equation}
		\label{eq:divergenceboundpertau}
		\| U_\kappa (t)  - \rme^{-\rmi t \overline{H}_\kappa} \|   \leq \frac{\theta}{\kappa}\|H\|_{1,\tau} \left( 1 +  \frac{\theta}{\kappa}\|H\|_{1,\tau}\right),
	\end{equation}
	where $\overline{H}_\kappa $ is a solution of the equation
	\begin{equation}
		\exp \!\left(
		-\rmi\frac{\tau}{\kappa} \overline{H}_\kappa \right) =
		U_\kappa \!\left(\frac{\tau}{\kappa}\right) ,
		\label{eq:coincidetau}
	\end{equation}
	that is
	\begin{equation}
		\overline{H}_\kappa 
		= \frac{\rmi\kappa}{\tau} \log U_\kappa \!\left(\frac{\tau}{\kappa}\right)
		=\overline{H}
		-\frac{\rmi}{2\kappa\tau}\int_0^{\tau} \d s \int_0^s \d u\, [H(s), H(u)] + O\!\left(\frac{1}{\kappa^2} \right),
		\label{eqn:MagnusHtau}
	\end{equation}
	for $\kappa>\|H\|_{1,\tau}/\pi$.
\end{remark}

\section{Rotating-Wave Approximation}
\label{sec:RWA}
\subsection{Qubit Example}\label{qubitexample}
Consider the Hamiltonian of a two-level atom with a natural frequency $\omega_0$ (almost) resonantly coupled with an oscillating laser field,
\begin{equation}
H(t)=\frac{1}{2}\omega_0Z+g\cos(\omega t)X,
\label{eq:origH}
\end{equation}
where $\omega>0$ is the frequency of the drive, and we denote the detuning by $\delta=\omega_0-\omega$. 
The time-dependent part of this Hamiltonian, $g\cos(\omega t)X$, can be decomposed into two components, one containing the ``co-rotating'' terms 
\begin{equation}
	H_{\mathrm{co}}(t)=\frac{1}{2}g( \rme^{-\rmi \omega t} \sigma_+ + \rme^{\rmi \omega t} \sigma_-),
\end{equation}
and the other containing the ``counter-rotating'' terms
\begin{equation}
	H_{\mathrm{counter}}(t)=\frac{1}{2}g( \rme^{\rmi \omega t} \sigma_+ + \rme^{-\rmi \omega t} \sigma_- ),
\end{equation}
where $\sigma_\pm=(X \pm \mathrm{i}Y)/2$ are the ladder operators. 
This terminology can be understood by looking at these two components in the reference frame rotating with the constant part of the Hamiltonian, $U_{0,\omega_0}(t)=\rme^{- \frac{\rmi}{2}\omega_0 tZ}$:
\begin{align}
	U_{0,\omega_0}(t)^\dagger H_{\mathrm{co}}(t)U_{0,\omega_0}(t)&=\frac{1}{2}g( \rme^{\rmi \delta t} \sigma_+ + \rme^{-\rmi \delta t} \sigma_-),\\
	U_{0,\omega_0}(t)^\dagger H_{\mathrm{counter}}(t)U_{0,\omega_0}(t)&=\frac{1}{2}g( \rme^{\rmi(\omega_0+\omega) t} \sigma_+ + \rme^{-\rmi (\omega_0+\omega) t} \sigma_-).
\end{align}
These show that, if the detuning $\delta$ is small, i.e.~$\omega_0\approx\omega$, the co-rotating terms rotate more or less with the reference frame, while the counter-rotating terms rotate in the opposite direction. 
The RWA drops the counter-rotating terms and approximates the evolution by
\begin{equation}\label{eq:RWAapprox}
	U(t)\approx U_\mathrm{RWA}(t),
\end{equation}
where $U_\mathrm{RWA}(t)$ is generated by
\begin{equation}
	\label{eq:RWAHam}
	H_\mathrm{RWA}(t)=\frac{1}{2}\omega_0Z +\frac{1}{2}g( \rme^{-\rmi \omega t} \sigma_+ + \rme^{\rmi \omega t} \sigma_-),
\end{equation}
with the co-rotating terms.
The evolution $U_\mathrm{RWA}(t)$ generated by $H_\mathrm{RWA}(t)$ is indeed easily dealt with compared to the original $U(t)$ generated by $H(t)$, since in the reference frame rotating with
\begin{equation}
H_{0,\omega}= \frac{1}{2}\omega Z, \qquad  U_{0,\omega}(t) = \rme^{-\frac{\rmi}{2}\omega tZ},
\label{eqn:RWA_Reference}
\end{equation} 
the Hamiltonian $H_\mathrm{RWA}(t)$ in~(\ref{eq:RWAHam}) in the RWA is transformed to a constant one as 
\begin{equation}
\hat{H}_\mathrm{RWA}
=U_{0,\omega}(t)^\dag[H_\mathrm{RWA}-H_{0,\omega}]U_{0,\omega}(t)=\frac{1}{2}\delta Z + \frac{1}{2}gX.
\label{eqn:HRWA_Reference}
\end{equation}
Then, the evolution it generates is simply given by $\hat{U}_\mathrm{RWA}(t)=U_{0,\omega}(t)^\dag U_\mathrm{RWA}(t)=\rme^{-\rmi t\hat{H}_\mathrm{RWA}}$, and the evolution in the original frame is explicitly obtained as $U_\mathrm{RWA}(t)=\rme^{-\frac{\rmi}{2}\omega tZ}\rme^{-\rmi t\hat{H}_\mathrm{RWA}}$.

The validity of the approximation~\eqref{eq:RWAapprox} can be verified by using Corollary~\ref{cor:OneMoreThmbar} with $\kappa_0 = +\infty$.
It also provides an upper bound on the error associated with the approximation.
More specifically, we take the strategy described in Remark~\ref{rmk:effectiveH}. 
We work in the reference frame rotating with the driving frequency $\omega$, specified by~(\ref{eqn:RWA_Reference}).
The generator of $\hat{U}_\omega(t)=U_{0,\omega}(t)^\dag U(t)$ in the rotating frame reads 
\begin{align}
\hat{H}_\omega(t)
&=U_{0,\omega}(t)^\dag[H(t)   - H_{0,\omega}]U_{0,\omega}(t)
\nonumber\\
&=\frac{1}{2}(\omega_0-\omega)Z + g \cos(\omega t) \rme^{\frac{\rmi}{2}\omega tZ} X \rme^{-\frac{\rmi}{2}\omega tZ}
\nonumber\\ 
&=\frac{1}{2}\delta Z + g \cos(\omega t)[\cos(\omega t) X-\sin(\omega t)Y]
\nonumber\\ 
&=\frac{1}{2}\delta Z +
\frac{1}{2}g[(1+\cos2\omega t)X  - \sin (2 \omega t) Y].
\end{align}
Observe here that in the limit of large $\omega$ the integral action of $\hat{H}_\omega(t)$ converges to
\begin{equation}
\int_0^t \d s\, \hat{H}_\omega(s)
\to t \overline{H},\quad\text{as}\quad\omega\to+\infty,
\end{equation}
where
\begin{equation}
\overline{H}
= \lim_{\tau\to+\infty} \frac{1}{\tau}\int_0^\tau \d s\, \hat{H}_\omega(s)
= \frac{\omega}{\pi} \int_0^{\pi/\omega} \d s\, \hat{H}_\omega(s)
= \frac{1}{2}\delta Z + \frac{1}{2}gX 
\label{eqn:AveHamRWAqubit}
\end{equation}
is the time-average of $\hat{H}_\omega(t)$, and is actually $\overline{H}=\hat{H}_\mathrm{RWA}$ introduced in~(\ref{eqn:HRWA_Reference}). 
Therefore, the action defined by
\begin{equation}
\hat{S}_\omega (t)
=\int_0^t\d s\,[\hat{H}_\omega (s)- \overline{H}]
=\frac{g}{4\omega}[\sin (2 \omega t) X  -(1-\cos2\omega t)Y]
\end{equation}
vanishes as $\omega\to+\infty$. 
The condition~(\ref{eqn:CondReference2}) is satisfied, and by Corollary~\ref{cor:OneMoreThmbar} the evolution $\hat{U}_\omega(t)$ is well approximated by the evolution generated by the average Hamiltonian $\overline{H}$ in~(\ref{eqn:AveHamRWAqubit}).
In the original frame,
\begin{equation}
U(t)-U_\mathrm{RWA}(t)\to0,\quad\text{as}\quad\omega\to+\infty,
\end{equation}
where $U_\mathrm{RWA}(t)$ is generated by an effective Hamiltonian
\begin{align}
H_\mathrm{RWA}(t)
&=H_{0,\omega} + U_{0,\omega}(t)\overline{H}U_{0,\omega}(t)^\dag
\nonumber\\
&=\frac{1}{2}(\omega+\delta) Z +\frac{1}{2}g\rme^{-\frac{\rmi}{2}\omega tZ} X \rme^{\frac{\rmi}{2}\omega tZ} 
\nonumber\\
&=\frac{1}{2}\omega_0Z+\frac{1}{2}g[\cos(\omega t) X + \sin(\omega t) Y],
\end{align}
which is exactly the RWA Hamiltonian~\eqref{eq:RWAHam}.

A bound on the error of the approximation follows from~\eqref{eq:divergenceboundbarAe}.
By noting
\begin{equation}
\|\hat{S}_\omega\|_{\infty,t}\le\frac{|g|}{2\omega},\qquad
\|\hat{H}_\omega\|_{1,t}\le\frac{1}{2}\sqrt{\delta^2+4g^2}\,t,\qquad
\|\overline{H}\|_{1,t}=\frac{1}{2}\sqrt{\delta^2+g^2}\,t,
\end{equation}
the error within the time range $t\in[0,T]$ is estimated by
\begin{equation}
\|U(t)-U_\mathrm{RWA}(t)\|
\le\|\hat{S}_{\omega}\|_{\infty,T}(1+\|\hat{H}_\omega\|_{1,T}+\|\overline{H}\|_{1,T})
\le\frac{|g|}{2\omega}\left(
1+\sqrt{\delta^2+4 g^2}\,T
\right).
\label{eq:errRWA}
\end{equation}
See Fig.~\ref{fig:RWA} for a pictorial representation of the RWA\@.
\begin{figure}
\centering 
\includegraphics[width=.8\textwidth]{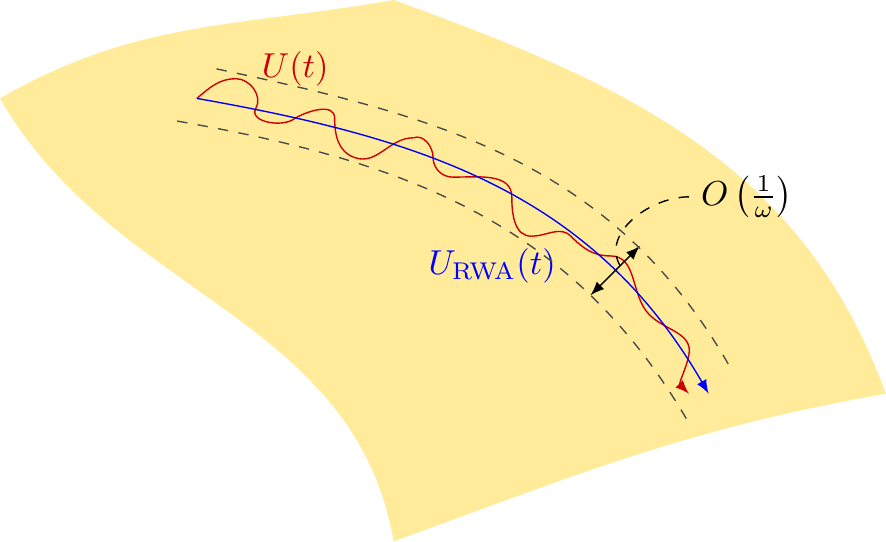}
\caption{Cartoon of the RWA\@. While the true evolution $U(t)$ oscillates quickly with a high frequency $\omega$, the averaged one $U_\mathrm{RWA}(t)$ is simpler and only $O(1/\omega)$ away from the exact dynamics. For the simple but ubiquitous single-qubit case, an explicit nonperturbative bound is provided by~(\ref{eq:errRWA}).} 
\label{fig:RWA}
\end{figure}

\begin{remark}
	The error bound in~\eqref{eq:errRWA} increases with time $T$. In fact, the RWA approximation in general is not uniform, and the error accumulates over time. If one seeks for an eternal approximation, one has to look for the $\omega$-dependent perturbation $\overline{H}_\omega$ of $\overline{H}$, introduced in Theorem~\ref{thm:periodic}, which exactly reproduces the evolution generated by $\hat{H}_\omega(t)$ at every period $t=n\pi/\omega$ with $n\in\mathbb{N}$. According to~\eqref{eqn:MagnusHtau}, we get 
	\begin{align}
		\overline{H}_\omega 
		&=\frac{\rmi\omega}{\pi}\log \hat{U}_\omega\!\left(\frac{\pi}{\omega} \right) 
		\nonumber\\
		&=
		\overline{H}		
		-\frac{\rmi\pi}{2\omega}\int_0^1 \d s\int_0^s \d u\, [\hat{H}_\omega(\pi s/\omega), \hat{H}_\omega(\pi u/\omega)] + O\!\left(\frac{1}{\omega^2} \right) 
		\nonumber\\
		&=\frac{1}{2}\delta Z + \frac{1}{2}gX 
		+ \frac{g}{4\omega} \left( \delta X - \frac{1}{2} g Z \right)	+ O\!\left(\frac{1}{\omega^2} \right),
		\label{eqn:MugnusOmega}
	\end{align}
	for $\|\hat{H}_\omega\|_{1,\pi/\omega}<\pi$. 	
	Since $\|\hat{H}_\omega\|_{1,\pi/\omega}\le \frac{\pi}{2\omega}\sqrt{\delta^2+4g^2}$, 
	we can take, e.g.~$\omega>\frac{\pi}{2}\sqrt{\delta^2+4g^2}$. 
	Because  $\overline{H}$ in~(\ref{eqn:AveHamRWAqubit}) and $\overline{H}_\omega$ in~(\ref{eqn:MugnusOmega}) have different eigenvalues, the evolution generated by $\overline{H}$  eventually diverges from $U(t)$ as discussed in  Remark~\ref{rmk:isospectral}.
\end{remark}
\begin{remark}
If one tries to approximate $H(t)$ in~(\ref{eq:origH}) by $H_\mathrm{cRWA}(t)=\frac{1}{2}\omega_0Z +H_\mathrm{counter}(t)$ instead of $H_\mathrm{RWA}(t)=\frac{1}{2}\omega_0Z +H_\mathrm{co}(t)$ in~(\ref{eq:RWAHam}),  one needs to go to the reference frame rotating with $U_{0,-\omega}(t)$ instead of $U_{0,\omega}(t)$ in~(\ref{eqn:RWA_Reference}).
Then, the counterparts of $\|\hat{H}_\omega\|_{1,t}$ and $\|\overline{H}\|_{1,t}$ in this case become $O(\omega)$ and the error of the approximation remains $O(1)$.
For the same reason, simply taking the time-average of $H(t)$ as $\overline{H}$ fails to approximate the original evolution $U(t)$.
\end{remark}

\subsection{Generalization beyond the Qubit Scenario}\label{sec:genRWA}
The above procedure for the RWA can be easily generalized to systems beyond the qubit.
Let us consider a time-dependent Hamiltonian of the form
\begin{equation}\label{eq:genH}
H(t)=\kappa H_0 +H_1(\kappa t),
\end{equation}
with large $\kappa$.
In the qubit example considered above, $\kappa =\omega$, $H_0= \frac{1}{2} Z$, and $H_1(\tau)= \frac{1}{2}\delta Z + g \cos \tau \, X$. Notice, however, that in general $H_1(\tau)$ is \emph{not} assumed to be periodic.

In the reference frame rotating with the Hamiltonian $\kappa H_0 $, the generator of the evolution is given by
\begin{equation}
\hat{H}_\kappa (t) 
=  \rme^{\rmi \kappa t H_0} H_1(\kappa t)
\rme^{-\rmi \kappa t H_0} ,
\end{equation}
which is a function of $\tau=\kappa t$.
Assuming that the long-time average of $\hat{H}_\kappa (t)$ converges to some limit, namely,
\begin{equation}
\lim_{\tau\to+\infty}\frac{1}{\tau} \int_0^\tau \d s\, \rme^{\rmi s H_0} H_1(s)
\rme^{-\rmi s H_0}
=\overline{H},
\end{equation}
we have that the action vanishes,
\begin{equation}
\hat{S}_\kappa (t) = \int_0^t \d s\, [\hat{H}_\kappa (s) - \overline{H}]
= \frac{1}{\kappa} \int_0^{\kappa t} \d s\,[\rme^{\rmi s H_0} H_1(s) \rme^{-\rmi s H_0} - \overline{H}]\to 0,
\label{eqn:ActionRWAGen}
\end{equation} 
as $\kappa\to+\infty$.
Then, according to~(\ref{eq:divergenceboundbarAe}), the divergence between the evolution $U(t)$ generated by the Hamiltonian $H(t)$ and the unitary $\rme^{-\rmi \kappa tH_0}\rme^{-\rmi t\overline{H}}$ is bounded by
\begin{equation}
\|U(t)-\rme^{-\rmi \kappa tH_0}\rme^{-\rmi t\overline{H}}\|
\le\| \hat{S}_{\kappa}\|_{\infty,T}\left( 
1 +  \frac{1}{\kappa}\|H_1\|_{1,\kappa T}  + \|\overline{H}\|_{1,T}
\right).
\end{equation}
Therefore, if the condition
\begin{equation}
\|\hat{S}_\kappa \|_{\infty,T}
\left(1+\frac{1}{\kappa}\|H_1\|_{1,\kappa T}\right)
\to 0, \quad \text{as}\quad \kappa\to+\infty
\end{equation}
is satisfied, the evolution $U(t)$ is well approximated by $\rme^{-\rmi \kappa tH_0}\rme^{-\rmi t\overline{H}}$,
\begin{equation}
U(t)-\rme^{-\rmi \kappa tH_0}\rme^{-\rmi t\overline{H}}\to0,\quad\text{as}\quad \kappa\to+\infty.
\end{equation}
In particular, if $\|H_1(t)\|\le M$ uniformly for all $t$ for some $M\ge0$, we have
$\|H_1\|_{1,T}\le MT$, $\|\overline{H}\|_{1,T}\le MT$, and hence
\begin{equation}
\|U(t)  - \rme^{-\rmi \kappa t  H_0} \rme^{-\rmi t \overline{H}} \|  
\le  (1 + 2 MT) \| \hat{S}_\kappa \|_{\infty,T} \to 0. 
\label{eqn:BoundRWAGen}
\end{equation}

\subsection{Rotating-Wave Approximation for Systems with Two Driving Timescales}
\label{sec:RWAmodulated}
\subsubsection{Time-Dependent Drive Envelope for a Qubit}
\label{sec:timeRWA}
We can also discuss the RWA under situations where the drive is modulated with a time-dependent envelope~\cite{ref:DiVincenzo}. 
As a simple example, let us look at the evolution generated by the Hamiltonian
\begin{equation}
H(t)  = \frac{1}{2}\omega_0Z + g(t) \cos (\omega t)X.
\label{eqn:RWAmodulated}
\end{equation}
Compared with the above qubit example~(\ref{eq:origH}), the drive here is modulated in time with an envelope function $g(t)$.
In the reference frame rotating with $H_{0,\omega}$ and $U_{0,\omega}(t)$ in~(\ref{eqn:RWA_Reference}), the generator of the evolution is given by
\begin{align}
\hat{H}_\omega(t)
&=U_{0,\omega}(t)^\dag[H(t)   - H_{0,\omega}]U_{0,\omega}(t)
\nonumber\\
&=\frac{1}{2}\delta Z +
\frac{1}{2}g(t)[(1+\cos2\omega t)X  - \sin (2 \omega t) Y],
\end{align}
with $\delta=\omega_0-\omega$.
We are going to show that this evolution with large $\omega$ is well approximated by the evolution generated by the time-dependent Hamiltonian
\begin{equation}
\overline{H}(t)=\frac{1}{2}\delta Z+\frac{1}{2}g(t)X.
\end{equation}
To this end, let us consider the integral action
\begin{equation}
\hat{S}_\omega(t)=\int_0^t \d s\,[\hat{H}_\omega(s)-\overline{H}(s)]=\frac{1}{2}\int_0^t \d s\,g(s)[\cos(2\omega s) X -\sin(2\omega s) Y].
\label{eq:3.30}
\end{equation}
Let us assume that $g(t)$ is of class $C^{n}$. By performing $n$ integration by parts, we get
\begin{equation}
\|\hat{S}_\omega\|_{\infty,t}
\le
\sum_{k=1}^n
\frac{1}{(2\omega)^k}
\|g^{(k-1)}\|_{\infty,t}
+
\frac{1}{2(2\omega)^n}\|g^{(n)}\|_{1,t},
\end{equation}
which is small for large $\omega$.
Since
\begin{equation}
\|\hat{H}_\omega \|_{1,t} \le \frac{1}{2} \sqrt{\delta^2+4\|g\|_{\infty,t}^2} \,t,
\qquad 
\| \overline{H} \|_{1,t} \le \frac{1}{2} \sqrt{\delta^2+\|g\|_{\infty,t}^2} \,t,
\end{equation}
the bound~\eqref{eq:divergenceboundbarAe} gives
\begin{align}\label{eq:BoundTimeDependentDrive}
\| U - U_\mathrm{RWA} \|_{\infty,T}  
&\le
\left(
\sum_{k=1}^n
\frac{1}{(2\omega)^k}
\|g^{(k-1)}\|_{\infty,T}
+
\frac{1}{2(2\omega)^n}\|g^{(n)}\|_{1,T}
\right) \left(
1+\sqrt{\delta^2+4 \|g\|_{\infty,T}^2}\,T
\right)
\nonumber\\
&\to 0,
\end{align}
as $\omega\to+\infty$, where $U(t)$ and $U_\mathrm{RWA}(t)$ are unitaries generated by the Hamiltonian $H(t)$ in~(\ref{eqn:RWAmodulated}) and the Hamiltonian 
\begin{align}
H_\mathrm{RWA}(t)
&=H_{0,\omega} + U_{0,\omega}(t)\overline{H}(t)U_{0,\omega}(t)^\dag
\nonumber\\
&=\frac{1}{2} \omega_0Z +\frac{1}{2} g(t)( \rme^{-\rmi \omega t} \sigma_+ + \rme^{\rmi \omega t} \sigma_-)
\label{eq:RWAHamt}
\end{align}
in the RWA, respectively.

\begin{remark}[Piecewise-constant drive]
	The bound~\eqref{eq:BoundTimeDependentDrive} holds under the assumption that $g(t)$ is $n$ times continuously differentiable. 
	In practical situations, it can happen that the drive envelope $g(t)$ is not differentiable but is piecewise constant as
		\begin{equation}
		g(t)= g_{j}, \quad\text{for}
\quad t\in [t_{j-1}, t_j) \quad(j=1,\ldots,N),
	\end{equation}
where $0=t_0<t_1<\cdots<t_N=T$.		
	Then, for $t_{k}<t<t_{k+1}$, the action~\eqref{eq:3.30} reads
	\begin{align}
		\hat{S}_\omega(t)
		={}&\frac{1}{2}\sum_{j=1}^kg_j\int_{t_{j-1}}^{t_j}\d s\,[\cos(2\omega s) X -\sin(2\omega s) Y]
		+\frac{1}{2}g_{k+1}\int_{t_k}^t\d s\,[\cos(2\omega s) X -\sin(2\omega s) Y]
				\nonumber\\
		={}&{-\frac{g_1}{4\omega} Y} 
		-\frac{1}{4\omega} \sum_{j=1}^{k}(g_{j+1}-g_{j})[\sin(2\omega t_j)X +\cos(2\omega t_j)Y]
		\nonumber\\
		&{}+\frac{g_{k+1}}{4\omega} [\sin(2\omega t)X+\cos(2\omega t)Y].
	\end{align}
	The bound~(\ref{eq:BoundTimeDependentDrive}) in this case is replaced by
	\begin{equation}
		\| U - U_\mathrm{RWA} \|_{\infty,T}  
		\le \frac{1}{2\omega}
		\left(\max_{1\le j\le N}|g_j|+\frac{1}{2}\sum_{j=1}^{N-1} \abs{g_{j+1}-g_j}\right) \left(
		1+\sqrt{\delta^2+4 \max_{1\le j \le N}\abs{g_j}^2}\,T
		\right).
	\end{equation}
\end{remark}

\subsubsection{General System with Two Driving Timescales}
\label{sec:timegenRWA}
The strategy employed above for the qubit driven by a time-dependent drive envelope can be applied to systems beyond qubits.
Consider a Hamiltonian with two driving timescales of the form
\begin{equation}
H(t)=\kappa H_0 (\kappa t)+H_1(t, \kappa t).
\end{equation}
In the rotating frame of $\kappa H_0 (\kappa t)$, one has
\begin{equation}
\hat{H}_\kappa (t)  =  W(\kappa t)^\dag H_1(t, \kappa t)W(\kappa t) = \widetilde{H}(t, \kappa t) ,
\end{equation}
where
\begin{equation}
W(t) = \T\exp\!\left(
-\rmi\int_0^t\d s\, H_0(s)\right).
\end{equation}
The essential idea in the previous section was to average over the fast time-dependence in the drive.
To do this in the integral action, integration by parts was performed.
Suppose that the average over the fast variable has a limit,
\begin{equation}
\overline{H}(t,\tau) = \frac{1}{\tau} \int_0^\tau \d s\, \widetilde{H}(t, s)  
\to \overline{H}(t), \quad\text{as}\quad \tau\to+\infty,
\end{equation}
uniformly for $t\in[0,T]$.
Now, note that
\begin{align}
s\frac{\d}{\d s}\overline{H}(s,\kappa s)
&=s\partial_1\overline{H}(s,\kappa s)+\kappa s\partial_2\overline{H}(s,\kappa s)
\nonumber
\\
&=s\partial_1\overline{H}(s,\kappa s)-\overline{H}(s,\kappa s)+\widetilde{H}(s,\kappa s),
\end{align}
where $\partial_1$ and $\partial_2$ are partial derivatives of $\overline{H}(s_1,s_2)$ with respect to $s_1$ and $s_2$, respectively. Then,
\begin{equation}
\frac{\d}{\d s}[s\overline{H}(s,\kappa s)]
=
s\partial_1\overline{H}(s,\kappa s)
+\widetilde{H}(s,\kappa s),
\end{equation}
which yields the following formula of integration by parts,
\begin{equation}
\int_0^t \d s\,\hat{H}_\kappa (s)
=\int_0^t \d s\,\widetilde{H}(s,\kappa s)
=t\overline{H}(t,\kappa t)-\int_0^t \d s\, s\partial_1\overline{H}(s,\kappa s).
\end{equation}
Using this formula, we estimate the integral action as
\begin{align}
\hat{S}_\kappa (t) 
&=\int_0^t \d s\,[\hat{H}_\kappa (s) - \overline{H}(s)] 
\nonumber\\
&=t[\overline{H}(t,\kappa t) - \overline{H}(t)]
-\int_0^t \d s\,s[\partial_1\overline{H}(s,\kappa s) - \partial_s\overline{H}(s)],
\end{align}
where $\partial_s$ is the derivative with respect to $s$.
Thus, under the additional assumption that
\begin{equation}
\partial_1\overline{H}(t,\tau) 
\to\partial_t\overline{H}(t),\quad\text{as}\quad \tau\to+\infty,
\end{equation}
uniformly for $t\in[0,T]$, the action vanishes $\|\hat{S}_\kappa \|_{\infty,T} \to 0$, as $\kappa\to+\infty$.
Since
\begin{equation}
\|\hat{H}_\kappa (t) \|= \|  W(\kappa t)^\dag H_1(t, \kappa t)W(\kappa t)\| = \|  H_1(t , \kappa t)\|,
\end{equation}
assumption~\eqref{eqn:CondReference2} is satisfied if for instance $\|H_1(t, s)\| \le M$ uniformly for all $t$ and $s$. 
In such a case, Corollary~\ref{cor:OneMoreThmbar} applies, and one has
\begin{equation}\label{eq:GenGenRWA}
U(t) - \T\exp\!\left(
-\rmi\int_0^{\kappa t}\d s\, H_0(s)
\right)
\T\exp\!\left(
-\rmi\int_0^{t}\d s\, \overline{H}(s)
\right)\to0,
\end{equation}
as $\kappa\to+\infty$, uniformly on finite time intervals.

\section{Strong-Coupling Limit and Adiabatic Theorem}
\label{sec:AdiabaticTh}
An adiabatic theorem in quantum mechanics describes the evolution of a quantum system under a slowly driven Hamiltonian~\cite{ref:Messiah}.
This can be formally expressed by considering an evolution operator $U(t)$ satisfying
\begin{equation}
\label{eq:slowEvolution}
\rmi \frac{\d}{\d t} U(t) = H\!\left(\frac{t}{T} \right) U(t), \quad t\in[0,T],
\end{equation}
for large time $T\to+\infty$, with a family of self-adjoint operators $\{H(s)\}_{s\in[0,1]}$. 
Upon rescaling time to $s=t/T$, $s\in[0,1]$ (``macroscopic'' time), and setting $U_T(s)= U(s T)$, the above equation becomes
\begin{equation}\label{eq:slowStrong}
\rmi\frac{\d}{\d s} U_T(s) = TH (s)U_T(s), \quad s\in[0,1],
\end{equation}
that is a strong-coupling limit $T\to+\infty$~\cite{ref:KatoAdiabatic}. This simple observation establishes an interesting link between slow evolutions for long times and evolutions with strong couplings.
In this section, we will show how Corollary~\ref{cor:OneMoreThmbar} is useful to deal with both situations.

\subsection{Strong-Coupling Limit}
\label{sec:strongCoupling}
Let us start with the simplest case of the strong-coupling limit, i.e., the limit $\kappa\to+\infty$ of the evolution generated by a time-independent Hamiltonian $\kappa H_0+ H_1$.
This limit gives rise to the separation of timescales in the evolution, and the transitions between eigenspaces of $H_0$ are suppressed. 
Still, if the dimension of an eigenspace is greater than 1, the system can evolve unitarily within the eigenspace.
This is a manifestation of the quantum Zeno dynamics~\cite{ref:SchulmanJumpTimePRA,ref:QZS,ref:ControlDecoZeno,ref:QZEExp-Ketterle,ref:PaoloSaverio-QZEreview-JPA,ref:ZenoPaoloMarilena,ref:QZEExp-Schafer:2014aa,ref:unity1,ref:GongPRL,ref:GongPRA,ref:EternalAdiabatic,ref:ZenoKAM}, and the evolution within the eigenspaces is generated by a ``Zeno Hamiltonian''~\cite{ref:QZS,ref:PaoloSaverio-QZEreview-JPA}.
This limit can be treated by our method and yields the following theorem.
\begin{thm}[Strong-coupling limit]
\label{thm:StrongCouplingLimit}
Given two bounded self-adjoint operators $H_0$ and $H_1$, assume that $H_0$ has the finite spectral representation
\begin{equation}
H_0=\sum_{\ell=1}^m E_\ell P_\ell,
\end{equation}
where $E_k\neq E_\ell$ for $k\neq\ell$, $P_\ell=P_\ell^\dag$, $P_kP_\ell=\delta_{k\ell}P_\ell$ for all $k$ and $\ell$, and $\sum_{\ell=1}^m P_\ell = 1$. 
Let
\begin{equation}
\eta=\mathop{\min_{k,\ell}}_{k\neq\ell}|E_k-E_\ell|
\end{equation}
be the minimal spectral gap of $H_0$, and
\begin{equation}
H_Z=\sum_{\ell=1}^m P_\ell H_1P_\ell 
\label{eqn:ZenoHcont}
\end{equation}
be the Zeno Hamiltonian. Then,
\begin{equation}
\|\rme^{-\rmi t(\kappa H_0 +H_1)} - \rme^{-\rmi t(\kappa H_0  + H_Z)}\|
\le
\frac{2\sqrt{m}}{\kappa\eta}
\|H_1\|
(1+2T\|H_1\|),
\label{eqn:StronCouplingBound}
\end{equation}
for all $t\in [0,T]$. 
\end{thm}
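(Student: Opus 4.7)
The plan is to apply Proposition~\ref{lem:divergenceR} in the interaction picture generated by the strong Hamiltonian $\kappa H_0$. Concretely, I would identify $H_1(t) \equiv \kappa H_0 + H_Z$ and $H_2(t) \equiv \kappa H_0 + H_1$ in the statement of Proposition~\ref{lem:divergenceR}, and pick the reference $H_0(t) = \kappa H_0$, whose propagator is $U_0(t) = \rme^{-\rmi t \kappa H_0}$. Since $H_Z$ commutes with every $P_\ell$, it commutes with $H_0$, hence $U_0(t)^\dag H_Z U_0(t) = H_Z$, and the action in the rotating frame reduces to the integral of $U_0(s)^\dag (H_1 - H_Z) U_0(s)$.

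The next step is to exploit the spectral decomposition of $H_0$. Writing $H_1 - H_Z = \sum_{k\neq\ell} P_k H_1 P_\ell$ and using $\rme^{\rmi s\kappa H_0} P_k = \rme^{\rmi s\kappa E_k} P_k$, I obtain explicitly
\begin{equation}
\hat{S}(t) = \sum_{k\neq\ell} \frac{\rme^{\rmi t\kappa(E_k-E_\ell)}-1}{\rmi\kappa(E_k-E_\ell)}\,P_k H_1 P_\ell,
\end{equation}
so each oscillating exponential has been integrated, producing the crucial factor $1/[\kappa(E_k-E_\ell)]$, bounded by $1/(\kappa\eta)$ thanks to the gap assumption.

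The only nontrivial step is bounding $\|\hat{S}(t)\|$ \emph{sharply enough} to get the announced $\sqrt{m}$ factor rather than the trivial $m(m-1)$ that would follow from the triangle inequality. The idea is that for any unit vector $\psi$, orthogonality of the projections and Cauchy--Schwarz give
\begin{equation}
\|\hat{S}(t)\psi\|^2 = \sum_k\Bigl\|\sum_{\ell\neq k}\frac{\rme^{\rmi t\kappa(E_k-E_\ell)}-1}{\rmi\kappa(E_k-E_\ell)}\,P_k H_1 P_\ell\psi\Bigr\|^2 \le \frac{4(m-1)}{\kappa^2\eta^2}\sum_{k,\ell}\|P_k H_1 P_\ell\psi\|^2,
\end{equation}
and the last double sum equals $\sum_\ell \|H_1 P_\ell\psi\|^2 \le \|H_1\|^2$. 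Hence $\|\hat{S}\|_{\infty,T} \le (2\sqrt{m-1}/(\kappa\eta))\|H_1\|$, which is the main estimate; this is the step I expect to be the real obstacle, since the naive triangle bound scales as $m$.

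To finish, I just bound the remaining norms appearing in Proposition~\ref{lem:divergenceR}. In the rotating frame with reference $\kappa H_0$, the two relevant Hamiltonians are $H_1$ and $H_Z$, both of norm at most $\|H_1\|$ (for $H_Z$ this follows, e.g., from the same orthogonality trick), so $\|H_1\|_{1,T} \le T\|H_1\|$ and $\|H_Z\|_{1,T} \le T\|H_1\|$. Plugging these into~\eqref{eq:divergenceboundR} together with the bound on $\|\hat{S}\|_{\infty,T}$ and using $\sqrt{m-1}\le\sqrt{m}$ yields exactly~\eqref{eqn:StronCouplingBound}.
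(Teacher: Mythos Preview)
Your proposal is correct and follows essentially the same route as the paper: pass to the rotating frame of $\kappa H_0$, compute the action $\hat{S}(t)$ explicitly as the off-diagonal sum, bound it by $O(\sqrt{m}/(\kappa\eta))\|H_1\|$ using orthogonality of the spectral projections, and plug into the universal bound together with $\|H_Z\|\le\|H_1\|$. The only cosmetic difference is that the paper packages the $\sqrt{m}$ estimate as its Lemma~\ref{lem:ProjectedSum} (bounding $\|\sum_{k,\ell}c_{k,\ell}P_kAP_\ell\|$ via the operator identity $\|\sum_\ell A_\ell P_\ell\|^2\le\sum_\ell\|A_\ell\|^2$), whereas you derive it directly on vectors via Cauchy--Schwarz, picking up the slightly sharper $\sqrt{m-1}$.
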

\begin{proof}
This is a particular case of the scenario discussed in Sec.~\ref{sec:genRWA}.
By using Lemma~\ref{lem:ContinuousErgodic} in Appendix~\ref{app:Ergodic}, the time-average of the Hamiltonian in the rotating frame is shown to converge to the Zeno Hamiltonian,
\begin{equation}
\frac{1}{\tau}\int_0^\tau \d s\, \rme^{\rmi s H_0} H_1
\rme^{-\rmi s H_0} \to H_Z,  \quad\text{as}\quad \tau\to+\infty,
\end{equation}
and the integral action~(\ref{eqn:ActionRWAGen}) is explicitly given by
\begin{equation}
\hat{S}_\kappa (t)
=\frac{1}{\kappa} \int_0^{\kappa t} \d s\,(\rme^{\rmi s H_0} H_1
\rme^{-\rmi s H_0}  - H_Z)
={\sum_{k,\ell}}'
\frac{\rme^{\rmi \kappa t(E_k-E_\ell)}-1}{\rmi\kappa(E_k-E_\ell)}
P_kH_1P_\ell,
\end{equation} 
where $\sum'_{k,\ell}$ represents summation over the pair $(k,\ell)$ excluding terms with $k=\ell$. Using~\eqref{eq:boundSumProj} of Lemma~\ref{lem:ProjectedSum}, this action is bounded by
\begin{equation}
\|\hat{S}_\kappa \|_{\infty,t}
\le
\frac{2\sqrt{m}}{\kappa\eta}
\|H_1\|.
\end{equation}
Noting $\|H_Z\|\le\|H_1\|$, the bound~(\ref{eqn:BoundRWAGen}) applies and reads~(\ref{eqn:StronCouplingBound}).
\end{proof}
The bound~\eqref{eqn:StronCouplingBound} is comparable with the slightly better bound obtained~in Ref.~\cite{ref:ZenoKAM} by a different method. However, this theorem can immediately be  generalized to a time-dependent Hamiltonian $H_1(t)$, a case that cannot be dealt with the methods of Ref.~\cite{ref:ZenoKAM}.
We just get a time-dependent Zeno Hamiltonian $H_Z(t)$ projected exactly in the same way as~(\ref{eqn:ZenoHcont}), with the same error~\eqref{eqn:StronCouplingBound} with $\|H_1\|_{\infty,T}$ in place of $\|H_1\|$.

\subsection{Adiabatic Theorem}
\label{sec:AdiabaticTh1}
The main concern in adiabatic theorems is to characterize how likely it is for a system starting from an eigenspace of a time-dependent Hamiltonian generating the evolution to remain in the same eigenspace during the evolution, provided that the variation of the Hamiltonian is sufficiently slow.  By virtue of Eqs.~\eqref{eq:slowEvolution} and~\eqref{eq:slowStrong}, the limit evolution operator when the Hamiltonian variation is slow is equivalently described by 
\begin{equation}
U_\kappa (t)=\T\exp\!\left(-\rmi\int_0^t\d s\,\kappa H_0 (s)\right),
\end{equation}
with $\kappa\to+\infty$.  In the geometric approach introduced by Kato~\cite{ref:KatoAdiabatic}, the transitions between different eigenspaces of the Hamiltonian can be controlled by bounding the distance between the exact evolution $U_\kappa(t)$ and an ``adiabatic evolution'' which sends each initial eigenspace of $H_0(0)$ to the corresponding eigenspace of $H_0(t)$ at $t\in[0,T]$ with dynamical phases within each eigenspace (see Remark~\ref{rmk:AdThm} below).
The following theorem is proved by using Corollary~\ref{cor:OneMoreThmbar}. 
Since our purpose is to show how our universal bound can be applied  to prove also the adiabatic theorem, we do not intend to use the weakest assumptions possible. 
The readers interested in a broader overview on adiabatic theorems and more refined bounds tailored to specific situations might refer to Ref.~\cite{Lidar}. 
\begin{thm}[Adiabatic theorem]
\label{thm:AdiabaticTheorem}
Assume that for every $t\in[0,T]$ the  self-adjoint operator $H_0(t)$ has the finite spectral representation 
\begin{equation}
H_0(t)=\sum_{\ell=1}^m E_\ell(t)P_\ell(t),
\label{eqn:SpectralRepresentationH0t}
\end{equation}
where $\{E_\ell(t)\}\subset\mathbb{R}$, while $P_\ell(t)=P_\ell(t)^\dag$ and $P_k(t)P_\ell(t)=\delta_{k\ell}P_\ell(t)$ for all $k$ and $\ell$.
Assume  that $E_\ell(t)$ are $C^1$ and $P_\ell(t)$ are $C^2$, and that there are no level crossings, i.e.,
\begin{equation}
|\omega_{k,\ell} (t)| = |E_k(t) - E_\ell (t)| >0, \quad  \text{for}\quad t\in[0,T],
\label{eqn:NoLevelCrossing}
\end{equation}
for $k\neq \ell$.
Then, one has
\begin{align}
&\left\|
U_\kappa (t) - 
\T\exp\!\left(
-\rmi\int_0^t\d s\,[\kappa H_0 (s)+A(s)]
\right)
\right\|
\nonumber\\
&\qquad
\le
\frac{\sqrt{m}}{\kappa\eta}(1+T\|A\|_{\infty,T})
\left[
\left(
2+\frac{\eta'}{\eta}T
\right)\|A\|_{\infty,T}
+ T\|\dot{A}\|_{\infty,T} 
\right],
\label{eqn:AdiabaticTheorem}
\end{align}
for all $t\in [0,T]$, where
\begin{equation}
A(t)= \sum_{\ell=1}^m\frac{\rmi}{2}[\dot{P}_\ell(t),P_\ell(t)]
\label{eqn:AdiabaticTransporterGenerator}
\end{equation}
is the generator of the adiabatic transporter, and
\begin{equation}
\eta = \mathop{\min_{k,\ell}}_{k\neq\ell} \min_{t\in[0,T]} |\omega_{k,\ell}(t)|, \qquad \eta' = \mathop{\max_{k,\ell}}_{k\neq\ell} \max_{t\in[0,T]} |\dot{\omega}_{k,\ell}(t)|
\label{eqn:SpectralGap}
\end{equation}
are the minimal spectral gap and the maximal spectral slope, respectively.
\end{thm}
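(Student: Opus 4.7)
The plan is to apply Corollary~\ref{cor:OneMoreThmbar} in the rotating frame of the approximating adiabatic propagator itself, following strategy~(1) of Remark~\ref{rmk:effectiveH}. Choose the reference Hamiltonian to coincide with the target, $H_{0,\kappa}(t)=\kappa H_0(t)+A(t)$, so that $U_{0,\kappa}(t)\equiv U_A(t)$ is exactly the unitary evolution appearing on the right-hand side of~(\ref{eqn:AdiabaticTheorem}). With $H_\kappa(t)=\kappa H_0(t)$ and $\overline H_\kappa(t)=\kappa H_0(t)+A(t)$, we have $\overline H_\kappa-H_{0,\kappa}=0$ and $\|H_\kappa-H_{0,\kappa}\|_{1,T}=\|A\|_{1,T}\le T\|A\|_{\infty,T}$, so the bound~(\ref{eq:divergenceboundbarKKe}) reduces the task to estimating $\|\hat S_\kappa\|_{\infty,T}$, where $\hat S_\kappa(t)=-\int_0^t U_A(s)^\dag A(s) U_A(s)\,\d s$.

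The key structural step is the factorization $U_A(t)=W(t)D_\kappa(t)$, with $W(t)$ the adiabatic transporter solving $\rmi\dot W=AW$, $W(0)=I$, and $D_\kappa(t)=\exp(-\rmi\kappa\sum_\ell\phi_\ell(t)P_\ell(0))$ the diagonal dynamical phase, where $\phi_\ell(t)=\int_0^t E_\ell(s)\,\d s$. This follows from the intertwining identity $W(t)^\dag P_\ell(t)W(t)=P_\ell(0)$, equivalently $W^\dag H_0 W=\sum_\ell E_\ell(t)P_\ell(0)$, which is diagonal in the fixed $P_\ell(0)$-basis. Differentiating $P_\ell^2=P_\ell$ gives the block-off-diagonal property $P_\ell A P_\ell=0$, so conjugation by $D_\kappa$ decorates every surviving term with a rapidly oscillating phase, yielding
\begin{equation*}
U_A(s)^\dag A(s) U_A(s)=\sum_{k\ne\ell}e^{\rmi\kappa\Omega_{k\ell}(s)}B_{k\ell}(s),\quad B_{k\ell}(s)=P_k(0)W(s)^\dag A(s)W(s)P_\ell(0),
\end{equation*}
where $\Omega_{k\ell}(s)=\phi_k(s)-\phi_\ell(s)$ has derivative $\omega_{k\ell}(s)\ne 0$ by~(\ref{eqn:NoLevelCrossing}).

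A single integration by parts on each summand against $\frac{\d}{\d s}e^{\rmi\kappa\Omega_{k\ell}(s)}=\rmi\kappa\omega_{k\ell}(s)e^{\rmi\kappa\Omega_{k\ell}(s)}$ extracts the $1/\kappa$ factor: the boundary contribution at $s=0$ and $s=t$ gives an $O(\|A\|_{\infty,T}/(\kappa\eta))$ term, while the remaining integral contains $\frac{\d}{\d s}[B_{k\ell}/\omega_{k\ell}]$, whose numerator simplifies dramatically since $\dot W=-\rmi AW$ forces the two $W$-derivatives in $\dot B_{k\ell}$ to cancel, leaving $\dot B_{k\ell}=P_k(0)W^\dag\dot A\,W P_\ell(0)$ with $\|\dot B_{k\ell}\|\le\|\dot A\|_{\infty,T}$, and the $\omega_{k\ell}$-derivative contributes at most $\eta'$. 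The sum over $k\ne\ell$ is then collapsed via Lemma~\ref{lem:ProjectedSum} of Appendix~\ref{app:Ergodic}, the same projected-sum estimate used in the proof of Theorem~\ref{thm:StrongCouplingLimit}, producing a $\sqrt m/\eta$ aggregate prefactor rather than the $m^2/\eta^2$ one would get from a naive triangle inequality; inserting the resulting bound on $\|\hat S_\kappa\|_{\infty,T}$ into the prefactor $(1+T\|A\|_{\infty,T})$ from~(\ref{eq:divergenceboundbarKKe}) reproduces~(\ref{eqn:AdiabaticTheorem}) exactly.

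The main obstacle is the block-structure bookkeeping: one must first establish $P_\ell A P_\ell=0$ so that every term in $\hat S_\kappa$ genuinely oscillates with nonvanishing frequency $\kappa\omega_{k\ell}$, then notice the convenient cancellation $\dot B_{k\ell}=P_k(0)W^\dag\dot A WP_\ell(0)$ (the transporter derivatives kill each other because $A$ is the generator of $W$), and finally replace the naive $m(m-1)$-term sum by the projected-sum estimate so as to obtain the advertised $\sqrt m$ dependence rather than a worse polynomial in $m$.
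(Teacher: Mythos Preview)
Your proposal is correct and follows essentially the same approach as the paper's proof: the paper also goes to the adiabatic rotating frame $U_A(t)=W(t)\rme^{-\rmi\kappa t\overline{H}_W(t)}$ (your $W(t)D_\kappa(t)$), obtains the same block-off-diagonal oscillatory integrand via $P_\ell A P_\ell=0$, integrates by parts once using the same cancellation $\dot B_{k\ell}=P_k(0)W^\dag\dot A\,WP_\ell(0)$, and applies Lemma~\ref{lem:ProjectedSum} to get the $\sqrt m$ factor. The only difference is cosmetic---the paper phrases the frame change as a direct computation rather than invoking Remark~\ref{rmk:effectiveH}, but the two are identical at the level of the underlying estimates.
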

\begin{remark}
\label{rmk:AdThm}
The second term inside the norm  in~\eqref{eqn:AdiabaticTheorem} can be written as 
\begin{equation}
\T\exp\!\left(
-\rmi\int_0^t\d s\,[\kappa H_0 (s)+A(s)]
\right)=W(t) \rme^{ -\rmi  \kappa t \overline{H}_W(t)},
\end{equation}
where
\begin{equation}
W(t)=\T\exp\!\left(
-\rmi\int_0^t\d s\,A(s)
\right) 
\label{eqn:Transporter}
\end{equation}
is the unitary adiabatic transporter~\cite{ref:KatoAdiabatic}, which transports $P_\ell(0)$ to $P_\ell(t)$ through
\begin{equation}
P_\ell(t)=W(t)P_\ell(0)W(t)^\dag,
\end{equation}
and
\begin{equation}
H_W(t)=W(t)^\dag H_0(t)W(t) = \sum_{\ell=1}^m E_\ell(t)P_\ell(0)
\end{equation}
yields the dynamical phases within each eigenspace.
Note that since $[H_W(t),H_W(t')] = 0$ we have
\begin{equation}
\T\exp\!\left( -\rmi  \kappa \int_0^t \d s\, H_W(s) \right) 
= \exp\!\left( -\rmi  \kappa \int_0^t \d s\, H_W(s) \right) 
= \rme^{-\rmi \kappa t \overline{H}_W(t)},
\end{equation}
where 
\begin{equation}
\overline{H}_W(t)
= \frac{1}{t} \int_0^t \d s\, H_W(s) 
= \sum_{\ell=1}^m  \overline{E}_\ell(t)P_\ell(0).
\label{eq:dynphase}
\end{equation}
\end{remark}
\begin{proof}[Proof of Theorem~\ref{thm:AdiabaticTheorem}]
By going to the adiabatic rotating frame of $W(t)$ and by removing the dynamical phases, we get
\begin{equation}
V_\kappa (t) 
= \rme^{ \rmi  \kappa t \overline{H}_W(t)} W(t)^\dag U_\kappa (t) 
= \T\exp\!\left( -\rmi \int_0^t \d s\, \hat{H}_\kappa (s) \right), 
\label{eqn:AdiabaticRotatingW}
\end{equation}
with
\begin{equation}
\hat{H}_\kappa (t)
= -\rme^{ \rmi  \kappa t \overline{H}_W(t)} W(t)^\dag A(t) W(t) \rme^{ -\rmi  \kappa t \overline{H}_W(t)}
= -{\sum_{k,\ell}}' \rme^{ \rmi  \kappa \sigma_{k,\ell}(t)} A_{k,\ell}(t),
\label{eq:addec1}
\end{equation}
where
\begin{gather}
\sigma_{k,\ell}(t)
=\int_0^t\d s\,\omega_{k,\ell}(s),
\qquad
A_{k,\ell}(t)
=P_k(0) W(t)^\dag A(t) W(t)  P_\ell(0),
\label{eq:addec2}
\end{gather}
and $\sum'_{k,\ell}$ represents the summation over the pair $(k,\ell)$ excluding terms with $k=\ell$.
Notice that $A_{\ell,\ell}(t) = W(t)^\dag P_\ell(t)  A(t) P_\ell (t) W(t) = 0$, and thus the diagonal terms in~\eqref{eq:addec1} do not contribute.

By the assumption~(\ref{eqn:NoLevelCrossing}) that there are no level crossings, $\sigma_{k,\ell}(t)$ is strictly monotonic and invertible, and we get
\begin{align}
S_\kappa (t)
=\int_0^t \d s\, \hat{H}_\kappa (s)
&= -{\sum_{k,\ell}}' \int_0^t \d s\, \rme^{ \rmi  \kappa \sigma_{k,\ell}(s) } A_{k,\ell}(s)
\nonumber
\\
&=-{\sum_{k,\ell}}' 
\int_0^{\sigma_{k,\ell}(t)} \d v\,\rme^{ \rmi  \kappa v } \frac{1}{\omega_{k,\ell}(\sigma_{k,\ell}^{-1}(v))}A_{k,\ell}(\sigma_{k,\ell}^{-1}(v))
\nonumber
\displaybreak[0]
\\
&=-{\sum_{k,\ell}}' 
\int_0^{\sigma_{k,\ell}(t)} \d v\,  \rme^{ \rmi  \kappa v } \widetilde{A}_{k,\ell}(v) \to 0,
\quad \text{as}\quad \kappa\to+\infty,
\end{align}
by the Riemann-Lebesgue lemma~\cite{Strichartz}.
Moreover,
\begin{equation}
\|\hat{H}_\kappa (t)\| 
=\|\rme^{ \rmi  \kappa t \overline{H}_W(t)} W(t)^\dag A(t) W(t) \rme^{ -\rmi  \kappa t \overline{H}_W(t)}\| 
=\|A(t)\| 
\le\|A\|_{\infty,T},
\end{equation}
so that $\hat{H}_\kappa (t)$ is uniformly bounded in $\kappa$, and Corollary~\ref{cor:OneMoreThmbar} applies with $\overline{H}(t)=0$, giving
\begin{equation}
\|U_\kappa (t) - W(t) \rme^{ -\rmi  \kappa t \overline{H}_W(t)}\|
\le\|S_\kappa \|_{\infty,T}(1 +   T\|A\|_{\infty,T}) \to 0 ,
\label{eqn:AdiabaticConvergence}
\end{equation}
for all $t\in[0,T]$.

Let us bound the error.
Since $E_\ell(t)$ is $C^1$ and $P_\ell(t)$ is $C^2$, $A_{k,\ell}(s)$ is differentiable, and by integration by parts we get
\begin{align}
\int_0^t\d s\,\rme^{\rmi \kappa\sigma_{k,\ell}(s)}A_{k,\ell}(s) 
={}&
\frac{1}{\rmi \kappa}
\left(
\frac{
\rme^{ \rmi  \kappa\sigma_{k,\ell}(t)}
}{\omega_{k,\ell}(t)}A_{k,\ell}(t)
-
\frac{1}{\omega_{k,\ell}(0)}A_{k,\ell}(0)
\right)
\nonumber\\
&{}
-
\frac{1}{\rmi \kappa}
\int_0^t\d s\,  
\frac{
\rme^{ \rmi  \kappa\sigma_{k,\ell}(s)}
}{\omega_{k,\ell}(s)}
\left(
\dot{A}_{k,\ell}(s)
-
\frac{
\dot{\omega}_{k,\ell}(s)
}{\omega_{k,\ell}(s)}
A_{k,\ell}(s)
\right).
\end{align}
Note here that $\dot{A}_{k,\ell}(t) = P_k(0) W^\dag(t) \dot{A}(t) W(t)  P_\ell(0)$. 
Then, using Lemma~\ref{lem:ProjectedSum} in Appendix~\ref{app:Ergodic}, the action $S_\kappa (t)$ is bounded by
\begin{equation}
\|S_\kappa \|_{\infty,T} 
\le \frac{\sqrt{m}}{\kappa\eta}\left[
\left(
2
+\frac{\eta'}{\eta} T 
\right)\|A\|_{\infty,T} 
+ T\|\dot{A}\|_{\infty,T} 
\right].
\end{equation}
By noting 
\begin{equation}
W(t) \rme^{ -\rmi  \kappa t \overline{H}_W(t)}
=
\T\exp\!\left(
-\rmi\int_0^t\d s\,[\kappa H_0 (s)+A(s)]
\right),
\end{equation}
the bound in~(\ref{eqn:AdiabaticConvergence}) yields the adiabatic theorem~(\ref{eqn:AdiabaticTheorem}).
\end{proof}

An explicit error bound for an adiabatic approximation, which looks similar to the bound~(\ref{eqn:AdiabaticTheorem}), can be found in Ref.~\cite{ref:JansenRuskaiSeiler}, but it concerns a different scenario.
It bounds the probability of leakage from a collection of eigenspaces to the rest of the spectrum of the Hamiltonian, while Theorem~\ref{thm:AdiabaticTheorem} bounds the distance from the adiabatic evolution confined within every eigenspace. The two bounds are not directly comparable.

\subsection{Generalized Adiabatic Theorem}
\label{sec:AdiabaticZeno}
The adiabatic theorem proved in Theorem~\ref{thm:AdiabaticTheorem} in the previous section can be generalized to the case of a more general time-dependent Hamiltonian $H_\kappa(t)$ which assumes the form $\kappa H_0(t)$ only asymptotically for large $\kappa$. The possibility of such a generalization was mentioned by Kato in Ref.~\cite{ref:KatoAdiabatic} but not worked out explicitly.
\begin{thm}[Generalized adiabatic theorem]
\label{thm:genadiabatic}
Let $t\in[0,T]\mapsto H_\kappa (t)$ be an integrable Hamiltonian, with  $H_\kappa (t)$ self-adjoint and bounded for all $t\in[0,T]$ and all $\kappa>0$. 
Let 
\begin{equation}
U_\kappa (t)=\T\exp\!\left(-\rmi\int_0^t\d s\, H_\kappa (s)\right)
\end{equation}
be the evolution generated by $H_\kappa(t)$.
Assume that for all $t\in[0,T]$ there exists the limit
\begin{equation}
H_0(t) = \lim_{\kappa\to+\infty} \frac{1}{\kappa} H_\kappa(t),
\end{equation}
where $H_0(t)$ has the properties of the Hamiltonian in Theorem~\ref{thm:AdiabaticTheorem}. Assume that 
\begin{equation}
G_\kappa (t) =  H_\kappa(t) - \kappa H_0(t)
\end{equation}
is differentiable and  $\|G_\kappa\|_{\infty,T},\|\dot{G}_\kappa\|_{\infty,T} = o (\sqrt{\kappa})$.
Then, we have
\begin{equation}
U_\kappa (t) 
- 
\T\exp\!\left(
-\rmi\int_0^t\d s\,[\kappa H_0 (s)+G_{\kappa,Z}(s)+A(s)]
\right)\to 0,
\end{equation}
as $\kappa\to+\infty$, uniformly for $t\in[0,T]$, where
$A(t)$ is the adiabatic connection defined in~(\ref{eqn:AdiabaticTransporterGenerator}), and
\begin{equation}
G_{\kappa,Z}(t)=\sum_{\ell=1}^m P_\ell(t)G_\kappa(t)P_\ell(t)
\end{equation}
is the time-dependent Zeno Hamiltonian.
The convergence error is bounded by
\begin{align}
&\left\|
U_\kappa (t) 
- 
\T\exp\!\left(
-\rmi\int_0^t\d s\,[\kappa H_0 (s)+G_{\kappa,Z}(s)+A(s)]
\right)
\right\|
\nonumber
\\
&\quad
\le
\frac{\sqrt{m}}{\kappa\eta}
(
1
+T
\|A\|_{\infty,T}
+2T\|G_\kappa\|_{\infty,T}
)
\nonumber
\\
&\qquad
{}\times
\left[
\left(
2+  \frac{\eta'}{\eta} T
\right)
(\|A\|_{\infty,T}+\|G_\kappa\|_{\infty,T})
+ T
 (
 \|\dot{A}\|_{\infty,T} 
 + \|\dot{G}_\kappa\|_{\infty,T} 
 + 2\|A\|_{\infty,T}\|G_\kappa\|_{\infty,T}
)
\right],
\label{eqn:GeneralizedAdiabaticThm}
\end{align}
where $\eta$ and $\eta'$ are the minimal spectral gap and the maximal spectral slope, respectively, defined in~(\ref{eqn:SpectralGap}).
\end{thm}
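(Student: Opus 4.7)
The strategy is to imitate the proof of Theorem~\ref{thm:AdiabaticTheorem} while accommodating the perturbation $G_\kappa(t)$. I would pass to a doubly rotating frame by setting $V_\kappa(t)=D_\kappa(t)^\dag W(t)^\dag U_\kappa(t)$, where $W(t)$ is the adiabatic transporter of~\eqref{eqn:Transporter} and $D_\kappa(t)=\rme^{-\rmi\kappa t\overline{H}_W(t)}$ is the (commutative) dynamical-phase operator of Remark~\ref{rmk:AdThm}. A short computation using $\rmi\dot W=AW$ and $[H_W(s),H_W(s')]=0$ shows that the generator of $V_\kappa$ is
\begin{equation*}
\hat H_\kappa(t)=D_\kappa(t)^\dag W(t)^\dag[G_\kappa(t)-A(t)]W(t)D_\kappa(t).
\end{equation*}
Because $W^\dag P_\ell(t)W=P_\ell(0)$, the operator $\tilde G_{\kappa,Z}(t)=W^\dag G_{\kappa,Z}W=\sum_\ell P_\ell(0)W^\dag G_\kappa WP_\ell(0)$ is block-diagonal with respect to the (time-independent) $P_\ell(0)$ and so commutes with $D_\kappa$; consequently the target unitary $\T\exp(-\rmi\int_0^t[\kappa H_0+G_{\kappa,Z}+A])$ is mapped by the same double rotation to $\overline V(t)=\T\exp(-\rmi\int_0^t\tilde G_{\kappa,Z}(s)\,\d s)$, whose generator is simply $\tilde G_{\kappa,Z}(t)$.

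I would then expose the oscillatory structure of $\hat H_\kappa-\tilde G_{\kappa,Z}$. Splitting $W^\dag G_\kappa W$ into its block-diagonal part $\tilde G_{\kappa,Z}$ and off-block-diagonal remainder, and using $P_\ell W^\dag AWP_\ell=0$ (already exploited in the proof of Theorem~\ref{thm:AdiabaticTheorem}), one obtains
\begin{equation*}
\hat H_\kappa(t)-\tilde G_{\kappa,Z}(t)=\sum_{k\neq\ell}\rme^{\rmi\kappa\sigma_{k,\ell}(t)}B_{k,\ell}(t),\qquad B_{k,\ell}(t)=P_k(0)W(t)^\dag[G_\kappa(t)-A(t)]W(t)P_\ell(0),
\end{equation*}
which has exactly the form handled in the proof of Theorem~\ref{thm:AdiabaticTheorem}, with $A$ replaced by $G_\kappa-A$. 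Integration by parts termwise against $\rme^{\rmi\kappa\sigma_{k,\ell}}$ — allowed by the no-crossing hypothesis~\eqref{eqn:NoLevelCrossing}, which keeps $1/\omega_{k,\ell}$ bounded — produces boundary contributions of order $\|B_{k,\ell}\|/\kappa$ and a remainder controlled by $\dot B_{k,\ell}$ and $\dot\omega_{k,\ell}/\omega_{k,\ell}^2$. Applying Lemma~\ref{lem:ProjectedSum} from Appendix~\ref{app:Ergodic} to collapse the double sum yields a factor $\sqrt m/\eta$, and the identities $\dot{\tilde G}_\kappa=W^\dag(\dot G_\kappa+\rmi[A,G_\kappa])W$ and $\dot{\tilde A}=W^\dag\dot A W$ convert everything back into norms of $G_\kappa$, $\dot G_\kappa$, $A$, and $\dot A$; this reproduces the second bracket in~\eqref{eqn:GeneralizedAdiabaticThm}.

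The main difficulty is that Corollary~\ref{cor:OneMoreThmbar} requires smallness of $\|S_\kappa\|_{\infty,T}(1+\|\hat H_\kappa\|_{1,T}+\|\tilde G_{\kappa,Z}\|_{1,T})$, and here $\|\hat H_\kappa\|_{1,T}\le T(\|G_\kappa\|_{\infty,T}+\|A\|_{\infty,T})$ is no longer uniformly bounded in $\kappa$. Since the integration by parts gives $\|S_\kappa\|_{\infty,T}=O\!\left(\kappa^{-1}[\|G_\kappa\|+\|A\|+T(\|\dot G_\kappa\|+\|\dot A\|+\|A\|\|G_\kappa\|)]\right)$, the full product is dominated by cross-terms such as $\|G_\kappa\|^2/\kappa$ and $\|G_\kappa\|\|\dot G_\kappa\|/\kappa$. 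These are precisely $o(1)$ under the stated hypothesis $\|G_\kappa\|_{\infty,T},\|\dot G_\kappa\|_{\infty,T}=o(\sqrt\kappa)$, so $\sqrt\kappa$ is the critical growth rate that makes the bookkeeping close. Undoing the two rotations then converts the estimate for $\|V_\kappa-\overline V\|$ into the claimed bound~\eqref{eqn:GeneralizedAdiabaticThm}.
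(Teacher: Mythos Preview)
Your proposal is correct and follows essentially the same route as the paper: the same double rotation $V_\kappa=D_\kappa^\dag W^\dag U_\kappa$, the same off-diagonal oscillatory decomposition (the paper's $A_{k,\ell}$ with $A$ replaced by $A-G_\kappa$ is your $-B_{k,\ell}$), the same integration by parts with Lemma~\ref{lem:ProjectedSum}, and the same derivative identity producing the commutator term $[A,G_\kappa]$. Your discussion of why $o(\sqrt\kappa)$ is the critical growth rate is a helpful gloss that the paper leaves implicit in the final bound.
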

\begin{proof}
We prove this in the same way as in the proof of Theorem~\ref{thm:AdiabaticTheorem}.
We go to the adiabatic rotating frame of $W(t)$ and remove the dynamical phases,
\begin{equation}
V_\kappa (t) 
= \rme^{ \rmi  \kappa t \overline{H}_W(t)} W(t)^\dag U_\kappa (t) 
= \T\exp\!\left( -\rmi \int_0^t \d s\, \hat{H}_\kappa (s) \right).
\label{eqn:AdiabaticRotatingW2}
\end{equation}
In the present case, the Hamiltonian $\hat{H}_\kappa (t)$, at variance with that in~\eqref{eq:addec1}, reads
\begin{equation}
\hat{H}_\kappa (t)
= -\rme^{ \rmi  \kappa t \overline{H}_W(t)} W(t)^\dag[A(t)-G_\kappa(t)] W(t) \rme^{ -\rmi  \kappa t \overline{H}_W(t)}
= -\sum_{k, \ell}\rme^{ \rmi  \kappa \sigma_{k,\ell}(t)} A_{k,\ell}(t),
\end{equation}
where $W(t)$, $\overline{H}_W(t)$, and $\sigma_{k,\ell}(t)$ are the same as those in~(\ref{eqn:Transporter}),~(\ref{eq:dynphase}), and~(\ref{eq:addec2}), respectively, while $A_{k,\ell}(t)$ in~(\ref{eq:addec2}) is replaced by
\begin{equation}
A_{k,\ell}(t)
=P_k(0) W(t)^\dag[A(t)-G_\kappa(t)]W(t)  P_\ell(0).
\end{equation}
Now, we consider the integral action
\begin{equation}
S_\kappa (t)
=\int_0^t\d s\,[\hat{H}_\kappa (s)-W(s)^\dag G_{\kappa,Z}(s)W(s)]
=-{\sum_{k,\ell}}'\int_0^t \d s\, \rme^{ \rmi  \kappa \sigma_{k,\ell}(s) } A_{k,\ell}(s).
\end{equation}
We can bound it in the same way as done in Theorem~\ref{thm:AdiabaticTheorem}, but now we have
$\dot{A}_{k,\ell}(t) = P_k(0) W^\dag(t)\{\dot{A}(t)-\dot{G}_\kappa(t)-\rmi[A(t),G_\kappa(t)]\}W(t)  P_\ell(0)$. 
By noting that
\begin{align}
W(t) \rme^{ -\rmi  \kappa t \overline{H}_W(t)}
&\T\exp\!\left(
-\rmi\int_0^t\d s\,W(s)^\dag G_{\kappa,Z}(s) W(s)\right)
\nonumber
\\
={}&\T\exp\!\left(
-\rmi\int_0^t\d s\,[\kappa H_0 (s)+G_{\kappa,Z}(s)+A(s)]
\right),
\end{align}
and $\|G_{\kappa,Z}(t)\|\leq\|G_\kappa(t)\|$, we get the bound~(\ref{eqn:GeneralizedAdiabaticThm}).
\end{proof}
\begin{remark}[Strong-coupling limit with time-dependent Hamiltonians]
\label{rmk:timedepstrong}
An immediate application of Theorem~\ref{thm:genadiabatic} is to the case where a time-dependent perturbation $H_1(t)$ is added to the strong driving $\kappa H_0(t)$ considered in Theorem~\ref{thm:AdiabaticTheorem},
\begin{equation}
U_\kappa (t)=\T\exp\!\left(-\rmi\int_0^t\d s\,[\kappa H_0 (s)+H_1(s)]\right).
\label{eq:4.42}
\end{equation}
This is also regarded as a generalization of the strong-coupling limit proved in Theorem~\ref{thm:StrongCouplingLimit} in Sec.~\ref{sec:strongCoupling} to the case where the Hamiltonians are time-dependent.
In this case, $G_\kappa(t)=H_1(t)$, and one gets
\begin{align}
&\left\|
U_\kappa (t) 
- 
\T\exp\!\left(
-\rmi\int_0^t\d s\,[\kappa H_0 (s)+H_Z(s)+A(s)]
\right)
\right\|
\nonumber
\\
&\quad
\le
\frac{\sqrt{m}}{\kappa\eta}
(
1
+T
\|A\|_{\infty,T}
+2T\|H_1\|_{\infty,T}
)
\nonumber\\
&\qquad
{}\times
\left[
\left(
2+  \frac{\eta'}{\eta} T
\right)
(\|A\|_{\infty,T}+\|H_1\|_{\infty,T})
+ T
 (
 \|\dot{A}\|_{\infty,T} 
 + \|\dot{H}_1\|_{\infty,T} 
 + 2\|A\|_{\infty,T}\|H_1\|_{\infty,T}
)
\right],
\label{eqn:GeneralizedAdiabaticThmH1}
\end{align}
with a time-dependent Zeno Hamiltonian 
\begin{equation}
H_Z(t)=\sum_{\ell=1}^m P_\ell(t)H_1(t)P_\ell(t).
\end{equation}
\end{remark}
The example~\eqref{eq:4.42} represents a strong-coupling implementation of the quantum Zeno dynamics of a time-dependent Hamiltonian, which has potential applications in holonomic quantum computation~\cite{ref:Zeno_unchained}.

\subsection{Comments}
\begin{itemize}
\item
In Secs.~\ref{sec:periodic},~\ref{qubitexample}, and~\ref{sec:genRWA}, we dealt with Hamiltonians of the type
\begin{equation}
H_\kappa (t) = H(\kappa t),
\end{equation}
in particular to discuss the RWA in Sec.~\ref{sec:RWA}.
The dynamics is approximated by an average Hamiltonian
\begin{equation}
\overline{H} 
= \lim_{\kappa\to+ \infty} 
\frac{1}{t}\int_0^t \d s\,  H_\kappa (s) 
= \lim_{\kappa\to+ \infty} 
\frac{1}{t}\int_0^t \d s\,  H(\kappa s) 
= \lim_{\tau \to+ \infty} \frac{1}{\tau} \int_0^\tau \d s\,  H(s),
\end{equation}
which is independent of time.
In the situation where the drive envelope is modulated in time, the system is driven with two different timescales,
\begin{equation}
H_\kappa (t) = H(t, \kappa t).
\label{eq:2times}
\end{equation}
We analyzed such situations in Sec.~\ref{sec:RWAmodulated}, and the effective evolution is generated by a time-dependent Hamiltonian $\overline{H}(t)$ obtained by averaging $H(t, \kappa t)$ over the fast variable,
\begin{equation}
\overline{H}(t) = \lim_{\tau\to+\infty} \frac{1}{\tau} \int_0^\tau \d s\, H(t,s).
\end{equation}

\item
A further generalization is provided in terms of two non-isochronous timescales,
\begin{equation}
H_\kappa (t) = H(t, \kappa \sigma(t) ), 
\end{equation}
with $\dot{\sigma}(t)>0$ for $t\in[0,T]$, so that $\sigma(t)$ is strictly increasing and invertible.
This case, however, is essentially equivalent to the previous one up to a time-reparametrization $s=\sigma(t)$.
Indeed, the Schr\"odinger equation
\begin{equation}
\frac{\d U_\kappa (t)}{\d t} = -\rmi H(t, \kappa \sigma(t) ) U_\kappa (t)
\end{equation}
is equivalent to
\begin{equation}
\frac{\d\widetilde{U}_\kappa (s)}{\d s} = -\rmi \widetilde{H}(s, \kappa s ) \widetilde{U}_\kappa (s),
\end{equation}
with $s=\sigma(t)$ and
\begin{equation}
\widetilde{H}(u, v) = \frac{1}{\dot{\sigma}(\sigma^{-1}(u))} H(\sigma^{-1}(u) , v ).
\end{equation}

\item
The adiabatic theorems proved in Secs.~\ref{sec:AdiabaticTh1} and~\ref{sec:AdiabaticZeno} were reduced to analyzing Hamiltonians of the form
\begin{equation}
H_\kappa (t) = \sum_\ell H_\ell (t, \kappa \sigma_\ell(t) ).
\end{equation}
This cannot be turned into the form~\eqref{eq:2times} by a single time-reparametrization.

\item
The effectiveness of Corollary~\ref{cor:OneMoreThmbar} shows that the introduction of a particular form of $H_\kappa (t)= H(t,\kappa t)$ in terms of a two-variable Hamiltonian $H(u,v)$ (and its average with respect to the fast variable $v$) is something of a red herring.
No particular structure of $H_\kappa (t)$ is in fact required. What is really needed is only that the limit
\begin{equation}
\int_0^t \d s\, H_\kappa (s) \to \int_0^t \d s\, \overline{H} (s), \quad \text{as}\quad \kappa \to+\infty
\end{equation}
exists uniformly in time, as well as a growth condition on $\|H_\kappa \|_{1,T}$ for large $\kappa$.
\end{itemize}

\section{Product Formulas}
\label{sec:productFormulas}
The applications considered so far involve continuous Hamiltonians, but as shown in Appendix~\ref{app:locInt}, our main tool still works for Hamiltonians which are not continuous, as long as they are locally integrable. In this section, we show how this is relevant in proving various product formulas, where the evolution stems from the alternation of noncommuting Hamiltonians.

\subsection{Ergodic-Mean Trotter Formula}
The standard Trotter product formulas require repetitive operations. For instance, for a sequence of $p$ bounded Hamiltonians $(H_1,\ldots, H_p)$, one has~\cite{Suzuki}
\begin{equation}
\label{eq:TPF}
\Bigl(\rme^{-\rmi\frac{t}{n}H_p}\rme^{-\rmi\frac{t}{n}H_{p-1}}\cdots\,\rme^{-\rmi\frac{t}{n}H_1}\Bigr)^n\to \rme^{-\rmi t \sum_{j=1}^p H_{j}},\quad \text{as}\quad n\to+\infty.
\end{equation}
For generic infinite sequences $(H_1, H_2, \ldots{})$, however, it is easy to construct counterexamples to such convergence (see the following Remark ~\ref{rmk:noTrot}). 
Here, we show that, with an additional assumption of existence, an ergodic-mean Trotter formula can be established (see also Ref.~\cite{ref:Bernad}).
This has a variety of applications and many known results can be reproduced on the basis of Theorem~\ref{thm:ATrott}, as we will see in the following subsections (some of the results are even improved and/or new).
\begin{thm}[Ergodic-mean Trotter formula]
\label{thm:ATrott}
Given a sequence of bounded self-adjoint operators $(H_n)_{n\geq 1}$ with $\|H_n\|\le M$, consider the unitary product
\begin{equation}
W_n(t) =
\rme^{-\rmi\frac{t}{n}H_n}\rme^{-\rmi\frac{t}{n}H_{n-1}}\cdots \,\rme^{-\rmi\frac{t}{n}H_1}.
\label{eq:meanTrot}
\end{equation}
Suppose that 
\begin{equation}
\overline{H}_n = \frac{1}{n} \sum_{j=1}^n H_j \to \overline{H}, \quad \text{as}\quad n\to+\infty,
	\label{eq:limkicks1}
\end{equation}
for some bounded self-adjoint operator $\overline{H}$.  
Then,
\begin{equation}
W_n(t)  \to  \rme^{-\rmi t \overline{H}},\quad \text{as}\quad n\to+\infty,
\end{equation}
uniformly for $t$ in compact intervals.
The convergence error is bounded by
\begin{equation}
\| W_n(t) - \rme^{-\rmi t \overline{H}}\| \leq \left(M_n+\frac{2 M}{n}\right)t (1+2tM) ,
\label{eq:Trotbound}
\end{equation}
where
$M_n = \max_{1\leq j\leq n}\{\frac{j}{n}\|\overline{H}_j-\overline{H}\|\}$.
In particular, if one assumes that
\begin{equation}
\|\overline{H}_n-\overline{H}\| \leq \frac{\theta M}{n^\alpha},
\label{eq:convratemean1}
\end{equation}
for some $\theta \ge 0$ and $\alpha>0$ and for all integer $n \geq 1$, then
\begin{equation}
\| W_n(t) - \rme^{-\rmi t \overline{H}}\| \leq \left(\frac{\theta}{n^{\alpha_1}}+\frac{2}{n}\right)tM(1+2tM) ,
\label{eq:thetabound}
\end{equation}
where $\alpha_1 = \min\{\alpha,1\}$.
\end{thm}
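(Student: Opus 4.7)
The plan is to recast the product $W_n(t)$ as the time-$t$ evolution generated by a piecewise-constant Hamiltonian, and then apply the universal bound of Lemma~\ref{lemma:divergence} with $\overline{H}$ as the reference. For fixed $t>0$, introduce the step Hamiltonian $H^{(n)}(s)=H_j$ on $s\in[(j-1)t/n,\,jt/n)$ for $j=1,\ldots,n$; its propagator at time $t$ is exactly $W_n(t)$, while the constant Hamiltonian $\overline{H}$ generates $\rme^{-\rmi t\overline{H}}$. This setup is legitimate because Lemma~\ref{lemma:divergence} requires only local integrability of the generator (see Appendix~\ref{app:locInt}), not continuity.

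The core step is to evaluate the integral action $S_{21}(s)=\int_0^s[H^{(n)}(u)-\overline{H}]\,\d u$ explicitly. Writing $s=(k-1)t/n+\tau$ with $\tau\in[0,t/n]$ gives
$$S_{21}(s) = \frac{(k-1)t}{n}\bigl(\overline{H}_{k-1}-\overline{H}\bigr) + \tau(H_k-\overline{H}),$$
and using $\|H_k-\overline{H}\|\le 2M$ together with $\tau\le t/n$ yields $\|S_{21}\|_{\infty,t}\le t(M_n+2M/n)$. Combined with $\|H^{(n)}\|_{1,t}\le tM$ and $\|\overline{H}\|_{1,t}\le tM$, the bound~\eqref{eq:divergencebound} of Lemma~\ref{lemma:divergence} then produces~\eqref{eq:Trotbound} immediately. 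Uniform convergence $W_n(t)\to\rme^{-\rmi t\overline{H}}$ on compact intervals follows because $M_n\to 0$ by a Cesaro-type argument: the uniform cap $\|\overline{H}_j-\overline{H}\|\le 2M$ handles small indices while the hypothesis $\overline{H}_j\to\overline{H}$ handles large ones.

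For the quantitative bound~\eqref{eq:thetabound}, inserting the rate $\|\overline{H}_j-\overline{H}\|\le \theta M/j^\alpha$ into the definition of $M_n$ gives $(j/n)\|\overline{H}_j-\overline{H}\|\le (\theta M/n)\,j^{1-\alpha}$, which I bound by $\theta M/n$ when $\alpha\ge 1$ (using $j^{1-\alpha}\le 1$) and by $\theta M/n^{\alpha}$ when $\alpha<1$ (using $j^{1-\alpha}\le n^{1-\alpha}$); in either case $M_n\le \theta M/n^{\alpha_1}$, and substituting into~\eqref{eq:Trotbound} yields~\eqref{eq:thetabound}. I do not anticipate a serious obstacle: the only mild subtlety is the bookkeeping on the fractional subinterval, which produces exactly the $2M/n$ correction to what one would otherwise read off at the grid points $s=kt/n$, where $S_{21}$ simply equals $(kt/n)(\overline{H}_k-\overline{H})$.
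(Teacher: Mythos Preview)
Your proposal is correct and follows essentially the same approach as the paper: recast $W_n(t)$ as the propagator of a piecewise-constant Hamiltonian, compute the integral action against $\overline{H}$ by splitting at the grid points (the paper uses the ceiling parametrization $m=\lceil ns/t\rceil$ where you use the floor-type split $s=(k-1)t/n+\tau$, but the resulting bound on $\|S_{21}\|_{\infty,t}$ is identical), and then invoke Lemma~\ref{lemma:divergence}. Your treatment of $M_n\to0$ and of the $\alpha$-rate case also mirrors the paper's.
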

\begin{proof}
The proof is an application of Corollary~\ref{cor:OneMoreThmbar} to the unitary evolution
\begin{equation}
W_n(t) = \T\exp\!\left(-\rmi\int_0^t \d s\, h_n(s)\right)
\end{equation}
generated by the piecewise-constant Hamiltonian
\begin{equation}
h_n(s)
= H_j ,\quad\text{for}
\quad s\in \left[(j-1)\frac{t}{n},\,j\frac{t}{n}\right)
\quad(j=1,\ldots,n).
\end{equation}
We have
\begin{equation}
\|h_n\|_{1,t}\leq tM,\qquad
\|\overline{H}\|_{1,t}\le tM.
\end{equation}
For any $s\in[0,t]$, let $s = (m - r)t/n$, with $m=\lceil ns/t\rceil$ and $r=\lceil ns/t\rceil- ns/t$, where $\lceil x\rceil\in\mathbb{N}$ represents the least integer greater than or equal to $x$. Then,
\begin{align}
S_n(s)
&=\int_0^s\rmd u\,[h_n(u)- \overline{H}]
\nonumber
\\
&=
\frac{t}{n}\sum_{j=1}^{m}
H_j
- \frac{rt}{n} H_m
-s \overline{H}
\nonumber\\
&=\frac{mt}{n}\overline{H}_m - s \overline{H} - \frac{rt}{n} H_m
\vphantom{\sum_{j=1}^{m}}
\nonumber\\
&=s(\overline{H}_m - \overline{H}) +   \frac{rt}{n} (  \overline{H}_m - H_m).
\label{eqn:MeanErgodicAction}
\end{align}
Therefore,
\begin{equation}
\| S_n(s)\| \leq s \bigl\|\overline{H}_{\lceil n s/t \rceil} - \overline{H}\bigr\| + \frac{2t}{n} M ,
\label{eqn:MeanErgodicActionBound}
\end{equation}
whence
\begin{align}
\| S_n\|_{\infty,t} &= \sup_{s \in[0,t]}\| S_n(s)\|
\nonumber\\
&\leq 	t \sup_{\sigma \in[0,1]} \left\{\sigma\bigl\|\overline{H}_{\lceil n\sigma \rceil} - \overline{H}\bigr\|\right\} + \frac{2t}{n} M 
\nonumber\\
&= t \max_{1\leq j\leq n}\left\{\frac{j}{n}\|\overline{H}_j-\overline{H}\| \right\} + \frac{2t}{n} M,
\end{align}
which implies that
\begin{equation}
\| S_n\|_{\infty,t} \to 0, \quad \text{as}\quad n\to+\infty.
\end{equation}
Thus, by Corollary~\ref{cor:OneMoreThmbar}, we  get
\begin{equation}
\| W_n(t) - \rme^{-\rmi t \overline{H}}\| \leq \| S_n\|_{\infty,t} (1+ 2 t M) \to 0 ,
\label{eqn:MeanErgodicConvergence}
\end{equation}
as $n\to+\infty$, with a rate bounded by~\eqref{eq:Trotbound}.

In particular, under the assumption~\eqref{eq:convratemean1}, one gets
\begin{equation}
	M_n = \max_{1\leq j\leq n}\left\{\frac{j}{n}\|\overline{H}_j-\overline{H}\| \right\}
	\leq \max_{1\leq j\leq n}\left\{\frac{j}{n}\frac{\theta M}{j^\alpha} \right\},
\end{equation}
that is 
\begin{equation}
	M_n \leq  \frac{\theta M}{n} \max_{1\leq j\leq n} j^{1-\alpha} = 	\frac{\theta M}{n^{\min\{\alpha,1\}}}.	
	\end{equation}
By plugging it into~\eqref{eq:Trotbound}, one gets
the bound~\eqref{eq:thetabound}.
\end{proof}
See Fig.~\ref{fig:ErgodicMeanTrotter} for a pictorial representation giving the intuition behind the ergodic-mean Trotter formula.
\begin{figure}
\centering
\includegraphics[width=0.9\textwidth]{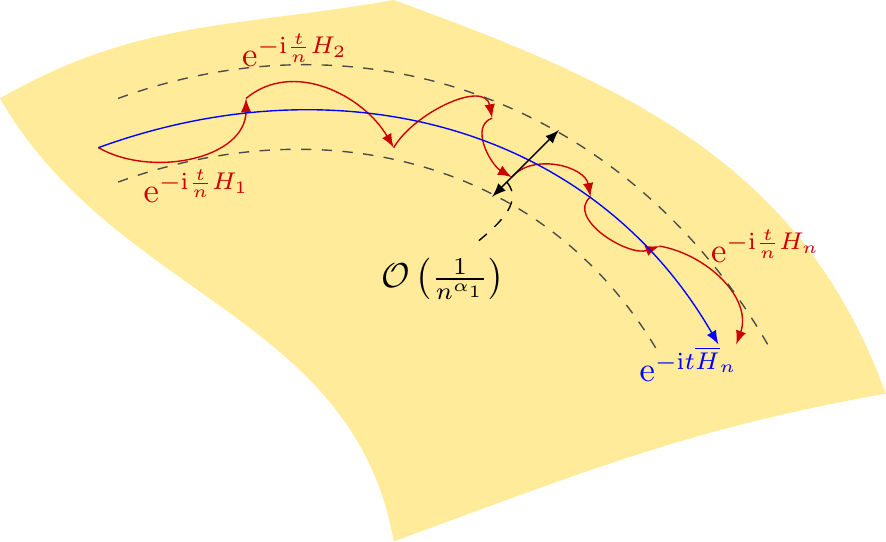}
\caption{Cartoon of product formula. While the true evolution generated by $n$ noncommuting unitaries oscillates quickly, the averaged one is simpler and only $O(1/n^{\alpha_1})$ away from the exact dynamics. For the simple but ubiquitous periodic case, an explicit nonperturbative bound is provided by~(\ref{eq:TPFrate}). }
\label{fig:ErgodicMeanTrotter}
\end{figure}

\begin{remark}[Counterexample]
\label{rmk:noTrot}
	Notice that if the sequence $(H_1, H_2,\ldots)$ does not have an ergodic mean, the product formula $W_n(t)$ in~\eqref{eq:meanTrot} may not converge. 
	Indeed, suppose that there are two subsequences of $(H_n)$, say $(H_{n_k})_{k\geq 1}$ and $(H_{n'_k})_{k\geq 1}$, with different ergodic means, namely,
	\begin{equation}
		\overline{H}_{n_k}\to \overline{H}_1 \quad \text{and}\quad \overline{H}_{n'_k}\to \overline{H}_2, \quad \text{as}\quad k\to+\infty,
	\end{equation}
with $\overline{H}_1\neq \overline{H}_2$.
Then, by Theorem~\ref{thm:ATrott} we get that
\begin{equation}
W_{n_k}(t)  \to  \rme^{-\rmi t \overline{H}_1} \quad \text{and}\quad
W_{n'_k}(t)  \to  \rme^{-\rmi t \overline{H}_2},
\quad \text{as}\quad k\to+\infty,
\end{equation}
so that the product $W_n(t)$ does not converge.

A simple example is given by  $H_n = X$ if $\lfloor \log_{10}(n) \rfloor$ is even, and $H_n = Y$ if $\lfloor \log_{10}(n) \rfloor$ is odd, so that the sequence alternates between $X$ and $Y$ with a slower and slower switching frequency, exponentially decreasing.  
It is easy to see that $\overline{H}_{10^{2k}-1}= \frac{1}{11} X + \frac{10}{11} Y$ for all $k\geq 1$, while $\overline{H}_{10^{2k+1}} \to\frac{1}{11} Y + \frac{10}{11} X$ as $k\to+\infty$, and thus the Trotter product~\eqref{eq:meanTrot} does not converge. 
\end{remark}

\begin{remark}[Trotter product formula]
When the sequence of Hamiltonians $H_j$ is periodic,
\begin{equation}
H_{j+p} = H_j, \quad \text{for all $j \geq 1$,} 
\end{equation}
for some $p\geq 2$, then one gets back to the standard Trotter formula~\eqref{eq:TPF}. 
Set 
\begin{equation}
M= \max_{1\leq j\leq p} \|H_j\|.
\end{equation}
In this case, the limit~\eqref{eq:limkicks1} exists, and
\begin{equation}
\overline{H} = \overline{H}_p = \frac{1}{p} \sum_{j=1}^p H_j.
\end{equation}
In more detail, by noting
\begin{equation}
\sum_{j=1}^nH_j
=
\left\lfloor\frac{n}{p}\right\rfloor\sum_{j=1}^pH_j
+
\sum_{j=1}^{\{\frac{n}{p}\}p}H_j,
\end{equation}
one has
\begin{align}
\overline{H}_n-\overline{H}
&=\left\lfloor\frac{n}{p}\right\rfloor\frac{p}{n}\overline{H}
+\frac{1}{n}\sum_{j=1}^{\{\frac{n}{p}\}p}H_j
-\overline{H}
\nonumber
\\
&=-\left\{\frac{n}{p}\right\}\frac{p}{n}\overline{H}
+\frac{1}{n}\sum_{j=1}^{\{\frac{n}{p}\}p}H_j,
\end{align}
and hence,
\begin{equation}
\|\overline{H}_n - \overline{H}\|
\leq2\left\{\frac{n}{p}\right\} \frac{p}{n} M
	\leq \frac{2p M}{n}, 
\end{equation}
which has the form~\eqref{eq:convratemean1} with $\theta=2 p$ and $\alpha=1$.
Therefore, Theorem~\ref{thm:ATrott} applies, and the Trotter formula~\eqref{eq:TPF} holds with a  rate
\begin{equation}
\Bigl\| 
\Bigl(
\rme^{-\rmi\frac{t}{np}H_p}\rme^{-\rmi\frac{t}{np}H_{p-1}}\cdots \,\rme^{-\rmi\frac{t}{np}H_1}
\Bigr)^n - \rme^{-\rmi t \overline{H}} \Bigr\| \leq \frac{2}{n}\left(1+\frac{1}{p}\right)  t M (1+ 2 t M).
\label{eq:TPFrate}
\end{equation}
This shows the same scaling as the bound derived in Ref.~\cite{Suzuki}.	
This bound can be generalized and improved considering commutator scalings as done in Ref.~\cite{Childs21}.
\end{remark}

\begin{remark}[Eternal Trotterization]
Let us look at the behavior of the Trotter product formula for large times.
Consider, for simplicity, the case $p=2$ and fix the value $\tau>0$ of the time step, so that the total time is $t= n \tau$.
The bound~\eqref{eq:TPFrate} reads
\begin{equation}
	\Bigl\|\Bigl(\rme^{-\rmi \frac{
	\tau}{2}H_2} \rme^{-\rmi \frac{
	\tau}{2}H_1}\Bigr)^n - \rme^{-\rmi n \tau \overline{H}}\Bigr\| \leq 3 \tau M (1+2 n \tau M),
\end{equation}	
with $\overline{H} = \frac{1}{2} (H_1+ H_2)$, that is
\begin{equation}
	\Bigl(\rme^{-\rmi \frac{
	\tau}{2}H_2} \rme^{-\rmi \frac{
	\tau}{2}H_1}\Bigr)^n - \rme^{-\rmi n \tau \overline{H}}= O(\tau M), \quad \text{for}\quad n = O\!\left(\frac{1}{\tau M}\right).
	\label{eq:nOrd}
\end{equation}	
Therefore, for small $\tau$, the Trotter formula is well approximated by the unitary group generated by the average Hamiltonian $\overline{H}$ up to a number $n$ of periods of $O(1/\tau)$, that is up to a total time $t=n\tau$ of $O(1)$.

By using Theorem~\ref{thm:periodic}, we can find a generator of a unitary group which approximates the product formula \emph{uniformly} in $n$. The Trotter formula is obtained at $t=n\tau$ by the propagator
\begin{equation}
	U_\tau(t) = \T\exp\!\left(-\rmi \int_0^t \d s\, h(s)\right),
\end{equation}
generated by the $\tau$-periodic Hamiltonian 
\begin{equation}
	h(t)= \begin{cases}
		H_1, & t\in [0,\frac{\tau}{2}),\\
		H_2, & t\in [\frac{\tau}{2},\tau),
	\end{cases}
	\qquad h(t+\tau)=h(t) .
\end{equation}
The equation~\eqref{eq:coincidetau} reads $\rme^{-\rmi \tau \overline{H}_\tau} = U_\tau(\tau)$, that is
\begin{equation}
	\rme^{-\rmi \tau \overline{H}_\tau} = \rme^{-\rmi \frac{
	\tau}{2}H_2} \rme^{-\rmi \frac{
	\tau}{2}H_1},
\end{equation}
whose solution, for sufficiently small $\tau$, is
\begin{equation}
	\overline{H}_\tau = \frac{\rmi}{\tau} \log\!\left(\rme^{-\rmi \frac{
	\tau}{2}H_2} \rme^{-\rmi \frac{
	\tau}{2}H_1}\right) = \frac{1}{2} (H_1+H_2) + \rmi \frac{\tau}{8} [H_1,H_2] + O(\tau^2 M^3).
\end{equation}	
 The bound~\eqref{eq:divergenceboundpertau} reads
\begin{equation}
	\|U_\tau(t) - \rme^{-\rmi t \overline{H}_\tau}\| \leq \theta \| h\|_{1,\tau} (1+\theta\|h\|_{1,\tau}),
\end{equation}	
for all $t\geq0$, and since
\begin{equation}
	\|h\|_{1,\tau}=\int_0^\tau \d s\,\|h(s)\|= \frac{\tau}{2}(\|H_1\|+\|H_2\|) \leq \tau M,
\end{equation}
one has
\begin{equation}
	\|U_\tau(t) - \rme^{-\rmi t \overline{H}_\tau}\| \leq \theta \tau M (1+\theta \tau M),
\end{equation}	
for all $t\geq0$. 
By setting $t=n\tau$, we finally get the sought eternal bound
\begin{equation}
	\Bigl\|\Bigl(\rme^{-\rmi \frac{
	\tau}{2}H_2} \rme^{-\rmi \frac{
	\tau}{2}H_1}\Bigr)^n - \rme^{-\rmi n \tau \overline{H}_\tau}\Bigr\| \leq \theta \tau M (1+\theta \tau M),
	\quad \forall n\in\mathbb{N}.
\end{equation}

If instead one uses the Hamiltonian
\begin{equation}
	\widetilde{H}_\tau = \frac{1}{2} (H_1+H_2) + \rmi \frac{\tau}{8} [H_1,H_2],
\end{equation}
then its distance from the eternal generator is
\begin{equation}
	\widetilde{H}_\tau = \overline{H}_\tau + O(\tau^2 M^3),
\end{equation}
and the approximation is no more eternal. However, according to Remark~\ref{rmk:largetime}, one gets
\begin{equation}
	\Bigl(\rme^{-\rmi \frac{
	\tau}{2}H_2} \rme^{-\rmi \frac{
	\tau}{2}H_1}\Bigr)^n - \rme^{-\rmi n \tau \widetilde{H}_\tau}= O(\tau M), \quad \text{for}\quad n = O\!\left(\frac{1}{\tau^2 M^2}\right),
\end{equation}	
which is valid for a  number $n$ of periods larger than~\eqref{eq:nOrd}, that is up to a total time $t=n\tau$ of $O(1/\tau)$.
\end{remark}

\subsection{Random Trotter Formula}
\label{sec:RandomTrotter}
The crucial ingredient in the ergodic-mean Trotter formula is the convergence~\eqref{eq:limkicks1} of the arithmetic mean to a limit.
If the sequence of Hamiltonians $H_1, H_2, \ldots$ is drawn at random from a given distribution with finite variance $\sigma_H^2$, by the law of large numbers one gets
that~\eqref{eq:limkicks1} holds in probability:
\begin{equation}
\overline{H}_n 
\stackrel{P}{\longrightarrow} \overline{H}, \quad \text{as}\quad n\to+\infty,
	\label{eq:limkicks1prob}
\end{equation}
where $\overline{H}=\E[H_j]$ is the mean of the distribution, that is
\begin{equation}
	\lim_{n\to+\infty}\Prob\bigl(\| \overline{H}_n - \overline{H}\| > \varepsilon\bigr)= 0, \quad  \text{for all } \varepsilon >0.
\end{equation}
As a consequence, in such a situation one will get 
	\begin{equation}
W_n(t)  \stackrel{P}{\longrightarrow}  \rme^{-\rmi t \overline{H}},\quad \text{as}\quad n\to+\infty,
\end{equation} 
and the theorem will hold in probability. This is the content of the following corollary, which is a probabilistic version of Theorem~\ref{thm:ATrott}, with $\theta = \sigma_H /M$ and $\alpha=1/2$.

Similar stochastic convergence theorems have been known in the mathematical-physics literature.
See for instance Ref.~\cite{ref:RandomTrotterKurtz1972}. 
However, because they are framed in a much more general setting, there are no explicit bounds on the convergence speed provided. 
On the other hand, in the context of quantum information, there has been renewed interest in randomized Trotter evolutions~\cite{ref:RandomCampbellPRL2019,ref:RandomChildsQuantum2019,ref:RandomCampbellQuantum2020,ref:KuengRandomTrotter}. 
In particular, Ref.~\cite{ref:RandomCampbellPRL2019} provides an elegant way to obtain bounds on the average evolution, and Ref.~\cite{ref:KuengRandomTrotter} computes an error bound for the expected difference of the individual realizations to the limit. 
In our framework, we obtain a similar bound (though it is not optimal) with a different proof strategy.

\begin{corol}[Random Trotter formula]
\label{cor:RandomTrott}
Suppose that each operator of a sequence of bounded self-adjoint operators $(H_n)_{n\geq 1}$ is sampled independently and randomly from an identical distribution (i.i.d.\ sampling), and consider the unitary product
\begin{equation}
W_n(t) =
\rme^{-\rmi\frac{t}{n}H_n}\rme^{-\rmi\frac{t}{n}H_{n-1}}\cdots \,\rme^{-\rmi\frac{t}{n}H_1}.
\label{eq:RandomTrot}
\end{equation}
Assume that $\|H_j\|\le M$ for all $j\geq 1$, and the distribution has a finite mean $\overline{H}=\E[H_j]$ and variance $\sigma_H^2=\E\bigl[\|H_j-\overline{H}\|_2^2\bigr]$, where $\|A\|_2=\sqrt{\tr(A^2)}$ is the Hilbert-Schmidt norm.
Then, for any $\varepsilon>0$,
\begin{equation}
\Prob\bigl(\| \overline{H}_n - \overline{H}\| > \varepsilon\bigr) \leq \frac{1}{\varepsilon}\frac{\sigma_H}{n^{1/2}},
\label{eq:weaklaw}
\end{equation}
and
\begin{equation}
\Prob\Bigl(\| W_n(t) - \rme^{-\rmi t \overline{H}}\|>\varepsilon\Bigr)
\le \frac{1}{\varepsilon} \left(\frac{\sigma_H}{M} \frac{1}{n^{1/2}}+\frac{2}{n}\right)tM(1+2tM),
\label{eq:randombound}
\end{equation}
for $t\ge0$.
\end{corol}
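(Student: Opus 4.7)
The plan is the following. The weak law~(\ref{eq:weaklaw}) is a standard Chebyshev-type bound: setting $\Delta_j=H_j-\overline{H}$, independence and the zero-mean property give $\E[\tr(\Delta_i\Delta_j)]=0$ for $i\ne j$, hence the variance identity
\begin{equation*}
\E[\|\overline{H}_n-\overline{H}\|_2^2]=\frac{1}{n^2}\sum_{j=1}^n\E[\|\Delta_j\|_2^2]=\frac{\sigma_H^2}{n}.
\end{equation*}
Combining $\|\cdot\|\le\|\cdot\|_2$ with Jensen's inequality gives $\E[\|\overline{H}_n-\overline{H}\|]\le\sigma_H/\sqrt{n}$, and Markov's inequality yields~(\ref{eq:weaklaw}).

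For~(\ref{eq:randombound}), the natural route is to apply Markov's inequality to the deterministic estimate~(\ref{eq:Trotbound}) of Theorem~\ref{thm:ATrott}, which holds pointwise in the sample:
\begin{equation*}
\|W_n(t)-\rme^{-\rmi t\overline{H}}\|\le\Bigl(M_n+\frac{2M}{n}\Bigr)t(1+2tM),\qquad M_n=\max_{1\le j\le n}\frac{j}{n}\|\overline{H}_j-\overline{H}\|.
\end{equation*}
Taking expectations and then applying Markov reduces the problem to bounding $\E[M_n]$. Writing $S_j=\sum_{k=1}^j\Delta_k$, one has $M_n=(\max_{1\le j\le n}\|S_j\|)/n$, and since $(S_j)$ is an operator-valued martingale with respect to the natural filtration, $(\|S_j\|_2)$ is a non-negative submartingale. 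Doob's $L^2$ maximal inequality combined with the variance identity above gives $\E[\max_{1\le j\le n}\|S_j\|_2]\le(\E[\max_{1\le j\le n}\|S_j\|_2^2])^{1/2}\le 2\sqrt{n}\,\sigma_H$, and therefore $\E[M_n]\le 2\sigma_H/\sqrt{n}$, which substituted back produces~(\ref{eq:randombound}) up to an overall constant factor.

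The main obstacle is precisely that the deterministic bound of Theorem~\ref{thm:ATrott} involves the \emph{running} maximum $M_n$ of the partial averages rather than the terminal one $\overline{H}_n-\overline{H}$, so the variance identity alone does not close the argument: one must convert a tail bound on a single partial average into one on the whole max, and Doob's maximal inequality is the natural tool. Once this is in place, the remainder is routine bookkeeping, and any sharpening of the constant would require either a refined deterministic inequality exploiting cancellations across the partial sums or a direct concentration argument tailored to the product $W_n(t)$.
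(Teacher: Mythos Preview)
Your proof of~(\ref{eq:weaklaw}) matches the paper's exactly. For~(\ref{eq:randombound}) your argument via Doob's $L^2$ maximal inequality is correct and yields the bound with an extra factor~$2$ in front of the $\sigma_H/\sqrt{n}$ term, as you note.

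The paper obtains the stated constant by a different and more elementary route that sidesteps the running maximum entirely. Rather than invoking the final bound~(\ref{eq:Trotbound}) of Theorem~\ref{thm:ATrott}, which involves $\|S_n\|_{\infty,t}$, it returns to the intermediate inequality~(\ref{eqn:IntermediateBound}) of Lemma~\ref{lemma:divergence},
\[
\|W_n(t)-\rme^{-\rmi t\overline{H}}\|\le\|S_n(t)\|+2M\int_0^t\d s\,\|S_n(s)\|,
\]
which trades the supremum for an integral. Expectation then commutes with the integral by Fubini, and for each fixed $s$ the pointwise estimate $\|S_n(s)\|\le s\,\|\overline{H}_{\lceil ns/t\rceil}-\overline{H}\|+2tM/n$ from~(\ref{eqn:MeanErgodicActionBound}), combined with the variance identity, gives $\E[\|S_n(s)\|]\le t\sigma_H/\sqrt{n}+2tM/n$ directly. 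No maximal inequality is needed, and the constant comes out exactly as in~(\ref{eq:randombound}). Your Doob argument is more robust---it would survive, for instance, dependent increments satisfying only a martingale condition---while the paper's trick is sharper here because it exploits the specific integral structure of the bound in Lemma~\ref{lemma:divergence}; this is precisely the ``refined deterministic inequality'' you anticipated in your last sentence.
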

\begin{proof}
The first bound~\eqref{eq:weaklaw} is the weak law of large numbers for i.i.d.\ random operators. Indeed, we get
\begin{align}
\Bigl(
\E\bigl[\|\overline{H}_n-\overline{H}\|\bigr]
\Bigr)^2
&\le
\E\bigl[\|\overline{H}_n-\overline{H}\|^2\bigr]
\vphantom{\frac{1}{n^2}}
\nonumber\\
&\le
\E\bigl[\|\overline{H}_n-\overline{H}\|_2^2\bigr]
\vphantom{\frac{1}{n^2}}
\nonumber\\
&=
\E\bigl[\tr\{(\overline{H}_n-\overline{H})^2\}\bigr]
\vphantom{\frac{1}{n^2}}
\nonumber\\
&=
\frac{1}{n^2}
\sum_{i=1}^n \sum_{j=1}^n\E\bigl[\tr\{(H_i-\overline{H}) (H_j-\overline{H})\}\bigr]
\nonumber
\displaybreak[0]
\\
&=
\frac{1}{n^2}
\sum_{j=1}^n\E\bigl[\tr\{(H_j-\overline{H})^2\}\bigr]
\nonumber
\displaybreak[0]
\\
&=
\frac{1}{n}
\E\bigl[\|H_j-\overline{H}\|_2^2\bigr]=
\frac{1}{n}
\sigma_H^2,
\end{align}
that is
\begin{equation}
\E\bigl[\|\overline{H}_n-\overline{H}\|\bigr] \leq \frac{\sigma_H}{n^{1/2}},
\label{eqn:variance}
\end{equation}
where the i.i.d.\ property
\begin{equation}
\E\bigl[(H_i-\overline{H})(H_j-\overline{H})\bigr]=\delta_{ij}\E\bigl[(H_j-\overline{H})^2\bigr]
\end{equation}
has been used. By Markov's inequality, for any $\varepsilon>0$ we get
\begin{equation}
\Prob\bigl(\| \overline{H}_n - \overline{H}\| > \varepsilon\bigr) \leq 
\frac{1}{\varepsilon} \E\bigl[\|\overline{H}_n-\overline{H}\|\bigr] \leq
\frac{1}{\varepsilon}\frac{\sigma_H}{n^{1/2}},	
\end{equation} 
that is inequality~\eqref{eq:weaklaw}.

Now, we proceed as in Theorem~\ref{thm:ATrott} and  bound the action $S_n(s)$ by~\eqref{eqn:MeanErgodicActionBound}, i.e.,
\begin{equation}
\|S_n(s)\|
\leq s\bigl\|\overline{H}_{\lceil ns/t \rceil} - \overline{H}\bigr\| + \frac{2t}{n} M ,
\end{equation}
for $s\in[0,t]$. By taking the expectation value 
and using~\eqref{eqn:variance}, we get
\begin{align}
\E\bigl[
\|S_n(s)\|
\bigr]
&\le
s\E\bigl[\bigl\|\overline{H}_{\lceil ns/t \rceil} - \overline{H}\bigr\|\bigr] + \frac{2t}{n} M
\nonumber\\
&\le\frac{s}{\lceil ns/t\rceil^{1/2}}\sigma_H + \frac{2t}{n} M
\nonumber\\
&\le\frac{s^{1/2}t^{1/2}}{n^{1/2}}\sigma_H + \frac{2t}{n} M
\nonumber\\
&\le\frac{t}{n^{1/2}}\sigma_H + \frac{2t}{n} M.
\end{align}
By the intermediate bound~\eqref{eqn:IntermediateBound} in Lemma~\ref{lemma:divergence}, we have
\begin{equation}
\|W_n(t) - \rme^{-\rmi t \overline{H}}\|
\le\|S_n(t)\| + 2M\int_0^t \d s\,\|S_n(s)\|,
\label{eqn:IntermediateBoundRandom}
\end{equation}
whence, by taking the expectation value,
\begin{align}
\E\Bigl[
\|W_n(t) - \rme^{-\rmi t \overline{H}}\|
\Bigr]
&\le\E\bigl[\|S_n(t)\|\bigr] + 2M\int_0^t \d s\E\bigl[\|S_n(s)\|\bigr]
\nonumber\\
&\leq \left(\frac{\sigma_H}{n^{1/2}}+\frac{2 M}{n}\right)t (1+2tM).
\end{align}
The bound~\eqref{eq:randombound} follows by Markov's inequality.
\end{proof}

\subsection{Frequent Unitary Kicks}
Consider the evolution of a system generated by a time-independent Hamiltonian $H$ interrupted by a sequence of $n+1$ unitary kicks $U_1, \ldots, U_{n+1}$,
\begin{equation}
	 W_n(t) =
	 U_{n+1}\rme^{-\rmi\frac{t}{n}H} U_n\rme^{-\rmi\frac{t}{n}H}\cdots\,U_2\rme^{-\rmi\frac{t}{n}H} U_1 .
\label{eq:multikicks}
\end{equation}
The evolution lasts for a time $t$ and there is a kick every time $t/n$. We are interested in the evolution for large $n$.

We can rewrite~\eqref{eq:multikicks} in a more convenient form
\begin{align}
W_n(t)
&=V_{n+1} (V_n^\dag\rme^{-\rmi\frac{t}{n}H} V_n)\cdots (V_2^\dag\rme^{-\rmi\frac{t}{n}H}V_2) 	 (V_1^\dag\rme^{-\rmi\frac{t}{n}H}V_1)
\nonumber\\
&=V_{n+1}\rme^{-\rmi\frac{t}{n}V_n^\dag H V_n} \cdots\,\rme^{-\rmi\frac{t}{n}V_2^\dag H V_2} \rme^{-\rmi\frac{t}{n}V_1^\dag H V_1} ,
\end{align}
where $V_j=U_j\cdots U_1$ ($j=1,\ldots,n+1$), which, up to the last factor $V_{n+1}$, reduces to an isospectral case of the product formula~\eqref{eq:meanTrot}. 
Therefore, for large $n$ the evolution  can be approximated by the unitary group generated by the average Hamiltonian, namely,
\begin{equation}
W_n(t)\approx V_{n+1} \rme^{-\rmi t \overline{H}}.
\end{equation}
\begin{corol}[Frequent unitary kicks]
\label{thm:Ukicks}
Given a bounded self-adjoint operator $H$ and a sequence of unitaries $(U_n)_{n\geq 1}$, consider the unitary product
\begin{equation}
W_n(t) =
U_{n+1}\rme^{-\rmi\frac{t}{n}H}U_n\rme^{-\rmi\frac{t}{n}H}\cdots \,U_2\rme^{-\rmi\frac{t}{n}H}U_1 .
\end{equation}
Let $V_n = U_n \cdots U_1$ for all integer $n\geq 1$, and suppose that 
\begin{equation}
\overline{H}_n = \frac{1}{n} \sum_{j=1}^n V_j^\dag H V_j \to \overline{H}, \quad \text{as}\quad n\to+\infty,
\label{eq:limkicks}
\end{equation}
for some bounded self-adjoint operator $\overline{H}$.  Then,
\begin{equation}
W_n(t)  - V_{n+1} \rme^{-\rmi t \overline{H}}\to 0,\quad \text{as}\quad n\to+\infty ,
\end{equation}
uniformly for $t$ in compact intervals.
Moreover, if one assumes that
\begin{equation}
\|\overline{H}_n-\overline{H}\| \leq \frac{\theta}{n^\alpha} \|H\|,
	\label{eq:convratemean}
\end{equation}
for some $\theta \ge 0$ and $\alpha>0$ and for all integer $n \geq 1$, then
\begin{equation}
\| W_n(t) - V_{n+1} \rme^{-\rmi t \overline{H}}\| \leq \left(\frac{\theta}{n^{\alpha_1}}+\frac{2}{n}\right) t \|H\| (1+ 2 t \|H\|) ,
\label{eq:thetabound2}
\end{equation}
with $\alpha_1 = \min\{\alpha,1\}$.
\end{corol}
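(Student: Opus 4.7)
The plan is to reduce Corollary~\ref{thm:Ukicks} directly to the ergodic-mean Trotter formula of Theorem~\ref{thm:ATrott}. The key step, already indicated just before the statement, is the telescoping identity
\begin{equation*}
	W_n(t) = V_{n+1}\,\rme^{-\rmi\frac{t}{n}V_n^\dag H V_n}\,\rme^{-\rmi\frac{t}{n}V_{n-1}^\dag H V_{n-1}}\cdots\,\rme^{-\rmi\frac{t}{n}V_1^\dag H V_1}.
\end{equation*}
First I would verify this by writing the right-hand side as $V_{n+1}\prod_j (V_j^\dag \rme^{-\rmi\frac{t}{n}H} V_j)$, using the isospectral identity $\rme^{-\rmi(t/n)V_j^\dag H V_j} = V_j^\dag \rme^{-\rmi(t/n)H} V_j$, and then observing that the adjacent factors telescope via $V_j V_{j-1}^\dag = U_j$ (with $V_1 = U_1$), producing exactly the original product $U_{n+1}\rme^{-\rmi\frac{t}{n}H}U_n\cdots\rme^{-\rmi\frac{t}{n}H}U_1$.

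With this identity in hand, the product on the right is precisely of the form treated by Theorem~\ref{thm:ATrott}, applied to the sequence of bounded self-adjoint operators $H_j := V_j^\dag H V_j$. Because conjugation by a unitary preserves the spectral norm, one has $\|H_j\| = \|H\|$ uniformly in $j$, so the uniform bound required by Theorem~\ref{thm:ATrott} holds with $M = \|H\|$. The convergence hypothesis~\eqref{eq:limkicks} is exactly~\eqref{eq:limkicks1} for this sequence, and the rate assumption~\eqref{eq:convratemean} coincides with~\eqref{eq:convratemean1} after the identification $M=\|H\|$. Theorem~\ref{thm:ATrott} then yields
\begin{equation*}
	\bigl\|\rme^{-\rmi\frac{t}{n}V_n^\dag H V_n}\cdots\,\rme^{-\rmi\frac{t}{n}V_1^\dag H V_1} - \rme^{-\rmi t \overline{H}}\bigr\| \leq \left(\frac{\theta}{n^{\alpha_1}}+\frac{2}{n}\right) t \|H\| (1+ 2 t \|H\|),
\end{equation*}
and the final bound~\eqref{eq:thetabound2} follows by left-multiplying the operator inside the norm by the unitary $V_{n+1}$, which leaves the spectral norm invariant.

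The only genuine bookkeeping is the telescoping identity in the first step; once it is in place, everything reduces to an immediate invocation of Theorem~\ref{thm:ATrott}, and the qualitative convergence uniform on compact $t$-intervals follows from the same bound. There is no substantive new obstacle, which is precisely why it is natural to state this result as a corollary of the ergodic-mean Trotter formula rather than to reprove it from the universal bound (Lemma~\ref{lemma:divergence}) directly.
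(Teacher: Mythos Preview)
Your proposal is correct and matches the paper's proof essentially verbatim: the paper also reduces the corollary to Theorem~\ref{thm:ATrott} via the telescoping identity (stated just before the corollary), sets $H_j = V_j^\dag H V_j$ with $M = \|H\|$, and notes that the statement then follows immediately. There is nothing to add.
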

\begin{proof}
Choosing $H_n=V_n^\dagger H V_n$ in Theorem~\ref{thm:ATrott}, we have $M=\|H\|$, and the statement follows immediately.
\end{proof}

\subsection{Dynamical Decoupling}
Let us assume that the sequence of unitaries $V_j$ is periodic,
\begin{equation}
V_{j+p} = V_{j}, \quad \text{for all $j\geq 1$,}
\end{equation}
for some $p\ge2$. This happens if
\begin{equation}
U_{p+1}= 
U_2^\dag U_3^\dag\cdots U_p^\dag,
\end{equation}
and
\begin{equation}
U_{j+p} = U_j, \quad \text{for all $j\geq 2$.}
\end{equation}
We say that $(V_1,\ldots,V_p)$ is a cycle of kicks. In such a case, the limit~\eqref{eq:limkicks} exists, \begin{equation}
\overline{H} = \overline{H}_p = \frac{1}{p} \sum_{j=1}^p V_j^\dag H V_j,
\end{equation}
and one has
\begin{equation}
\|\overline{H}_n - \overline{H}\|\leq 2\left\{\frac{n}{p}\right\} \frac{p}{n} \|H\|
	\leq \frac{2p}{n}\|H\|,
\end{equation}
which has the form~\eqref{eq:convratemean} with $\theta=2 p$ and $\alpha=1$.
Therefore, Corollary~\ref{thm:Ukicks} applies, and the evolution subjected to $n$ cycles of $p$ kicks converges to the evolution $\rme^{-\rmi t \overline{H}}$ as $n$ increases, with a rate
\begin{equation}
\Bigl\| \Bigl( V_p^\dag\rme^{-\rmi\frac{t}{n p}H} V_p\cdots 	V_1^\dag\rme^{-\rmi\frac{t}{n p}H}V_1 \Bigr)^n -  \rme^{-\rmi t \overline{H}}\Bigr\| \leq \frac{2}{n}\left(1+\frac{1}{p}\right)t\|H\|(1+ 2 t \|H\|).
\label{eq:DD}
\end{equation}
This bound is exponentially better in $t$ than a previous bound~\cite{ref:unity2}.

The method of dynamical decoupling~\cite{ref:DynamicalDecoupling-ViolaLloydPRA1998,ref:DynamicalDecoupling-ViolaKnillLloyd-PRL1999,ref:BangBang-VitaliTombesi-PRA1999,ref:DynamicalDecoupling-Zanardi-PLA1999,ref:BangBang-Duan-PLA1999,ref:DynamicalDecoupling-Viola-PRA2002,ref:BangBang-Uchiyama-PRA2002} fits in this formalism and has important applications in quantum control. Consider for instance a $d$-dimensional quantum system coupled to an environment that induces decoherence on the quantum system. The Hilbert space of the total system is a product $\mathcal{H}_S\otimes\mathcal{H}_E$, and the system-environment coupling is represented by the sum of interaction Hamiltonians of the form $h \otimes H_E$. The aim of the dynamical decoupling is to suppress such detrimental interactions by rapidly rotating the system around different axes with a cycle of unitary kicks $V_j=v_j\otimes1$ on the system with the property 
\begin{equation}
\frac{1}{p}\sum_{j=1}^p v_j^\dag h v_j = \frac{1}{d}\tr h
\label{eqn:GroupAverage}
\end{equation} 
for every operator $h$ on $\mathcal{H}_S$. Such cycles of unitaries average out the unwanted interactions ($\overline{h \otimes H_E}=0$ if $h$ is traceless, otherwise it yields just a Hamiltonian of the environment), and the evolutions of the system and of the environment are decoupled.

Finally, it is worth noticing that random dynamical decoupling\cite{ref:RandomDD-Viola2005,ref:RandomDD-Santos2006,ref:RandomDD-Viola2006,ref:RandomDD-Santos2008,ref:RandomDD-HillierArenzBurgarth2015,ref:RandomDD-Alex2022}, in which the unitaries $V_j$ are randomly chosen (without periodicity), can also be treated by our method, as an application of the random Trotter formula proved in Corollary~\ref{cor:RandomTrott}.

\subsection{Bang-Bang Control}\label{sec:BangBang}
The dynamical decoupling mentioned in the previous section intends to decouple the evolutions of systems interacting with each other, by applying a variety of unitaries to average out the interactions through~(\ref{eqn:GroupAverage}).
It can be regarded as a manifestation of the quantum Zeno dynamics~\cite{ref:BBZeno,ref:unity2}: the Hamiltonian of the coupled systems is projected by the group average~(\ref{eqn:GroupAverage}), and the systems evolve within subspaces (Zeno subspaces) according the projected Hamiltonian (Zeno Hamiltonian)~\cite{ref:QZS,ref:PaoloSaverio-QZEreview-JPA}.
To induce the quantum Zeno dynamics, it is actually enough to repeat a \emph{single} fixed unitary kick~\cite{ref:BBZeno}, instead of applying several different unitaries to cover a complete unitary group.
This situation is a special case of Corollary~\ref{thm:Ukicks}, and an explicit error bound is available (see also Ref.~\cite{ref:Bernad}).
The limit evolution is generated by a Zeno Hamiltonian $H_Z$ defined in the same way as the one~(\ref{eqn:ZenoHcont}) induced by the strong-coupling limit proved in Theorem~\ref{thm:StrongCouplingLimit} in Sec.~\ref{sec:strongCoupling}~\cite{ref:BBZeno,ref:ControlDecoZeno,ref:PaoloSaverio-QZEreview-JPA,ref:QZEExp-Schafer:2014aa,ref:unity2,ref:ProductFormulaTaylor}.
\begin{corol}[Fixed kick]
\label{cor:UnitaryKick}
Let $H$ be bounded and self-adjoint, and let $U$ be a unitary operator with the finite spectral representation
\begin{equation}
U=\sum_{\ell=1}^m\rme^{-\rmi\phi_\ell}P_\ell,
\end{equation}
where $\{\rme^{-\rmi\phi_\ell}\}$ are the $m$ distinct eigenvalues of $U$.
Let 
\begin{equation}
\eta = \mathop{\min_{k,\ell}}_{k\neq\ell}| \rme^{-\rmi\phi_k } - \rme^{-\rmi\phi_\ell }|= 2 \mathop{\min_{k,\ell}}_{k\neq\ell}\left|\sin\frac{\phi_k-\phi_\ell}{2}\right|
\end{equation}
be the minimal spectral gap of $U$, and \begin{equation}
H_Z= \sum_{\ell=1}^m
P_\ell HP_\ell
\end{equation}
be the Zeno Hamiltonian. Then,
\begin{equation}
\Bigl\|
\Bigl(U\rme^{-\rmi\frac{t}{n}H}\Bigr)^n
-
U^n \rme^{-\rmi tH_Z}
\Bigr\|
\le
\frac{2}{n}
\left(
\frac{
\sqrt{m}
}{\eta}+1 
\right) 
t
\|H\|
(
1
+
2t\|H\|
),
\label{eqn:BoundUnitaryKick}
\end{equation}
for all integer $n\geq 1$ and for all $t\geq0$.
\end{corol}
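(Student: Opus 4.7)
The plan is to reduce the statement to an application of Corollary~\ref{thm:Ukicks} by choosing the kicking sequence $U_j = U$ for every $j \ge 1$. With this choice $V_j = U_j\cdots U_1 = U^j$, and
\begin{equation}
W_n(t)=U\rme^{-\rmi\frac{t}{n}H}U\rme^{-\rmi\frac{t}{n}H}\cdots U\rme^{-\rmi\frac{t}{n}H}\, U = \Bigl(U\rme^{-\rmi\frac{t}{n}H}\Bigr)^n U,
\end{equation}
so that $(U\rme^{-\rmi(t/n)H})^n = W_n(t)U^\dag$. This last rewriting, combined with unitary invariance of the spectral norm, will let us transfer the bound on $\|W_n(t)-V_{n+1}\rme^{-\rmi t\overline H}\|$ provided by Corollary~\ref{thm:Ukicks} into the bound we want, \emph{provided} $[U,\overline H]=0$.

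The second step is to compute the ergodic mean and identify $\overline H$. Using the spectral resolution $U=\sum_\ell \rme^{-\rmi\phi_\ell}P_\ell$, a direct expansion yields
\begin{equation}
\overline{H}_n
= \frac{1}{n}\sum_{j=1}^n U^{-j}HU^j
= \sum_{k,\ell} \biggl(\frac{1}{n}\sum_{j=1}^n \rme^{\rmi j(\phi_k-\phi_\ell)}\biggr) P_kHP_\ell.
\end{equation}
The diagonal ($k=\ell$) contribution reproduces the Zeno Hamiltonian $H_Z=\sum_\ell P_\ell HP_\ell$, while each off-diagonal coefficient is a geometric sum bounded by $2/(n|1-\rme^{\rmi(\phi_k-\phi_\ell)}|) = 2/(n|\rme^{-\rmi\phi_k}-\rme^{-\rmi\phi_\ell}|)\le 2/(n\eta)$. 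Invoking Lemma~\ref{lem:ProjectedSum} (the same projector-sum bound used in the proof of Theorem~\ref{thm:StrongCouplingLimit}) then gives
\begin{equation}
\|\overline{H}_n - H_Z\|\le \frac{2\sqrt{m}}{n\eta}\|H\|,
\end{equation}
which is exactly the decay condition~\eqref{eq:convratemean} with $\theta=2\sqrt{m}/\eta$ and $\alpha=1$, so in particular $\alpha_1=1$.

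In the third step I apply Corollary~\ref{thm:Ukicks} with $\overline H=H_Z$ to obtain
\begin{equation}
\|W_n(t)-U^{n+1}\rme^{-\rmi tH_Z}\|\le \Bigl(\tfrac{2\sqrt{m}/\eta}{n}+\tfrac{2}{n}\Bigr)t\|H\|(1+2t\|H\|)
=\tfrac{2}{n}\Bigl(\tfrac{\sqrt{m}}{\eta}+1\Bigr)t\|H\|(1+2t\|H\|).
\end{equation}
Now I use the key commutativity $[U,H_Z]=0$, which is immediate from the fact that $H_Z$ is block-diagonal in the eigenspaces $\{P_\ell\}$ of $U$; it implies $U^{n+1}\rme^{-\rmi tH_Z}=U^n\rme^{-\rmi tH_Z}\,U$. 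Hence, using $W_n(t)=(U\rme^{-\rmi(t/n)H})^n U$ and multiplying on the right by $U^\dag$,
\begin{equation}
\Bigl\|(U\rme^{-\rmi(t/n)H})^n - U^n\rme^{-\rmi tH_Z}\Bigr\|
=\|W_n(t)-U^{n+1}\rme^{-\rmi tH_Z}\|,
\end{equation}
and the desired bound~\eqref{eqn:BoundUnitaryKick} follows.

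The most delicate point is really just bookkeeping: making sure that the factor $V_{n+1}=U^{n+1}$ produced by Corollary~\ref{thm:Ukicks} can be absorbed correctly, which hinges on the commutativity $[U,H_Z]=0$. Everything else is a direct application of already-established results, with Lemma~\ref{lem:ProjectedSum} playing the same role as in Theorem~\ref{thm:StrongCouplingLimit} to control the off-diagonal remainder.
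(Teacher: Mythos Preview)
Your proof is correct and follows essentially the same route as the paper: reduce to Corollary~\ref{thm:Ukicks}, verify the ergodic-mean rate $\|\overline H_n-H_Z\|\le 2\sqrt{m}\|H\|/(n\eta)$ via the projector-sum bound (which the paper packages as Lemma~\ref{lem:DiscreteErgodic}), and read off $\theta=2\sqrt m/\eta$, $\alpha=1$. The one difference is cosmetic: the paper takes $U_1=1$ and $U_j=U$ for $j\ge 2$, so that $V_j=U^{j-1}$, $V_{n+1}=U^n$, and $W_n(t)=(U\rme^{-\rmi(t/n)H})^n$ directly---no commutativity argument needed. Your choice $U_j=U$ for all $j$ shifts the indices by one and produces $V_{n+1}=U^{n+1}$, which you then repair using $[U,H_Z]=0$; this works, but the paper's index convention sidesteps the repair entirely.
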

\begin{proof}
This is an application of Corollary~\ref{thm:Ukicks}
with $U_1=1$ and $U_j=U$ for $j=2,3,\ldots,n+1$, and thus $V_1=1$ and $V_j=U^{j-1}$ for $j=2,3,\ldots,n+1$. By Lemma~\ref{lem:DiscreteErgodic} in Appendix~\ref{app:Ergodic}, we get the ergodic mean
\begin{equation}
\overline{H}_n=\frac{1}{n}\sum_{j=0}^{n-1}
{U^\dag}^j H U^j \to\sum_\ell P_\ell  H P_\ell,\quad \text{as}\quad n\to+\infty, 
\end{equation}
that is $\overline{H} = H_Z$. Moreover,
\begin{equation}
\| \overline{H}_n - H_Z \| \leq \frac{2\sqrt{m}}{\eta n}
\|H\|,
\end{equation}
which has the form~\eqref{eq:convratemean} with $\theta=2 \sqrt{m}/\eta$ and $\alpha=1$,
and~\eqref{eq:thetabound2} gives the thesis.
\end{proof}
The bound~(\ref{eqn:BoundUnitaryKick}) is tighter by a factor $\log n$ than the bound derived in Ref.~\cite{ref:ProductFormulaTaylor}.

\subsection{A Generalized Trotter Formula}
The strong-coupling limit and the bang-bang control, analyzed in Secs.~\ref{sec:strongCoupling} and~\ref{sec:BangBang}, respectively, are both manifestations of the quantum Zeno dynamics. The connection between these two quite different situations can be made more explicit by writing the unitary kicks in the bang-bang evolution in the form $U=\rme^{-\rmi tH_0}$ (here, $t$ is some fixed time). Then, the strong-coupling limit~\eqref{eqn:StronCouplingBound} and the bang-bang limit~\eqref{eqn:BoundUnitaryKick} respectively represent the following approximations,
\begin{align}
\rme^{-\rmi t(\kappa H_0 +H_1)} &\approx \rme^{-\rmi t(\kappa H_0  + H_Z)},  \quad \text{as}\quad \kappa \to +\infty,
\\
\left(\rme^{-\rmi t H_0 }\rme^{-\rmi\frac{t}{n}H_1}\right)^n
&\approx
\rme^{-\rmi t(n  H_0 +H_Z)},  \quad \text{as}\quad n\to +\infty.
\end{align}
The right-hand sides of the these equations coincide by identifying $\kappa= n$, which implies that the left-hand sides must be approximately equal by this identification. In other words, we are interested in the validity of the following Trotter-like formula,
\begin{equation}
\left(\rme^{-\rmi \frac{t}{n}\kappa H_0}\rme^{-\rmi \frac{t}{n}H_1}\right)^n\approx\rme^{-\rmi t(\kappa H_0+H_1)}, \quad \text{as}\quad n\to +\infty,
\end{equation}
where the coupling strength $\kappa$ is allowed to grow with the number $n$ of Trotter steps. 
This relation was discussed in Ref.~\cite{ref:alphaTrot}, where it was found that the error can be bounded, but the bound grows with $\kappa$. Here, we improve this result using Corollary~\ref{cor:OneMoreThmbar}.
\begin{thm}[Generalized Trotter formula]
\label{GTF}
Let $H_0$ be a bounded self-adjoint operator with the finite spectral representation 
\begin{equation}
H_0=\sum_{\mu=1}^m \lambda_\mu P_\mu,
\label{eqn:SpectralRepHGTF}
\end{equation}  
and denote with
\begin{equation}
\eta = \mathop{\min_{\mu,\nu}}_{\mu\neq \nu} |\lambda_\mu-\lambda_\nu|
\end{equation}
the minimal spectral gap of $H_0$. Assume that 
\begin{equation}
\kappa \le   \frac{\theta}{\eta t} n, 
\label{eq:Kbound}
\end{equation}
for some $0<\theta < 2\pi$. Then, 
\begin{equation}
\label{eq:epsilon(n)0}
\left(\rme^{-\rmi \frac{t}{n}\kappa H_0}\rme^{-\rmi \frac{t}{n}H}\right)^n-\rme^{-\rmi t(\kappa H_0+H_1)}\rightarrow 0,
\end{equation}
as $n\to+\infty$. In particular, one has
\begin{equation}
\label{eq:epsilon(n)}
\left\|\left(
\rme^{-\rmi \frac{t}{n}\kappa H_0}\rme^{-\rmi \frac{t}{n}H}
\right)^n
-\rme^{-\rmi t(\kappa H_0+H_1)}\right\|
\le \frac{1}{n}[\sqrt{m}\,g(\theta)+2] t\|H_1\|(1+2t\|H_1\|),
\end{equation}
where 
$g(x)= 2 |1+\rmi x -\rme^{\rmi x}|/|(\rme^{\rmi x} -1)x|\leq 2/(2\pi-x) + (1-1/\pi)$. 
\end{thm}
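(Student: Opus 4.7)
The plan is to realize the Trotter product as the evolution of a $\tau$-periodic piecewise-constant Hamiltonian $h_n$, with $\tau = t/n$: $h_n(s) = 2H_1$ on the first half and $h_n(s) = 2\kappa H_0$ on the second half of each period $[(j-1)\tau,j\tau)$. Comparing it with $\bar U(t) = \rme^{-\rmi t(\kappa H_0 + H_1)}$ through a direct application of Corollary~\ref{cor:OneMoreThmbar} fails because the instantaneous norm of both generators grows linearly with $\kappa$; the strategy, modeled on Theorem~\ref{thm:ATrott} and the strong-coupling analysis of Theorem~\ref{thm:StrongCouplingLimit}, is to use the rotating frame of the $H_0$-only schedule $h_n|_{H_1=0}$, with propagator $U_0(s) = \rme^{-\rmi\phi(s)\kappa H_0}$ where $\phi(s)$ is the cumulative time spent on $H_0$ up to $s$. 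In this frame the Trotter generator reduces to a sequence of ``$\hat V$-pulses'', $2\hat V_{(j-1)\tau}$ on the first half and $0$ on the second, with $\hat V_t = \rme^{\rmi t\kappa H_0}H_1\rme^{-\rmi t\kappa H_0}$, giving the harmless bound $\|\hat h_n\|_{1,t}\leq t\|H_1\|$.

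The central computation is then the rotating-frame action
\[
\hat S(s) = \int_0^s U_0(u)^\dag[h_n(u) - \kappa H_0 - H_1] U_0(u)\, \rmd u.
\]
At each period end $s = j\tau$ the $\pm\kappa H_0$ pieces cancel telescopically between the two halves of each period, and expanding $\hat V$ in the spectral basis $H_0 = \sum_\mu \lambda_\mu P_\mu$ and summing the resulting geometric series over the $j$ periods yields
\[
\hat S(j\tau) = \tfrac{1}{2} {\sum_{\mu,\nu}}' \frac{1 + \rmi\tau\omega_{\mu\nu} - \rme^{\rmi\tau\omega_{\mu\nu}}}{\rmi\omega_{\mu\nu}} \cdot \frac{\rme^{\rmi j\tau\omega_{\mu\nu}} - 1}{\rme^{\rmi\tau\omega_{\mu\nu}} - 1}\, P_\mu H_1 P_\nu,
\]
with $\omega_{\mu\nu} = \kappa(\lambda_\mu-\lambda_\nu)$ and $\sum'$ omitting $\mu=\nu$. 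Using the elementary estimate $|\rme^{\rmi jy}-1|/|\rme^{\rmi y}-1|\le 2/|\rme^{\rmi y}-1|$, each coefficient is bounded in modulus by $\tau g(\tau\omega_{\mu\nu})/2$, with exactly the function $g$ of the statement. The non-resonance hypothesis $\kappa\tau\eta\le\theta<2\pi$ keeps the relevant phases away from the singularities of $g$ at multiples of $2\pi$, giving a uniform coefficient bound $\tau g(\theta)/2$; Lemma~\ref{lem:ProjectedSum} sums over the spectral projections with the characteristic $\sqrt{m}$ factor, and after absorbing the partial-period contributions for $s\notin\tau\mathbb{N}$ into an additive $O(1/n)$ remainder one obtains $\|\hat S\|_{\infty,t}\le[\sqrt{m}\,g(\theta)+2]\,t\|H_1\|/(2n)$. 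Combined with $\|\hat h_n\|_{1,t}\le t\|H_1\|$, this delivers~(\ref{eq:epsilon(n)}).

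The main obstacle is that the target generator $\hat{\bar H}$ in this piecewise-constant rotating frame retains residual $\pm\kappa H_0$ pieces which cancel upon full-period integration but have unbounded instantaneous norm, so that Proposition~\ref{lem:divergenceR} cannot be invoked as stated. One must instead work from the finer intermediate bound~(\ref{eqn:IntermediateBound}) of Lemma~\ref{lemma:divergence} and group the computation period-by-period so that the $\kappa H_0$ oscillations cancel before any norm is taken; this bookkeeping is what makes the final bound independent of $\kappa\|H_0\|$ and dependent only on $\|H_1\|$, $t$, $m$, and $\theta$. A secondary technical point is justifying the uniform coefficient bound $|c_{\mu\nu,j}|\le\tau g(\theta)/2$ across every spectral pair, which rests on the control of $g$ in the allowed range $(0,\theta]$ afforded by the hypothesis $\theta<2\pi$.
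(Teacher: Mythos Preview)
Your spectral computation is essentially the paper's—the geometric sum over periods, the coefficient producing $g(\theta)$, the $\sqrt{m}$ factor from Lemma~\ref{lem:ProjectedSum}, and the $+2$ from partial periods all appear there. The genuine gap is in your handling of the obstacle you yourself identify: in your piecewise rotating frame $U_0(s)=\rme^{-\rmi\phi(s)\kappa H_0}$ the target generator $\hat{\bar H}(s)$ carries $\pm\kappa H_0$ terms of unbounded norm, and your proposed fix via the intermediate bound~(\ref{eqn:IntermediateBound}) cannot work as stated, because in~(\ref{eqn:IntermediateBound}) the norms $\|H_1(s)\|+\|H_2(s)\|$ are already taken pointwise inside the integral—there is nothing left to ``group period-by-period'' and cancel. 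To rescue this route you would have to return to the operator identity~(\ref{eq:divergence}) itself and perform a further integration by parts on the $\sigma(s)\kappa H_0$ factor \emph{before} any norm is taken; that is a nontrivial additional argument you have not supplied, and without it the bound you obtain is $O(\kappa/n)$, not $O(1/n)$.

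The paper sidesteps the obstacle by choosing the \emph{continuous} rotating frame $\rme^{-\rmi\kappa sH_0}$ rather than your piecewise one. In that frame the target generator is $\tilde H_\kappa(s)=\rme^{\rmi\kappa sH_0}H_1\rme^{-\rmi\kappa sH_0}$ with $\|\tilde H_\kappa(s)\|=\|H_1\|$, and the Trotter product collapses to the evolution generated by the piecewise-constant discretisation $\hat H_{n,\kappa}(s)=\tilde H_\kappa\bigl(\lfloor ns/t\rfloor\, t/n\bigr)$, also of norm $\|H_1\|$. Both generators are $\kappa$-independent in norm, so Lemma~\ref{lemma:divergence} applies directly with $\|\hat H_{n,\kappa}\|_{1,t}+\|\tilde H_\kappa\|_{1,t}\le 2t\|H_1\|$, delivering the factor $(1+2t\|H_1\|)$ in~(\ref{eq:epsilon(n)}) without further work. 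The action at the grid points is then literally a Riemann sum minus its integral, and the same geometric-series computation you carry out goes through verbatim.
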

\begin{remark}
Note that the product formula~(\ref{eq:epsilon(n)0}) holds for $\kappa =o(n)$, i.e.~for a coupling strength $\kappa$ growing sublinearly in the number of Trotter steps $n$, which was conjectured in Ref.~\cite{ref:alphaTrot}. Here, by using the universal bound, we show that the conjecture is in fact true up to $\kappa = O(n)$, as in~\eqref{eq:Kbound}.
\end{remark}
\begin{proof}[Proof of Theorem~\ref{GTF}]
As in the proof of the strong-coupling limit in Theorem~\ref{thm:StrongCouplingLimit}, it is convenient to look at the distance between the two evolutions in the frame rotating with $\kappa H_0$,
\begin{equation}\label{eq:rotatingGenProd}
\left(\rme^{-\rmi\frac{t}{n}\kappa H_0}\rme^{-\rmi\frac{t}{n}H_1}\right)^n-\rme^{-\rmi t(\kappa H_0+H_1)}=\rme^{-\rmi t \kappa H_0}[\hat{U}_{n,\kappa}(t)-\widetilde{U}_\kappa (t)],
\end{equation}
where
\begin{equation}
\widetilde{U}_\kappa (t)=\rme^{\rmi\kappa tH_0}\rme^{-\rmi t (\kappa H_0+H_1)}
=\T \exp\!\left(-\rmi \int_0^t \d s\,\widetilde{H}_\kappa (s)\right)
\end{equation}
is generated by the Hamiltonian in the rotating frame
\begin{equation}
\widetilde{H}_\kappa (s)=\rme^{\rmi\kappa sH_0} H_1 \rme^{-\rmi\kappa sH_0},
\end{equation}
while
\begin{align}
\hat{U}_{n,\kappa}(t)
&=\rme^{\rmi\kappa tH_0}\left(\rme^{-\rmi\frac{t}{n}\kappa H_0}e^{-\rmi\frac{t}{n}H_1}\right)^n
\nonumber\\
&=\Bigl(
\rme^{\rmi(n-1)\frac{t}{n}\kappa H_0}\rme^{-\rmi \frac{t}{n} H_1}\rme^{-\rmi (n-1)\frac{t}{n}\kappa H_0}
\Bigr)
\cdots
\Bigl(
\rme^{\rmi\frac{t}{n}\kappa H_0}\rme^{-\rmi \frac{t}{n} H_1}\rme^{-\rmi\frac{t}{n}\kappa H_0}
\Bigr)\,
\rme^{-\rmi \frac{t}{n} H_1}
\nonumber\\
&=\rme^{-\rmi \frac{t}{n}\widetilde{H}_\kappa((n-1)t/n)} \cdots\, \rme^{-\rmi \frac{t}{n}\widetilde{H}_\kappa(t/n)} \rme^{-\rmi \frac{t}{n}\widetilde{H}_\kappa(0)}
\nonumber
\displaybreak[0]
\\
&=\T \exp\!\left(-\rmi \int_0^t \d s\,\hat{H}_{n,\kappa}(s)\right),
\end{align}
with $\hat{H}_{n,\kappa}(s)$ being a ``discretized'' version of $\widetilde{H}_\kappa (s)$, i.e.,
\begin{equation}
\hat{H}_{n,\kappa}(s)=\widetilde{H}_\kappa\!\left( \left\lfloor s \frac{n}{t}\right\rfloor\frac{t}{n}\right).
\end{equation}
By applying Lemma~\ref{lemma:divergence}, one has
\begin{equation}
\|\hat{U}_{n,\kappa}(s)-\widetilde{U}_\kappa (s)\|\le \|S_{n,\kappa}\|_{\infty,t}(1+\|\hat{H}_{n,\kappa}\|_{1,t}+\|\widetilde{H}_\kappa \|_{1,t}),
\label{eqn:GTFbound0}
\end{equation}
for all $s\in[0,t]$, where in this case
\begin{equation}
S_{n,\kappa}(t)=\int_0^{t} \d s\,[\hat{H}_{n,\kappa}(s)-\widetilde{H}_\kappa (s)]
\end{equation}
represents essentially the integrated error associated with the ``discretization.''
Since $\|\hat{H}_{n,\kappa}(s)\|=\|\widetilde{H}_\kappa (s)\|=\|H_1\|$, the bound~(\ref{eqn:GTFbound0}) is reduced to
\begin{equation}\label{eq:GenBound1}
\|\hat{U}_{n,\kappa}(t)-\widetilde{U}_\kappa (s)\|\le \|S_{n,\kappa}\|_{\infty,t}  (1+2t\|H_1\|),
\end{equation}
for all $s\in[0,t]$. 
We now need to bound $\|S_{n,\kappa}\|_{\infty,t}$.

Let us first consider the action $S_{n,\kappa}(s)$ at $s=s_\ell=\ell\frac{t}{n}$ for $\ell=1,\dots,n$,
\begin{align}
S_{n,\kappa}(s_\ell)&=\int_0^{s_\ell} \d s\,[\hat{H}_{n,\kappa}(s)-\widetilde{H}_\kappa (s)]
\nonumber
\\
&=\frac{t}{n}\sum_{j=0}^{\ell-1}\widetilde{H}_\kappa \!\left(j\frac{t}{n}\right)-\int_0^{s_\ell} \d s\, \widetilde{H}_\kappa (s)
\nonumber\\
&=\tau_n \left(
\sum_{j=0}^{\ell-1}\widetilde{H}_\kappa (j\tau_n)-\int_0^{\ell} \d u \, \widetilde{H}_\kappa (u\tau_n)
\right),
\label{eq:Sn(tk)}
\end{align}
where $\tau_n=t/n$.
By using the spectral representation of $H_0$ in~(\ref{eqn:SpectralRepHGTF}), we have
\begin{equation}
\widetilde{H}_\kappa (s)
=\sum_{\mu,\nu=1}^m \rme^{\rmi\kappa\omega_{\mu\nu}s} P_\mu H_1 P_\nu,
\end{equation}
with $\omega_{\mu\nu}=\lambda_\mu-\lambda_\nu$. 
Then,
\begin{align}
&\sum_{j=0}^{\ell-1}\widetilde{H}_\kappa (j\tau_n)-\int_0^\ell \d u\, \widetilde{H}_\kappa (u\tau_n)
\nonumber\\
&\qquad
= \sum_{\mu,\nu} \left(\sum_{j=0}^{\ell-1} \rme^{\rmi j\kappa\omega_{\mu\nu}\tau_n}-\int_0^\ell \d u\,\rme^{\rmi u\kappa\omega_{\mu\nu}\tau_n} \right)P_\mu H_1 P_\nu
\nonumber\\
&\qquad
={\sum_{\mu,\nu}}' \left(\frac{\rme^{\rmi \ell\kappa\omega_{\mu\nu}\tau_n}-1}{\rme^{\rmi\kappa\omega_{\mu\nu}\tau_n}-1}-\frac{\rme^{\rmi \ell\kappa\omega_{\mu\nu}\tau_n}-1}{\rmi\kappa\omega_{\mu\nu}\tau_n}\right)P_\mu H_1P_\nu
\nonumber
\displaybreak[0]
\\
&\qquad
={\sum_{\mu, \nu}}'(\rme^{\rmi \ell\kappa\omega_{\mu\nu}\tau_n}-1)
\frac{1+\rmi\kappa\omega_{\mu\nu}\tau_n-\rme^{\rmi\kappa \omega_{\mu\nu}\tau_n}}{(\rme^{\rmi\kappa \omega_{\mu\nu}\tau_n}-1)\rmi\kappa\omega_{\mu\nu}\tau_n}
P_\mu H_1P_\nu.
\end{align}
Thus, by taking the norm of the action~\eqref{eq:Sn(tk)} using Lemma~\ref{lem:ProjectedSum} in Appendix~\ref{app:Ergodic}, we get the bound
\begin{equation}
\|S_{n,\kappa}(s_\ell)\|
\le\frac{t}{n}
\sqrt{m}\mathop{\max_{\mu,\nu}}_{\mu\neq\nu}g_\ell(\kappa\omega_{\mu\nu}\tau_n) \|H_1\|,
\label{eq:sumIntBound0}
\end{equation}
where
\begin{equation}\label{eq:gk}
g_\ell(x)=\left|(\rme^{\rmi \ell x}-1)\frac{1+\rmi x -\rme^{\rmi x}}{(\rme^{\rmi x} -1)x}\right|,
\end{equation}
which is bounded by
\begin{equation}\label{eq:gkBound}
g_\ell(x)\le g(x)
= 2 \left|\frac{1+\rmi x -\rme^{\rmi x}}{(\rme^{\rmi x} -1)x}\right| 
=\left|
\frac{\sinc(x/2)-\rme^{\rmi x/2}}{\sin(x/2)}
\right|.
\end{equation}
Notice that $g(x)$ diverges at points $x_\ell=2\pi\ell$ with $\ell=\pm1,\pm2,\dots$,  but can be suitably bounded far from these points.
In particular, since $g(x)$ is even and increasing in $(0,2\pi)$, by assuming~\eqref{eq:Kbound}, one has
\begin{equation}
0\le\kappa|\omega_{\mu\nu}| \tau_n =\kappa |\lambda_\mu-\lambda_\nu| \frac{t}{n}\le \theta,
\end{equation}
and $g(\kappa\omega_{\mu\nu}\tau_n) \le g(\theta)<+\infty$.
The action~\eqref{eq:sumIntBound0} is thus bounded by 
\begin{equation}
\|S_{n,\kappa}(s_\ell)\|\le \frac{t}{n} \sqrt{m}\, g(\theta) \|H_1\|,
\end{equation}
uniformly for $\ell=0,\dots, n$. The bound on $\|S_{n,\kappa}(s)\|$ for $s\in[0,t]$ can be easily obtained by this result: for each fixed $s$, take $s_\ell=\ell\frac{t}{n}$ for $\ell=\lfloor s\frac{n}{t}\rfloor$, and write
\begin{equation}\label{eq:Sn(t)bound}
\|S_{n,\kappa}(s)\|\le \|S_{n,\kappa}(s_\ell)\|+\|S_{n,\kappa}(s)-S_{n,\kappa}(s_\ell)\|\le\frac{t}{n}[\sqrt{m}\,g(\theta)+2] \|H_1\|,
\end{equation}
where we used
\begin{equation}
\|S_{n,\kappa}(s)-S_{n,\kappa}(s_\ell)\|\le \int_{s_\ell}^s \d \sigma\,\|\hat{H}_{n,\kappa}(\sigma)-\widetilde{H}_\kappa (\sigma)\|\le 2(s-s_\ell)\|H_1\|\le \frac{2t}{n} \|H_1\|.
\end{equation}
Equation~\eqref{eq:Sn(t)bound} yields the bound
\begin{equation}
\|S_{n,\kappa}\|_{\infty,t} \le \frac{t}{n}[\sqrt{m}\,g(\theta)+2 ] \|H_1\|,
\end{equation}
which is inserted into~\eqref{eq:GenBound1} to finally obtain
\begin{equation}
\|\hat{U}_{n,\kappa}(t)-\widetilde{U}_\kappa (t)\|\le \frac{1}{n}[\sqrt{m}\,g(\theta)+2] t\|H_1\|(1+2t\|H_1\|).
\end{equation}
This, used together with~\eqref{eq:rotatingGenProd}, proves~\eqref{eq:epsilon(n)}.
\end{proof}

\section{Conclusions}
\label{sec:Conclusions}
The evolutions generated by time-dependent Hamiltonians are notoriously complicated to characterize. In many cases of interest, one looks for a simple approximation of the exact evolution which captures its essential features. Here, we have derived simple bounds to estimate the errors of such approximations.
Our result allows one to estimate the error associated with these approximations in terms of the integral action of the difference between the approximated and the exact generator of the evolution. Thanks to the complete generality of the approach, this result can be used in a wide plethora of cases which may appear to be unrelated to each other, such as the RWA, the adiabatic theorems, strong-coupling limits, and even pulsed evolutions, which have the additional peculiarity of being generated by time-dependent discontinuous (usually piecewise-constant) Hamiltonians.
An obvious limitation of our work is that all bounds are given in terms of norms of some operators. Therefore, most of our results do not apply to unbounded operators (however, see Remark~\ref{rmk:UnboundedH}). 
In particular, with respect to the RWA, it would be interesting to develop convergence results for commonly used systems such as the Jaynes-Cummings model.

\section*{Acknowledgments}
DB acknowledges discussions with Robin Hillier, Arne Laucht, and Mauro Morales. This research was funded in part by the Australian Research Council (projects FT190100106, DP210101367, CE170100009). 
It was also supported in part by the Top Global University Project from the Ministry of Education, Culture, Sports, Science and Technology (MEXT), Japan.
KY was supported by the Grants-in-Aid for Scientific Research (C) (No.~18K03470) and for Fostering Joint International Research (B) (No.~18KK0073) both from the Japan Society for the Promotion of Science (JSPS). 
GG and PF were partially supported by Istituto Nazionale di Fisica Nucleare (INFN) through the project ``QUANTUM'' and by the Italian National Group of Mathematical Physics (GNFM-INdAM). PF was partially supported by Regione Puglia and by QuantERA ERA-NET Cofund in Quantum Technologies (GA No.~731473), project PACE-IN\@.

\appendix
\section{Locally Integrable Generators}
\label{app:locInt}
Our main tools, Lemma~\ref{lemma:divergence}, Theorem~\ref{thm:OneMoreThm}, and Corollary~\ref{cor:OneMoreThmbar}, based on integration by parts, are all valid for locally integrable generators.
In this appendix, we provide explicit derivations of our basic tools by just assuming the local integrability of the generators.
\begin{lemma}[Gronwall's inequality for $L^1_{\mathrm{loc}}$ functions]
\label{lemma:Gronwall}
Let $a\in L^1_{\mathrm{loc}}(\mathbb{R})$ be  locally integrable, with $a(t) \ge0$ for all $t\ge0$. Assume that $g\in L^\infty_{\mathrm{loc}}(\mathbb{R})$ satisfies
\begin{equation}
0\le g(t) \le \int_0^t \d s\, a(s) g(s) .
\label{eq:Gron}
\end{equation}
Then,
$g(t) \equiv 0$ for all $t\ge0$.
\end{lemma}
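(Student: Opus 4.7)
The plan is to prove the lemma by iterating the integral inequality and exploiting the local boundedness of $g$ together with the local integrability of $a$.

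First I would fix an arbitrary $T>0$ and define $M=\|g\|_{\infty,T}<+\infty$ (finite by $g\in L^\infty_{\mathrm{loc}}$) and $A(t)=\int_0^t\d s\,a(s)$, which is finite, nonnegative, nondecreasing, and absolutely continuous on $[0,T]$ with $A'(t)=a(t)$ almost everywhere. The goal is to show by induction on $n\in\mathbb{N}$ that
\begin{equation}
0\le g(t)\le M\,\frac{A(t)^n}{n!},\qquad t\in[0,T].
\label{eq:Groniter}
\end{equation}
The base case $n=0$ is simply $g(t)\le M$. For the induction step, I would substitute the induction hypothesis into the hypothesis~\eqref{eq:Gron}, obtaining
\begin{equation}
g(t)\le\int_0^t\d s\,a(s)g(s)\le\frac{M}{n!}\int_0^t\d s\,a(s)A(s)^n,
\end{equation}
and then evaluate the last integral as $\int_0^t A'(s)A(s)^n\,\d s=A(t)^{n+1}/(n+1)$ (a consequence of the chain rule for absolutely continuous functions applied to $A(s)^{n+1}/(n+1)$, which does not require $a$ to be continuous).

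Once~\eqref{eq:Groniter} is established, I would let $n\to+\infty$: since $A(T)<+\infty$, one has $A(t)^n/n!\to 0$ uniformly on $[0,T]$, so $g(t)=0$ on $[0,T]$. As $T>0$ is arbitrary, $g\equiv 0$ on $[0,+\infty)$.

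The main obstacle, such as it is, is the chain-rule step used to compute $\int_0^t a(s)A(s)^n\,\d s$, because $a$ is only locally integrable rather than continuous; however, this is handled by the standard fact that $A$ is absolutely continuous and that $s\mapsto A(s)^{n+1}$ is therefore also absolutely continuous with derivative $(n+1)a(s)A(s)^n$ almost everywhere, so the fundamental theorem of calculus for absolutely continuous functions applies directly. No further regularity on $a$ or $g$ is needed.
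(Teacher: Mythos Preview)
Your proof is correct and follows essentially the same iteration strategy as the paper's. The only cosmetic difference is that the paper substitutes the inequality~\eqref{eq:Gron} into itself $n$ times and evaluates the resulting nested integral via Fubini and the simplex-volume identity, whereas you start from the bound $g\le M$ and reach the same $A(t)^n/n!$ estimate by induction using the chain rule for absolutely continuous functions.
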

\begin{proof}
By iterating,
\begin{align}
0\le g(t) &\le \int_0^t \d s_1\, a(s_1) \int_0^{s_1} \d s_2\, a(s_2) g(s_2)
\nonumber\\
&\le \int_0^t \d s_1
\cdots \int_0^{s_{n-1}} \d s_n \int_0^{s_n} \d s\, a(s_1) \cdots a(s_n) a(s) g(s)
\nonumber
\displaybreak[0]
\\
&= \int_0^t \d s\, a(s) g(s) \int_s^{t} \d s_n \cdots \int_{s_2}^t \d s_1\, a(s_1) \cdots a(s_n) 
\nonumber
\displaybreak[0]
\\
&= \int_0^t \d s\, a(s) g(s) \frac{1}{n!} \left(
\int_{s}^t \d s_1\, a(s_1)
\right)^n 
\nonumber
\displaybreak[0]
\\
& \le \frac{1}{n!} \left( 
\int_0^t \d s_1\, a(s_1) 
\right)^n  \int_0^t \d s\, a(s) g(s) \to 0,
\end{align}
as $n\to+\infty$.
\end{proof}

\begin{prop}[Propagators generated by $L^1_{\mathrm{loc}}$ generators]
Let $t\in\mathbb{R} \mapsto H(t)\in B(\mathcal{H})$ be a locally integrable operator-valued map, $H\in L^1_{\mathrm{loc}}(\mathbb{R})$. 
Then, the integral equation
\begin{equation}
U(t) = 1 - \rmi \int_0^t \d s\, H(s) U(s)
\label{eq:integraleq}
\end{equation}
has a unique bounded solution $t\mapsto U(t)\in B(\mathcal{H})$.
Moreover, $U(t)$ is continuous.
\end{prop}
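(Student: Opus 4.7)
The plan is to prove existence by Picard iteration (equivalently, the Dyson series), uniqueness by a direct application of Lemma~\ref{lemma:Gronwall}, and continuity directly from the integral equation using absolute continuity of the integral.

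For \textbf{existence}, I would define the Picard iterates
\begin{equation}
U_0(t) = 1, \qquad U_{n+1}(t) = 1 - \rmi \int_0^t \d s\, H(s) U_n(s),
\end{equation}
and prove by induction that
\begin{equation}
\| U_{n+1}(t) - U_n(t) \| \le \frac{\|H\|_{1,|t|}^{n+1}}{(n+1)!},
\end{equation}
for all $t\in\mathbb{R}$, where $\|H\|_{1,|t|}$ is finite because $H \in L^1_{\mathrm{loc}}(\mathbb{R})$. Summing a geometric-type telescoping series shows $(U_n(t))_{n\ge0}$ is Cauchy in $B(\mathcal{H})$ uniformly on every compact subinterval, and so converges to some $U(t)$ with the explicit bound $\|U(t)\| \le \exp(\|H\|_{1,|t|})$. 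Passing to the limit under the integral sign (justified by uniform convergence on compact sets together with the dominated convergence theorem applied to $\|H(s)\|$) shows that $U(t)$ satisfies the integral equation~\eqref{eq:integraleq}. In fact $U(t)$ is uniformly bounded on each compact time interval.

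For \textbf{uniqueness}, suppose $U_1$ and $U_2$ are two bounded solutions of~\eqref{eq:integraleq} on some interval $[0,T]$. Their difference satisfies
\begin{equation}
U_1(t) - U_2(t) = -\rmi \int_0^t \d s\, H(s)\,[U_1(s) - U_2(s)],
\end{equation}
hence
\begin{equation}
\|U_1(t) - U_2(t)\| \le \int_0^t \d s\, \|H(s)\|\, \|U_1(s) - U_2(s)\|.
\end{equation}
Setting $g(t) = \|U_1(t) - U_2(t)\| \in L^\infty_{\mathrm{loc}}$ and $a(t) = \|H(t)\| \in L^1_{\mathrm{loc}}$, Lemma~\ref{lemma:Gronwall} immediately yields $g \equiv 0$, so $U_1 = U_2$.

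For \textbf{continuity}, from~\eqref{eq:integraleq} and the bound $\|U(s)\| \le C_T$ for $s\in[-T,T]$, I would write for $s,t \in[-T,T]$
\begin{equation}
\|U(t) - U(s)\| \le C_T \left| \int_s^t \d u\, \|H(u)\|\, \right|,
\end{equation}
and the right-hand side tends to $0$ as $t\to s$ by absolute continuity of the Lebesgue integral of the locally integrable function $\|H\|$. Hence $U$ is continuous (in fact even absolutely continuous on compact intervals). The main subtle point is the justification of passing to the limit in $\int_0^t H(s) U_n(s)\,\d s$; since $\|H(s)\|$ is an integrable dominating majorant on $[0,t]$ and $U_n(s)\to U(s)$ pointwise with $\|U_n(s)\|$ uniformly bounded, dominated convergence applies and closes the argument.
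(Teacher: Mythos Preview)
Your proposal is correct and essentially identical to the paper's proof: the paper writes down the Dyson series directly and bounds it by $\exp(\|H\|_{1,t})$, while your Picard iterates are precisely the partial sums of that series, so the existence arguments coincide; uniqueness via Lemma~\ref{lemma:Gronwall} and continuity via absolute continuity of $\int \|H\|$ are exactly what the paper does. The only cosmetic difference is that you spell out the dominated-convergence justification for passing to the limit in the integral equation, which the paper leaves implicit.
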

\begin{proof}
\mbox{}
\begin{enumerate}
\item(Existence)
The series
\begin{equation}
U(t) = 1 + \sum_{k\ge 1} \int_0^t  \d s_1 \cdots \int_0^{s_{k-1}} \d s_k\, H(s_1) \cdots H(s_k)
\end{equation}
is absolutely convergent for every $t\ge 0$. 
Indeed,
\begin{align}
\|U(t)\|
&\le
1 + \sum_{k\ge 1} \int_0^t  \d s_1 \cdots \int_0^{s_{k-1}} \d s_k\, \| H(s_1)\| \cdots \|H(s_k)\|
\nonumber\\
&=\sum_{k\ge 0} \frac{1}{k!} \left( 
\int_0^t  \d s\, \|H(s)\| 
\right)^k
\nonumber\\
&=\exp\!\left(\int_0^t  \d s\, \|H(s)\|\right)
\end{align}
is bounded, and it is immediate to show that $U(t)$ satisfies the integral equation~\eqref{eq:integraleq}.

\item(Uniqueness)
Let $U$ and $V$ be two bounded solutions of the integral equation~(\ref{eq:integraleq}). 
Then,
\begin{equation}
\| U(t) - V(t) \| = \left\| 
\int_0^t \d s\, H(s) [U(s)-V(s)]
\right\|.
\end{equation}
Let $g(t) = \| U(t) - V(t) \|$ and $a(t)= \|H(t)\|$. 
They satisfy Gronwall's inequality~\eqref{eq:Gron}, so that $g(t)= 0$.
Thus, $U=V$.

\item(Continuity)
One has
\begin{equation}
U(t_1) - U(t_2)  = -\rmi \int_{t_1}^{t_2} \d s\, H(s) U(s),
\end{equation}
so that
\begin{equation}
\|U(t_1) - U(t_2)\|   \le \int_{t_1}^{t_2} \d s\, \| H(s)\| \|U(s)\| \to 0,
\end{equation}
as $t_2\to t_1$.
\end{enumerate}
\end{proof}

\begin{lemma}[Integration by parts]
\label{lem:3}
Let $a(t)$, $b(t)$, and $c(t)$ be locally integrable bounded operator-valued functions, and let $A(t) = A(0) +\int_{0}^t \d s\, a(s)$,  $B(t) = B(0) +\int_0^t \d s\, b(s)$, and $C(t) = C(0) +\int_0^t \d s\, c(s)$. 
Then,
\begin{equation}
\int_0^t \d s\,[ a(s) B(s) + A(s) b(s)] = A(t) B(t) - A(0) B(0),
\label{eq:2termint}
\end{equation}
and
\begin{equation}
\int_0^t \d s\, [ a(s) B(s) C(s) + A(s) b(s) C(s) + A(s) B(s) c(s)] = A(t) B(t) C(t) - A(0) B(0) C(0).
\label{eq:3termint}
\end{equation}
\end{lemma}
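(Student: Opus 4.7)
My plan is to prove the two-term identity~\eqref{eq:2termint} by direct expansion plus Fubini, and then to bootstrap it to the three-term identity~\eqref{eq:3termint} by applying the two-term one twice.

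For~\eqref{eq:2termint}, I would substitute $B(s) = B(0) + \int_0^s b(u)\,\d u$ inside $\int_0^t a(s)B(s)\,\d s$ and $A(s) = A(0) + \int_0^s a(u)\,\d u$ inside $\int_0^t A(s)b(s)\,\d s$. This produces four terms. The two ``constant'' pieces combine to $[A(t) - A(0)]B(0) + A(0)[B(t) - B(0)]$. The two remaining double integrals are
\[
\iint_{0\le u\le s\le t} a(s)b(u)\,\d u\,\d s
\quad\text{and}\quad
\iint_{0\le u\le s\le t} a(u)b(s)\,\d u\,\d s.
\]
Renaming $(u,s)\leftrightarrow(s,u)$ in the second turns it into $\iint_{0\le s\le u\le t} a(s)b(u)\,\d s\,\d u$. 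Together with the first, and by Fubini (applied to the Bochner-integrable operator-valued function $(s,u)\mapsto a(s)b(u)$), they cover the full square $[0,t]^2$ up to its measure-zero diagonal, giving $\int_0^t a(s)\,\d s\int_0^t b(u)\,\d u = [A(t)-A(0)][B(t)-B(0)]$. Adding up all four contributions telescopes neatly to $A(t)B(t) - A(0)B(0)$, which is~\eqref{eq:2termint}.

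For~\eqref{eq:3termint}, I would set $D(t) = B(t)C(t)$. Applying~\eqref{eq:2termint} to the pair $(B,C)$ yields
\[
D(t) = D(0) + \int_0^t d(s)\,\d s, \qquad d(s):=b(s)C(s) + B(s)c(s).
\]
Since $B$ and $C$, being indefinite integrals of locally integrable functions, are continuous and hence locally bounded, the function $d$ is locally integrable and bounded as well. A second application of~\eqref{eq:2termint}, this time to the pair $(A,D)$, then produces $\int_0^t [a(s)D(s) + A(s)d(s)]\,\d s = A(t)D(t) - A(0)D(0)$, which on unpacking $D$ and $d$ is exactly~\eqref{eq:3termint}.

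The main obstacle is justifying Fubini for the non-commuting operator-valued integrands. This rests on the Bochner-integral version of Fubini's theorem: strong measurability of $(s,u)\mapsto a(s)b(u)$ together with $\iint \|a(s)\|\|b(u)\|\,\d s\,\d u = \|a\|_{1,t}\|b\|_{1,t} < \infty$ suffices. Both conditions follow immediately from the standing hypothesis that $a, b$ are locally integrable bounded $B(\mathcal{H})$-valued functions, so no delicate point arises.
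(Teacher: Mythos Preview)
Your proof is correct and follows essentially the same route as the paper's own argument: expand $B(s)$ and $A(s)$ inside the two integrals, combine the two triangular double integrals via Fubini into the full square, and then bootstrap to the three-term identity by grouping two factors and reapplying the two-term result. The only cosmetic differences are that the paper groups $D=AB$ and pairs it with $C$ (you group $D=BC$ and pair with $A$), and that you are more explicit than the paper in naming Fubini for Bochner integrals as the justification for the interchange of integration order.
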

\begin{proof}
The proof is a direct computation:
\begin{align}
&\int_0^t \d s\, [ a(s) B(s) + A(s) b(s)]
\nonumber\\
&\qquad
= \int_0^t \d s\, a(s) \left( B(0) +\int_0^s \d u\, b(u)  \right)
+ \int_0^t \d u\left( A(0) +\int_0^u \d s\, a(s)  \right) b(u)
\nonumber
\displaybreak[0]
\\
&\qquad
=[A(t)-A(0)] B(0) + \int_0^t \d s\, a(s) \int_0^s \d u\, b(u) 
\nonumber
\\
&\qquad\qquad
{}
+ A(0)[B(t)-B(0)] + \int_0^t \d s\, a(s) \int_s^t \d u\, b(u) 
\nonumber
\displaybreak[0]
\\
&\qquad
=[A(t)-A(0)] B(0) 
+ A(0)[B(t)-B(0)] 
+[A(t)-A(0)][B(t)-B(0)]
\vphantom{\int_0^t}
\nonumber
\displaybreak[0]
\\
&\qquad
= A(t) B(t)- A(0)B(0),
\vphantom{\int_0^t}
\end{align}
that is formula~\eqref{eq:2termint}. 
Formula~\eqref{eq:3termint} is obtained by setting
$D(t) = A(t) B(t)$, so that $D(t)= D(0) + \int_0^t \d s\, d(s)$ with $d(t)=a(t) B(t) + A(t) b(t)$ by~\eqref{eq:2termint}.
Thus,
\begin{equation}
\int_0^t \d s\, [ a(s) B(s) C(s) + A(s) b(s) C(s) + A(s) B(s) c(s)] 
=\int_0^t \d s\, [ d(s)  C(s)  +  D(s) c(s)],
\end{equation}
and applying again~\eqref{eq:2termint}, we are done.
\end{proof}

\begin{thm}
Let $U_j$ be the continuous propagators generated by $H_j \in L^1_{\mathrm{loc}}(\mathbb{R})$, for $j=1,2$.
Then,
\begin{equation}
\label{eq:U1dagU2}
U_1(t)^\dag U_2(t) = 1 -\rmi \int_0^t \d s\, U_1(s)^\dag [ H_2(s) - H_1(s)^\dag ] U_2(s).
\end{equation}
\end{thm}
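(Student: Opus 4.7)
The plan is to identify the statement as an immediate consequence of the integration-by-part identity in Lemma~\ref{lem:3}, applied with $A(t)=U_1(t)^\dag$ and $B(t)=U_2(t)$.

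First, I would recast the two defining integral equations in the additive form required by that lemma. From the preceding existence proposition we already have $U_2(t) = 1 + \int_0^t b(s)\,\d s$ with $b(s) = -\rmi H_2(s) U_2(s)$. Taking the adjoint of the integral equation~\eqref{eq:integraleq} for $U_1$, and using that the Bochner integral of a bounded-operator-valued function commutes with the adjoint (which is an isometric anti-linear involution of $B(\mathcal{H})$), gives $U_1(t)^\dag = 1 + \int_0^t a(s)\,\d s$ with $a(s) = \rmi\, U_1(s)^\dag H_1(s)^\dag$. Local integrability of both $a$ and $b$ is then immediate: each $U_j$ is continuous on $\mathbb{R}$ and hence bounded on every compact subinterval, while $H_j\in L^1_{\mathrm{loc}}$ by hypothesis, so the products lie in $L^1_{\mathrm{loc}}$.

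Next, I would plug these identifications into formula~\eqref{eq:2termint} with $A(0)=B(0)=1$, obtaining
\begin{equation*}
\int_0^t \d s\,\bigl[\rmi\, U_1(s)^\dag H_1(s)^\dag U_2(s) - \rmi\, U_1(s)^\dag H_2(s) U_2(s)\bigr] = U_1(t)^\dag U_2(t) - 1.
\end{equation*}
Pulling out $-\rmi$ and transposing the $1$ to the other side yields exactly~\eqref{eq:U1dagU2}.

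The only moderately delicate point in executing this plan is the interchange of the adjoint with the operator-valued integral used to identify $a(s)$; this is however standard and follows from continuity of $(\cdot)^\dag$ in operator norm together with local integrability of $s\mapsto H_1(s)U_1(s)$. Beyond that, the argument is purely bookkeeping inside Lemma~\ref{lem:3}, so no genuine obstacle is anticipated.
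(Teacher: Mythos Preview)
Your proposal is correct and follows exactly the paper's own proof: the paper also applies the two-term integration-by-parts identity~\eqref{eq:2termint} of Lemma~\ref{lem:3} with $A(t)=U_1(t)^\dag$, $a(t)=\rmi U_1(t)^\dag H_1(t)^\dag$, $B(t)=U_2(t)$, $b(t)=-\rmi H_2(t)U_2(t)$, after writing the adjoint integral equation for $U_1$. Your added remarks on local integrability of $a,b$ and on passing the adjoint through the Bochner integral are correct and simply make explicit what the paper leaves implicit.
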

\begin{proof}
From
\begin{equation}
U_2(t) = 1 -\rmi \int_0^t \d s\, H_2(s) U_2(s), 
\qquad  U_1(t)^\dag = 1 +\rmi \int_0^t \d s\, U_1(s)^\dag H_1(s)^\dag, 
\end{equation}
one gets~\eqref{eq:U1dagU2} by integration by parts~\eqref{eq:2termint}, for
$A(t) =  U_1(t)^\dag$ with $a(t)= \rmi U_1(t)^\dag H_1(t)^\dag$, and 
$B(t) = U_2(t)$ with $b(t) = -\rmi H_2(t) U_2(t)$.
\end{proof}

\begin{corol}
Let $H= H^\dag \in L^1_{\mathrm{loc}}(\mathbb{R})$ be the bounded self-adjoint generator of a continuous propagator $U$. Then, $U(t)$ is unitary for all $t$,
\begin{equation}
\label{eq:U(t)unitary}
U(t)^\dag U(t) =  U(t) U(t)^\dag = 1.
\end{equation}
\end{corol}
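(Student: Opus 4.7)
The first identity, $U(t)^\dag U(t) = 1$, is an immediate specialization of the preceding theorem: setting $H_1 = H_2 = H$ in~\eqref{eq:U1dagU2} and using the self-adjointness $H(s) = H(s)^\dag$ makes the integrand vanish pointwise, so the integral is zero and $U(t)^\dag U(t) = 1$ for all $t \ge 0$.

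For the companion identity $U(t) U(t)^\dag = 1$, my plan is to derive an integral equation satisfied by $Y(t) := U(t) U(t)^\dag$ and then invoke uniqueness. Starting from~\eqref{eq:integraleq}, we have $U(t) = 1 + \int_0^t a(s)\, \d s$ with $a(s) = -\rmi H(s) U(s)$. Taking the adjoint (which commutes with Bochner integration of operator-valued $L^1_{\mathrm{loc}}$ maps) and using self-adjointness of $H$ yields $U(t)^\dag = 1 + \int_0^t b(s)\, \d s$ with $b(s) = \rmi U(s)^\dag H(s)$. Applying the integration-by-parts formula~\eqref{eq:2termint} of Lemma~\ref{lem:3} with $A = U$ and $B = U^\dag$ then gives
\begin{equation*}
Y(t) - 1 = \int_0^t \d s\,[a(s) U(s)^\dag + U(s) b(s)] = -\rmi \int_0^t \d s\,[H(s), Y(s)].
\end{equation*}

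The constant map $Y \equiv 1$ visibly solves this integral equation, since commutators with the identity vanish. To identify it with $U(t) U(t)^\dag$, I would invoke uniqueness: if $Y_1, Y_2$ were two bounded solutions, the difference $Z = Y_1 - Y_2$ would satisfy $\|Z(t)\| \le 2 \int_0^t \|H(s)\|\, \|Z(s)\|\, \d s$, to which Lemma~\ref{lemma:Gronwall} (with $a(s) = 2\|H(s)\| \in L^1_{\mathrm{loc}}$) applies, forcing $Z \equiv 0$. The only mildly technical point I expect is the commutation of adjoints with the Bochner integral defining $U(t)$, which is routine for operator-valued $L^1_{\mathrm{loc}}$ maps; everything else is packaged by the lemmas already established in Appendix~\ref{app:locInt}.
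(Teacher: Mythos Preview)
Your proof is correct. The paper states this corollary without proof, treating it as an immediate consequence of the preceding theorem; your argument is precisely the natural way to fill in the details using the tools of Appendix~\ref{app:locInt}. The identity $U^\dag U=1$ is indeed the direct specialization $H_1=H_2=H=H^\dag$ of~\eqref{eq:U1dagU2}, and for $UU^\dag=1$ your derivation of the commutator integral equation via Lemma~\ref{lem:3} followed by Gronwall uniqueness (Lemma~\ref{lemma:Gronwall}) is clean and complete. The local boundedness of $Z$ needed for Gronwall holds because $U$ is continuous (established in the Proposition), and the interchange of adjoint with the Bochner integral is indeed routine since $A\mapsto A^\dag$ is a bounded (conjugate-)linear map on $B(\mathcal{H})$.
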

\begin{corol}
\label{cor:3}
Let $H_j= H_j^\dag \in L^1_{\mathrm{loc}}(\mathbb{R})$ be the bounded self-adjoint generators of continuous propagators $U_j$, for $j=1,2$. 
Then,
\begin{equation}
\label{eq:U(t)unitary2}
U_2(t) - U_1(t) =  -\rmi \int_0^t \d s\,  U_1(t) U_1(s)^\dag[ H_2(s) - H_1(s)] U_2(s).
\end{equation}
\end{corol}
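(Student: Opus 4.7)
The plan is to combine the two immediately preceding results: the identity
\begin{equation}
U_1(t)^\dag U_2(t) = 1 -\rmi \int_0^t \d s\, U_1(s)^\dag [ H_2(s) - H_1(s)^\dag ] U_2(s),
\nonumber
\end{equation}
and the unitarity statement $U_1(t) U_1(t)^\dag = 1$. The derivation is essentially two moves plus an invocation of self-adjointness, so there is no serious obstacle.

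First, I would use that $H_1(s) = H_1(s)^\dag$ in the hypothesis to rewrite the integrand of the preceding theorem as $U_1(s)^\dag[H_2(s)-H_1(s)]U_2(s)$. Next, I would left-multiply both sides of the resulting identity by $U_1(t)$. On the right-hand side, $U_1(t)$ passes into the integral (it is constant with respect to $s$), giving the desired kernel $U_1(t) U_1(s)^\dag[H_2(s)-H_1(s)]U_2(s)$. On the left-hand side, $U_1(t) U_1(t)^\dag U_2(t) = U_2(t)$ by the unitarity corollary. Finally, moving the $U_1(t)$ term to the left yields the claimed formula
\begin{equation}
U_2(t) - U_1(t) = -\rmi \int_0^t \d s\, U_1(t) U_1(s)^\dag[H_2(s)-H_1(s)]U_2(s).
\nonumber
\end{equation}

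The only point that deserves a word of care is that self-adjointness of the generators is used in two places: once directly (to turn $H_1^\dag$ into $H_1$ inside the integrand) and once indirectly (so that the preceding corollary applies and gives the unitarity of $U_1(t)$ needed in the left-multiplication step). Beyond that, everything is algebraic manipulation, and since the integral equation defining $U_1(t)^\dag U_2(t)$ was already established for merely locally integrable $H_1, H_2$, the identity inherits that same level of regularity without extra work.
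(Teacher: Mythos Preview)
Your proof is correct and takes exactly the approach implicit in the paper, which states the corollary without proof precisely because it follows by the two moves you describe: replacing $H_1^\dag$ by $H_1$ via self-adjointness in the preceding theorem, and then left-multiplying by $U_1(t)$ using the unitarity corollary.
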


Now, the next theorem provides an explicit derivation of the integration-by-part lemma (Lemma~\ref{lemma:divergence}) for locally integrable generators.
\begin{thm}[Integration-by-part lemma for locally integrable generators]
\label{thm:5}
Let $U_j$ be the continuous propagators generated by bounded self-adjoint $H_j \in L^1_{\mathrm{loc}}(\mathbb{R})$, for $j=1,2$. 
Let 
\begin{equation}
S_{21}(t) = \int_0^t \d s\, [H_2(s) - H_1(s)].
\end{equation}
Then,
\begin{equation}
U_2(t)-U_1(t) = -\rmi S_{21}(t) U_2(t) - \int_0^t \d s\, U_1(t) U_1(s)^\dag [ H_1(s) S_{21}(s) - S_{21}(s) H_2(s)] U_2(s).
\end{equation}
\end{thm}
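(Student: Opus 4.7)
The plan is to combine Corollary~\ref{cor:3}, which already gives a first integral representation of $U_2(t)-U_1(t)$, with the three-term integration-by-parts formula of Lemma~\ref{lem:3}. Indeed, Corollary~\ref{cor:3} provides
\begin{equation}
U_2(t) - U_1(t) = -\rmi U_1(t) \int_0^t \d s\, U_1(s)^\dag[H_2(s) - H_1(s)] U_2(s),
\label{eq:plan-start}
\end{equation}
and the bracket $H_2(s)-H_1(s)$ is precisely the integrand whose running integral is $S_{21}(s)$. So $S_{21}$ is absolutely continuous with derivative $H_2-H_1$ in the $L^1_{\mathrm{loc}}$ sense, which is exactly the regularity required to apply Lemma~\ref{lem:3}.

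The key step is to apply the three-term formula~\eqref{eq:3termint} with
\begin{equation}
A(t)=U_1(t)^\dag,\qquad B(t)=S_{21}(t),\qquad C(t)=U_2(t),
\end{equation}
whose associated ``derivatives'' are $a(t)=\rmi U_1(t)^\dag H_1(t)$ (since $H_1=H_1^\dag$), $b(t)=H_2(t)-H_1(t)$, and $c(t)=-\rmi H_2(t) U_2(t)$. Since $S_{21}(0)=0$, the boundary term $A(0)B(0)C(0)$ vanishes and Lemma~\ref{lem:3} yields
\begin{align}
U_1(t)^\dag S_{21}(t)U_2(t)={}&\int_0^t\d s\,U_1(s)^\dag[H_2(s)-H_1(s)]U_2(s)
\nonumber\\
&{}+\rmi\int_0^t\d s\,U_1(s)^\dag[H_1(s)S_{21}(s)-S_{21}(s)H_2(s)]U_2(s).
\end{align}

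The remaining step is algebraic rearrangement: solve this identity for $\int_0^t\d s\,U_1(s)^\dag[H_2(s)-H_1(s)]U_2(s)$, multiply by $-\rmi U_1(t)$ on the left, and substitute into~\eqref{eq:plan-start}. The factor $U_1(t)U_1(t)^\dag=1$ from the unitarity in~\eqref{eq:U(t)unitary} collapses the first term to $-\rmi S_{21}(t) U_2(t)$, while the remaining integral picks up the kernel $U_1(t)U_1(s)^\dag[\cdots]U_2(s)$ with the correct sign, giving exactly the claimed identity.

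The main (mild) obstacle is bookkeeping rather than analysis: one has to be careful that $S_{21}$ genuinely qualifies as an antiderivative of a locally integrable function so that Lemma~\ref{lem:3} applies, and to track the $\pm\rmi$ factors coming from the adjoint of the Schr\"odinger-type integral equation for $U_1$. All differentiability beyond absolute continuity is avoided precisely because Lemma~\ref{lem:3} was stated directly in integral form, so there is no hidden regularity assumption on $H_j$ beyond $H_j\in L^1_{\mathrm{loc}}(\mathbb{R})$ and $H_j=H_j^\dag$.
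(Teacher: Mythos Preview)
Your proof is correct and follows essentially the same route as the paper: both start from Corollary~\ref{cor:3}, then apply the three-term integration-by-parts formula~\eqref{eq:3termint} of Lemma~\ref{lem:3} with exactly the same choices $A=U_1^\dag$, $B=S_{21}$, $C=U_2$, using $S_{21}(0)=0$ to kill the initial boundary term and unitarity of $U_1$ to simplify. The only difference is cosmetic ordering of the algebra.
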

\begin{proof}
By Corollary~\ref{cor:3}, $U_2(t) - U_1(t) = U_1(t) D(t)$, with  
\begin{equation}
D(t) = -\rmi \int_0^t \d s\,  U_1(s)^\dag[ H_2(s) - H_1(s)] U_2(s).
\end{equation}
Then, an application of~(\ref{eq:3termint}) of Lemma~\ref{lem:3} to
$A(t) =  U_1(t)^\dag$ with $a(t)= \rmi U_1(t)^\dag H_1(t)$, 
$B(t) = S_{21}(t)$ with $b(t) = H_2(t) -H_1(t)$, and 
$C(t) = U_2(t)$ with $c(t) = -\rmi H_2(t) U_2(t)$
gives
\begin{align}
D(t) =-\rmi U_1(t)^\dag S_{21}(t)U_2(t)
&{}+\rmi \int_0^t \d s\,[\rmi U_1(s)^\dag H_1(s)] S_{21}(s) U_2(s)
\nonumber\\
&{}+\rmi \int_0^t \d s\, U_1(s)^\dag  S_{21}(s) [ -\rmi H_2(s) U_2(s)],
\end{align}
and the theorem is proved for $U_2(t)-U_1(t) = U_1(t) D(t)$.
\end{proof}

\section{Ergodic Means}
\label{app:Ergodic}
In this appendix, we provide a few basic bounds on ergodic means, which are used in the main text. We use the  spectral norm.
\begin{lemma}
\label{lem:ProjectedSum}
Let $\{P_\ell\}$ be a set of $m$ Hermitian projections satisfying $P_\ell=P_\ell^\dag$ and $P_kP_\ell=\delta_{k\ell}P_\ell$ for all $k$ and $\ell$.
Then, for any bounded operators $\{A_\ell\}$, we have
\begin{equation}
\biggl\|
\sum_\ell A_\ell P_\ell 
\biggr\|
\le
\sqrt{
\sum_\ell 
\|
A_\ell 
\|^2
\vphantom{\biggr\|}
}.
\label{eq:B1}
\end{equation}
Moreover,  for any bounded operator $A$ and for any complex numbers $\{c_{k,\ell}\}$, we have
\begin{equation}\label{eq:boundSumProj}
\biggl\|
\sum_{k,\ell}
c_{k,\ell}
P_kAP_\ell
\biggr\|
\le
\sqrt{m}\max_{k,\ell}
|c_{k,\ell}|
\|A\| .
\end{equation}
\end{lemma}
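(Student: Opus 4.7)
The plan is to prove the first inequality by a direct operator-norm estimate on a unit vector, exploiting the orthogonality $P_k P_\ell = \delta_{k\ell} P_\ell$, and then to deduce the second inequality as a corollary of the first.

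For the first bound~\eqref{eq:B1}, I would fix an arbitrary unit vector $\psi \in \mathcal{H}$ and estimate
\begin{equation}
\biggl\|\sum_\ell A_\ell P_\ell \psi\biggr\| \le \sum_\ell \|A_\ell\|\, \|P_\ell\psi\|
\le \sqrt{\sum_\ell \|A_\ell\|^2}\,\sqrt{\sum_\ell \|P_\ell\psi\|^2},
\end{equation}
where the last step is the Cauchy--Schwarz inequality in $\mathbb{C}^m$. The key observation is that, by orthogonality, $\sum_\ell \|P_\ell\psi\|^2 = \langle\psi|\sum_\ell P_\ell|\psi\rangle$, and since $\sum_\ell P_\ell$ is itself a Hermitian projection (hence of norm at most $1$), this quantity is bounded by $\|\psi\|^2=1$. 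Taking the supremum over unit vectors yields~\eqref{eq:B1}. Note that we do \emph{not} need the $P_\ell$'s to resolve the identity.

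For the second bound~\eqref{eq:boundSumProj}, I would group the sum as
\begin{equation}
\sum_{k,\ell} c_{k,\ell} P_k A P_\ell = \sum_\ell B_\ell P_\ell, \qquad B_\ell = \Bigl(\sum_k c_{k,\ell} P_k\Bigr) A,
\end{equation}
and apply~\eqref{eq:B1}. Since $\{P_k\}$ are mutually orthogonal projections, the operator $\sum_k c_{k,\ell} P_k$ acts diagonally on the subspaces $\operatorname{ran} P_k$ with eigenvalues $c_{k,\ell}$ (and as zero on $(\sum_k P_k)^\perp$), so its spectral norm equals $\max_k |c_{k,\ell}|$. Hence $\|B_\ell\| \le \max_k |c_{k,\ell}|\,\|A\| \le \max_{k,\ell}|c_{k,\ell}|\,\|A\|$, and~\eqref{eq:B1} gives
\begin{equation}
\biggl\|\sum_\ell B_\ell P_\ell\biggr\| \le \sqrt{\sum_\ell \|B_\ell\|^2} \le \sqrt{m}\,\max_{k,\ell}|c_{k,\ell}|\,\|A\|,
\end{equation}
which is~\eqref{eq:boundSumProj}.

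The proof is essentially routine: the only nontrivial ingredient is the simultaneous use of Cauchy--Schwarz and the idempotency/orthogonality $P_k P_\ell=\delta_{k\ell}P_\ell$. The mild subtlety I would flag is that the statement does not assume $\sum_\ell P_\ell = 1$, so the bound on $\sum_\ell\|P_\ell\psi\|^2$ must be justified via the fact that a sum of orthogonal projections is again a projection, rather than by completeness of the resolution.
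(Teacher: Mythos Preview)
Your proof is correct. The second inequality~\eqref{eq:boundSumProj} is derived exactly as in the paper, by grouping the double sum as $\sum_\ell B_\ell P_\ell$ with $B_\ell=(\sum_k c_{k,\ell}P_k)A$ and invoking~\eqref{eq:B1} together with $\|\sum_k c_{k,\ell}P_k\|=\max_k|c_{k,\ell}|$.

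For the first inequality~\eqref{eq:B1}, your route differs from the paper's. You estimate $\|\sum_\ell A_\ell P_\ell\psi\|$ on a unit vector via the triangle inequality and Cauchy--Schwarz in $\mathbb{C}^m$, then control $\sum_\ell\|P_\ell\psi\|^2$ by noting that $\sum_\ell P_\ell$ is itself an orthogonal projection. The paper instead works at the operator level, using $\|T\|^2=\|TT^\dag\|$ to write
\[
\Bigl\|\sum_\ell A_\ell P_\ell\Bigr\|^2
=\Bigl\|\sum_\ell A_\ell P_\ell A_\ell^\dag\Bigr\|
\le\sum_\ell\|A_\ell P_\ell A_\ell^\dag\|
=\sum_\ell\|A_\ell P_\ell\|^2
\le\sum_\ell\|A_\ell\|^2,
\]
where the orthogonality $P_kP_\ell^\dag=\delta_{k\ell}P_\ell$ collapses the double sum. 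Both arguments are short and use the same structural input (mutual orthogonality of the $P_\ell$); yours is a direct vectorial estimate, the paper's is purely algebraic and avoids fixing a vector. Your observation that $\sum_\ell P_\ell=1$ is not needed is correct and applies equally to the paper's argument.
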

\begin{proof}
The spectral norm $\|A\|$ gives the largest singular value of an operator $A$, and satisfies $\|A\|^2=\|A^\dag A\|=\|AA^\dag\|$.
Thus,
\begin{align}
\biggl\|
\sum_\ell A_\ell P_\ell 
\biggr\|^2
&=
\biggl\|
\biggl(\sum_k A_k P_k\biggr)
\biggl(\sum_\ell A_\ell P_\ell\biggr)^\dag
\biggr\|
=
\biggl\|
\sum_\ell A_\ell P_\ell A_\ell^\dag
\biggr\|
\nonumber
\\
&\le
\sum_\ell 
\|
A_\ell P_\ell A_\ell^\dag
\|
=
\sum_\ell 
\|
A_\ell P_\ell 
\|^2
\le
\sum_\ell 
\|
A_\ell 
\|^2,
\end{align}
that is~\eqref{eq:B1}.

The second inequality~(\ref{eq:boundSumProj}) derives from~\eqref{eq:B1},
\begin{align}
\biggl\|
\sum_{k,\ell}
c_{k,\ell}
P_kAP_\ell
\biggr\|^2
&=
\biggl\|
\sum_\ell
\,\biggl(
\sum_k
c_{k,\ell}
P_kA
\biggr)\,
P_\ell
\biggr\|^2
\le
\sum_\ell
\,\biggl\|
\sum_k
c_{k,\ell}
P_kA
\biggr\|^2
\nonumber\\
&\le
\sum_\ell
\,\biggl\|
\sum_k
c_{k,\ell}
P_k
\biggr\|^2
\|A\|^2
=
\sum_\ell
\max_k
|c_{k,\ell}|^2
\|A\|^2
\le
m\max_{k,\ell}
|c_{k,\ell}|^2
\|A\|^2.
\end{align}
\end{proof}

In the following two lemmas, $\sum'_{k,\ell}$ represents double summations over $k$ and $\ell$ excluding terms with $k=\ell$.
\begin{lemma}[Continuous ergodic mean]
\label{lem:ContinuousErgodic}
Let $H$ be a bounded self-adjoint operator with the finite spectral representation 
\begin{equation}
H=\sum_{\ell=1}^m E_\ell P_\ell ,
\label{eqn:SpectralRepHLemma5}
\end{equation}
where $\{E_\ell\}$ is the spectrum of $H$ and $\{P_\ell\}$ its spectral projections. 
Let 
\begin{equation}
\eta=\mathop{\min_{k,\ell}}_{k\neq\ell}|E_k-E_\ell|
\end{equation}
be the minimal spectral gap of $H$.
Then, for any bounded operator $A$ and for any $t>0$, we have
\begin{equation}
\biggl\|
\frac{1}{t}\int_0^t\d s\,
\rme^{\rmi sH}A\rme^{-\rmi sH}
-\sum_\ell P_\ell AP_\ell 
\biggr\|
\le
\frac{2\sqrt{m}}{\eta t}
\|A\|.
\end{equation}
\end{lemma}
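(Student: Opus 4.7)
The plan is to use the spectral decomposition of $H$ to reduce the ergodic-mean expression into a weighted sum of ``sandwich'' operators $P_k A P_\ell$, and then apply the second inequality of Lemma~\ref{lem:ProjectedSum} to bound the resulting expression.

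First, I would write $\rme^{\rmi sH}=\sum_\ell \rme^{\rmi sE_\ell}P_\ell$ using the spectral representation~\eqref{eqn:SpectralRepHLemma5}, so that
\begin{equation}
\rme^{\rmi sH}A\rme^{-\rmi sH}=\sum_{k,\ell}\rme^{\rmi s(E_k-E_\ell)}P_kAP_\ell.
\end{equation}
The diagonal contributions ($k=\ell$) give exactly $\sum_\ell P_\ell AP_\ell$, independent of $s$, which cancels with the subtracted Zeno term. Therefore
\begin{equation}
\frac{1}{t}\int_0^t\rmd s\,\rme^{\rmi sH}A\rme^{-\rmi sH}-\sum_\ell P_\ell AP_\ell={\sum_{k,\ell}}'c_{k,\ell}(t)\,P_kAP_\ell,
\end{equation}
where the primed sum excludes $k=\ell$ and
\begin{equation}
c_{k,\ell}(t)=\frac{1}{t}\int_0^t\rmd s\,\rme^{\rmi s(E_k-E_\ell)}=\frac{\rme^{\rmi t(E_k-E_\ell)}-1}{\rmi t(E_k-E_\ell)}.
\end{equation}

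Next, I would bound each coefficient using the elementary estimate $|\rme^{\rmi x}-1|\le 2$, which gives
\begin{equation}
|c_{k,\ell}(t)|\le\frac{2}{t|E_k-E_\ell|}\le\frac{2}{\eta t}
\end{equation}
for all $k\neq\ell$, by the definition of the minimal spectral gap $\eta$. Setting $c_{k,k}=0$ to extend the sum over all $(k,\ell)$, the inequality~\eqref{eq:boundSumProj} of Lemma~\ref{lem:ProjectedSum} yields
\begin{equation}
\biggl\|{\sum_{k,\ell}}'c_{k,\ell}(t)P_kAP_\ell\biggr\|\le\sqrt{m}\max_{k\neq\ell}|c_{k,\ell}(t)|\,\|A\|\le\frac{2\sqrt{m}}{\eta t}\|A\|,
\end{equation}
which is the claimed bound.

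There is no real obstacle here: the argument is a direct spectral computation followed by a single application of the projected-sum inequality already established in Lemma~\ref{lem:ProjectedSum}. The only subtlety worth flagging is the treatment of the $k=\ell$ terms, which must be isolated before taking norms so that the denominator $E_k-E_\ell$ never vanishes and the gap $\eta$ can be invoked.
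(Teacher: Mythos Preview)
Your proof is correct and follows essentially the same route as the paper: expand via the spectral decomposition, isolate the off-diagonal terms, compute the coefficients $c_{k,\ell}(t)=(\rme^{\rmi t(E_k-E_\ell)}-1)/[\rmi t(E_k-E_\ell)]$, and then apply Lemma~\ref{lem:ProjectedSum}. The only cosmetic difference is that the paper rewrites $|c_{k,\ell}(t)|$ as $|\sin[t(E_k-E_\ell)/2]/[t(E_k-E_\ell)/2]|$ before bounding, whereas you use $|\rme^{\rmi x}-1|\le 2$ directly; both yield the same final estimate.
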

\begin{proof}
By using the spectral representation of $H$ in~(\ref{eqn:SpectralRepHLemma5}), we have
\begin{align}
\frac{1}{t}\int_0^t\d s\,
\rme^{\rmi sH}A\rme^{-\rmi sH}
-\sum_\ell P_\ell AP_\ell 
&=
{\sum_{k,\ell}}'
\frac{1}{t}
\int_0^t\d s\,\rme^{\rmi s(E_k-E_\ell)}P_kAP_\ell
\nonumber\\
&=
{\sum_{k,\ell}}'
\frac{\rme^{\rmi t(E_k-E_\ell)}-1}{\rmi t(E_k-E_\ell)}P_kAP_\ell.
\end{align}
By using Lemma~\ref{lem:ProjectedSum}, this is bounded by
\begin{align}
\biggl\|
\frac{1}{t}\int_0^t\d s\,
\rme^{\rmi sH}A\rme^{-\rmi sH}
-\sum_\ell P_\ell AP_\ell 
\biggr\|
&=
\Biggl\|
{\sum_{k,\ell}}'
\frac{\rme^{\rmi t(E_k-E_\ell)}-1}{\rmi t(E_k-E_\ell)}P_kAP_\ell
\Biggr\|
\nonumber
\displaybreak[0]
\\
&\le
\sqrt{m}\mathop{\max_{k,\ell}}_{k\neq\ell}
\left|
\frac{\sin[t(E_k-E_\ell)/2]}{t(E_k-E_\ell)/2}
\right|\|A\|
\le
\frac{2\sqrt{m}}{\eta t}
\|A\|. 
\end{align}
\end{proof}

\begin{lemma}[Discrete ergodic mean]
\label{lem:DiscreteErgodic} 
Let $U$ be a unitary operator with the finite spectral representation
\begin{equation}
U=\sum_{\ell=1}^m \rme^{-\rmi\phi_\ell }P_\ell, 
\label{eqn:SpectralRepU}
\end{equation}
where $\{\rme^{-\rmi\phi_\ell }\}$ is the spectrum of $U$ and $\{P_\ell \}$ its spectral projections. 
Let
\begin{equation}
\eta = \mathop{\min_{k,\ell}}_{k\neq\ell}| \rme^{-\rmi\phi_k } - \rme^{-\rmi\phi_\ell }|= 2 \mathop{\min_{k,\ell}}_{k\neq\ell}\left|\sin\frac{\phi_k-\phi_\ell}{2}\right|
\end{equation}
be the minimal spectral gap of $U$.
Then, for any bounded operator $A$ and for any integer $n\geq1$, we have
\begin{equation}
\Biggl\|
\frac{1}{n}
\sum_{j=0}^{n-1}{U^\dag}^jAU^j
-\sum_\ell P_\ell AP_\ell 
\Biggr\|
\le
\frac{2\sqrt{m}}{\eta n}\|A\| .
\end{equation}
\end{lemma}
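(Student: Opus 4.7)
The strategy is a direct discrete analog of the proof of Lemma~\ref{lem:ContinuousErgodic}, using the spectral representation of $U$ to reduce the ergodic mean to a geometric series whose off-diagonal terms can be summed in closed form, and then applying Lemma~\ref{lem:ProjectedSum} to control the resulting double sum of sandwiched projections.

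First, I would use the spectral representation~\eqref{eqn:SpectralRepU} to write $U^j=\sum_k\rme^{-\rmi j\phi_k}P_k$, so that
\begin{equation}
{U^\dag}^jAU^j=\sum_{k,\ell}\rme^{\rmi j(\phi_k-\phi_\ell)}P_kAP_\ell.
\end{equation}
Summing over $j=0,\dots,n-1$ and dividing by $n$, the diagonal terms ($k=\ell$) contribute exactly $\sum_\ell P_\ell AP_\ell$, since each phase is~$1$. The off-diagonal terms ($k\neq\ell$) produce a geometric sum
\begin{equation}
\frac{1}{n}\sum_{j=0}^{n-1}\rme^{\rmi j(\phi_k-\phi_\ell)}=\frac{1}{n}\,\frac{\rme^{\rmi n(\phi_k-\phi_\ell)}-1}{\rme^{\rmi(\phi_k-\phi_\ell)}-1},
\end{equation}
which is well defined by the no-degeneracy assumption $\phi_k\neq\phi_\ell\pmod{2\pi}$ encoded in $\eta>0$.

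Next, I would collect these off-diagonal contributions and estimate their norm via~\eqref{eq:boundSumProj}:
\begin{equation}
\Biggl\|\frac{1}{n}\sum_{j=0}^{n-1}{U^\dag}^jAU^j-\sum_\ell P_\ell AP_\ell\Biggr\|
={\Biggl\|{\sum_{k,\ell}}'\frac{1}{n}\,\frac{\rme^{\rmi n(\phi_k-\phi_\ell)}-1}{\rme^{\rmi(\phi_k-\phi_\ell)}-1}\,P_kAP_\ell\Biggr\|}
\leq\sqrt{m}\,\mathop{\max_{k,\ell}}_{k\neq\ell}|c_{k,\ell}|\,\|A\|,
\end{equation}
with $c_{k,\ell}=\frac{1}{n}(\rme^{\rmi n(\phi_k-\phi_\ell)}-1)/(\rme^{\rmi(\phi_k-\phi_\ell)}-1)$. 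Using $|\rme^{\rmi n\theta}-1|\le 2$ in the numerator and the definition of~$\eta$ in the denominator gives $|c_{k,\ell}|\le 2/(n\eta)$, which yields the claimed bound $2\sqrt{m}\,\|A\|/(\eta n)$.

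There is no real obstacle here: the only thing to double-check is that the bound on $|c_{k,\ell}|$ is uniform in $n$ (it is, since the numerator is bounded by~$2$) and that $\sum'_{k,\ell}$ correctly isolates exactly the off-diagonal block removed by subtracting $\sum_\ell P_\ell AP_\ell$. The rest is a mechanical parallel of the continuous proof, with the integral replaced by a Riemann sum and $t(E_k-E_\ell)$ replaced by $n(\phi_k-\phi_\ell)$.
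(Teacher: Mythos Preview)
Your proof is correct and essentially identical to the paper's own argument: both expand in the spectral representation, sum the geometric series for the off-diagonal terms, and apply Lemma~\ref{lem:ProjectedSum} with the bound $|c_{k,\ell}|\le 2/(n\eta)$. The only cosmetic difference is that the paper rewrites the ratio as a Dirichlet kernel $|\sin[n(\phi_k-\phi_\ell)/2]|/|\sin[(\phi_k-\phi_\ell)/2]|$ before bounding, whereas you bound numerator and denominator directly.
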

\begin{proof}
By using the spectral representation of $U$ in~(\ref{eqn:SpectralRepU}), we have
\begin{equation}
\frac{1}{n}\sum_{j=0}^{n-1}
{U^\dag}^jAU^j
-\sum_\ell P_\ell AP_\ell 
=
{\sum_{k,\ell}}'
\frac{1}{n}
\sum_{j=0}^{n-1}
\rme^{\rmi j(\phi_k-\phi_\ell)}P_kAP_\ell
=
\frac{1}{n}
{\sum_{k,\ell}}'
\frac{1-\rme^{\rmi n(\phi_k-\phi_\ell)}}{1-\rme^{\rmi(\phi_k-\phi_\ell)}}
P_kAP_\ell. 
\end{equation}
By using Lemma~\ref{lem:ProjectedSum}, this is bounded by
\begin{align}
\Biggl\|
\frac{1}{n}
\sum_{j=0}^{n-1}{U^\dag}^jAU^j
-\sum_\ell P_\ell AP_\ell 
\Biggr\|
&=
\frac{1}{n}
\,\Biggl\|
{\sum_{k,\ell}}'
\frac{1-\rme^{\rmi n(\phi_k-\phi_\ell)}}{1-\rme^{\rmi(\phi_k-\phi_\ell)}}
P_kAP_\ell
\Biggr\|
\nonumber
\displaybreak[0]
\\
&\le
\frac{\sqrt{m}}{n}
\mathop{\max_{k,\ell}}_{k\neq\ell}
\left|
\frac{\sin[n(\phi_k-\phi_\ell)/2]}{\sin[(\phi_k-\phi_\ell)/2]}
\right|
\|A\|
\le
\frac{2 \sqrt{m}}{\eta n}
\|A\|. 
\end{align}
\end{proof}

\section{Rotating Frames}
\label{app:canongauge}
In Remark~\ref{rmk:gauge}, the freedom in the choice of a rotating frame is mentioned.
Given $U_1(t)$ and $U_2(t)$ unitary propagators, the ``canonical'' gauge for a rotating frame is given by one of the two evolutions, say $U_1(t)$.
In this rotating frame, the distance between the two evolutions is given by the distance between the relative evolution $U(t)= U_1(t)^\dag U_2(t)$ and the identity (zero Hamiltonian),
\begin{equation}
	\|U_2(t)-U_1(t)\| = \|U(t)-1\|.
\end{equation}
The Hamiltonian of the relative evolution $U(t)$ is given by
\begin{equation}
	H(t) = U_1(t)^\dag [H_2(t)-H_1(t) ] U_1(t),
\end{equation}
and the integral action is simply
\begin{equation}
	S(t)= \int_0^t \rmd s\, H(s).
\end{equation}
The Hamiltonian $H(t)$ may have an involute form and such canonical frame may be unpractical. However, it has a conceptual merit as we are going to show.

One can show that the converse of~\eqref{eq:actioncontrol} holds in the canonical frame, under suitable assumptions on $H(t)$. Namely,
\begin{equation}
U_2\approx U_1  \quad \Rightarrow \quad S\approx 0.
\end{equation}
\begin{prop}
\label{prop:converse}
Let $t\mapsto U(t)$ be the unitary propagator generated by a locally integrable time-dependent Hamiltonian $H(t)$, with $H(t)$ bounded and self-adjoint for all $t\in \mathbb{R}$.
Then, the integral action
$S(t) = \int_0^t \d s\,H(s)$ is bounded by
\begin{equation}
	\|S\|_{\infty,t} \leq (1+ \|H\|_{1,t})\|U -1\|_{\infty,t}.
\end{equation}
\end{prop}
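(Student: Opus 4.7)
The plan is to exploit the integral equation defining $U(t)$ and express $S(t)$ explicitly in terms of $U(t)-1$ plus a remainder that is itself controlled by $\|U-1\|_{\infty,t}$. In other words, this proposition is really the dual of Lemma~\ref{lemma:divergence} applied in the canonical rotating frame (where one evolution is trivial), and the same integration-by-parts mechanism produces the converse bound.

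Concretely, I would first record the dual of the integral equation~\eqref{eq:integraleq}. Since $U(0)=1$ and $\dot U(s)^\dag=\rmi U(s)^\dag H(s)$ in the locally integrable sense used in Appendix~\ref{app:locInt}, integration gives
\begin{equation}
U(t)^\dag - 1 = \rmi \int_0^t \d s\, U(s)^\dag H(s).
\end{equation}
Then I would write the trivial decomposition $1 = U(s)^\dag + (1-U(s)^\dag)$ inside the integrand of $S(t)$, obtaining
\begin{equation}
\rmi S(t) = \rmi \int_0^t \d s\, H(s) = \rmi\int_0^t \d s\, U(s)^\dag H(s) + \rmi\int_0^t \d s\, [1-U(s)^\dag]H(s),
\end{equation}
and recognize the first integral as $U(t)^\dag-1$, so that
\begin{equation}
\rmi S(t) = [U(t)^\dag-1] + \rmi\int_0^t\d s\,[1-U(s)^\dag]H(s).
\end{equation}

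Finally, taking norms and using $\|U(s)^\dag-1\|=\|U(s)-1\|\le\|U-1\|_{\infty,t}$ yields
\begin{equation}
\|S(t)\| \le \|U(t)-1\| + \int_0^t \d s\,\|1-U(s)^\dag\|\,\|H(s)\| \le (1+\|H\|_{1,t})\,\|U-1\|_{\infty,t},
\end{equation}
uniformly in $t$, giving the claim. I do not expect a genuine obstacle here: the calculation is a direct application of the integration-by-parts identity of Lemma~\ref{lem:3} (with $A(s)=1$, $B(s)=U(s)^\dag$) and the only subtlety is to justify manipulations for merely locally integrable $H(t)$, which is already covered by the absolute-continuity arguments of Appendix~\ref{app:locInt}.
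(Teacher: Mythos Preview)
Your proof is correct and follows essentially the same approach as the paper. The only cosmetic difference is that the paper works with the forward integral equation $U(t)-1=-\rmi\int_0^t\d s\,H(s)U(s)$ and splits $U(s)=[U(s)-1]+1$, whereas you use the adjoint equation and split $1=U(s)^\dag+[1-U(s)^\dag]$; both yield the identical bound after taking norms.
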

\begin{proof}
	We have
	\begin{align}
		U(t) -1 &= -\rmi \int_0^t \d s\, H(s) U(s) 
		\nonumber\\
		&= -\rmi \int_0^t \d s\, H(s)[U(s) - 1] -\rmi \int_0^t \d s\, H(s)
		\nonumber\\
		&= -\rmi \int_0^t \d s\, H(s)[U(s) - 1] -\rmi S(t),
	\end{align}
	that is,
	\begin{equation}
		S(t) = \rmi [U(t) - 1 ] - \int_0^t \d s\, H(s) [U(s) - 1 ].
	\end{equation}
	By taking the norm the result follows.	
\end{proof}

\section{Non-isospectral Hamiltonians}
\label{app:isospectral}
Here, we prove that, if the constant generators of two evolutions are not isospectral, the two evolutions eventually diverge, regardless of how close the generators may be.
\begin{prop}
\label{prop:isospectral}
Let $H$ and $G$ be two (not necessarily bounded) self-adjoint operators with pure point spectrum. If $H$ and $G$ are not unitarily equivalent, then
\begin{equation}
\sup_{t\in\mathbb{R}}\| \rme^{-\rmi t H} - \rme^{-\rmi t G}\| \ge \sqrt{2}.
\end{equation}
\end{prop}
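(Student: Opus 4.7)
The plan is to argue by contradiction: assume $K := \sup_{t\in\mathbb{R}}\|\rme^{-\rmi tH} - \rme^{-\rmi tG}\| < \sqrt{2}$ and deduce that $H$ and $G$ must share every eigenvalue with identical multiplicity. Since both operators have pure point spectrum, this would make them unitarily equivalent, contradicting the hypothesis.

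The first step is to establish a pointwise lower bound at every $\lambda\in\mathrm{spec}(H)$. For any unit eigenvector $\phi$ with $H\phi=\lambda\phi$, the unitarity of $\rme^{-\rmi tH}$ and $\rme^{-\rmi tG}$ gives $\|(\rme^{-\rmi tH} - \rme^{-\rmi tG})\phi\|^2 = 2 - 2\Re[\rme^{\rmi t\lambda}\langle\phi,\rme^{-\rmi tG}\phi\rangle]$, so bounding the left-hand side by $K^2$ yields $\Re[\rme^{\rmi t\lambda}\langle\phi, \rme^{-\rmi tG}\phi\rangle]\ge 1 - K^2/2$ for all $t\in\mathbb{R}$. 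I would then apply an ergodic-mean argument at frequency $\lambda$, in the spirit of Lemma~\ref{lem:ContinuousErgodic}: using the spectral resolution $\langle\phi,\rme^{-\rmi tG}\phi\rangle = \int \rme^{-\rmi ts}\,\rmd\mu_\phi(s)$, with $\mu_\phi$ the spectral probability measure of $G$ in the state $\phi$, dominated convergence gives
\begin{equation}
\frac{1}{T}\int_0^T \rme^{\rmi t\lambda}\langle\phi,\rme^{-\rmi tG}\phi\rangle\,\rmd t\,\to\,\mu_\phi(\{\lambda\}) = \|P_G(\{\lambda\})\phi\|^2,\quad\text{as}\ T\to+\infty.
\end{equation}
Averaging the pointwise inequality and passing to the real part of this limit then produces $\|P_G(\{\lambda\})\phi\|^2\ge 1-K^2/2 > 0$ for every unit $\phi\in\mathrm{ran}\,P_H(\{\lambda\})$.

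This uniform lower bound shows that $P_G(\{\lambda\})$ restricted to $\mathrm{ran}\,P_H(\{\lambda\})$ is bounded below, hence injective, so $\dim\mathrm{ran}\,P_H(\{\lambda\})\le\dim\mathrm{ran}\,P_G(\{\lambda\})$; the opposite inequality follows by swapping the roles of $H$ and $G$. Hence each eigenvalue of either operator occurs with the same multiplicity in both, and the pure point hypothesis then supplies a unitary intertwining their eigenspace decompositions, contradicting the non-equivalence. The only technical subtlety is ensuring the ergodic identity when $G$ is unbounded, but this causes no difficulty because $|\rme^{-\rmi ts}|=1$ makes the integrand in the spectral representation uniformly bounded in $t$, so dominated convergence applies regardless of the boundedness of $G$.
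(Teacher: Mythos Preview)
Your proof is correct and uses the same central idea as the paper: apply the time-averaged (ergodic) identity $\frac{1}{T}\int_0^T\|(\rme^{-\rmi tH}-\rme^{-\rmi tG})\phi\|^2\,\rmd t\to 2-2\mu_\phi(\{\lambda\})$ for an eigenvector $\phi$, via dominated convergence on the spectral measure. The only organizational difference is that the paper argues directly---splitting into the two cases ``spectra differ'' versus ``same spectrum, different multiplicity'' and in each case exhibiting an eigenvector $\varphi$ with $\mu_\varphi(\{g\})=0$---whereas you run the contrapositive, obtaining the uniform lower bound $\|P_G(\{\lambda\})\phi\|^2\ge 1-K^2/2$ and using bounded-below-hence-injective to compare multiplicities in one stroke; this sidesteps the case split but is otherwise the same argument.
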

\begin{proof}
\mbox{}
\begin{enumerate}
\item Let $g$ be an eigenvalue of $G$ and let $\varphi$ be its associated normalized eigenvector. One has
\begin{align}
\delta_\varphi^2(t)
&=\|(\rme^{-\rmi t H} - \rme^{-\rmi t G})\varphi\|^2 
\vphantom{\int}
\nonumber\\
&
= \|(\rme^{-\rmi t (H-g)} - 1)\varphi\|^2 
\vphantom{\int}
\nonumber\\
&= \int |\rme^{-\rmi t(\lambda-g)}-1|^2\,\rmd \mu_\varphi(\lambda)
\nonumber\\
&= 4\int \sin^2\!\left(\frac{\lambda-g}{2}t\right)\rmd \mu_\varphi(\lambda) \nonumber\\
&= 2 - 2 \int \cos[(\lambda-g)t]\,\rmd \mu_\varphi(\lambda), 
\end{align}
where $\mu_\varphi$ is the spectral measure of $H$ at $\varphi$. Thus, for any $T>0$,
\begin{equation}\label{eq:SupLowerBound}
\sup_{t\in\mathbb{R}} \delta_\varphi^2(t)
\geq \frac{1}{T} \int_0^T \delta_\varphi^2(t)\,\rmd t = 2 - 2  
\int \sinc[(\lambda-g)T]\,\rmd\mu_\varphi(\lambda)
\end{equation}
by Fubini's theorem. Notice that $\sinc(\lambda T) \to \mathbf{1}_{\{0\}}(\lambda)$  pointwise as $T\to+\infty$, where $\mathbf{1}_\Omega(\lambda)$ is the indicator function of the set $\Omega$. 
Therefore, by taking the limit $T\to+\infty$, by the dominated convergence theorem one gets
\begin{equation}
\label{eq:SupLowerBoundLimit}
\sup_{t\in\mathbb{R}} \delta_\varphi^2(t)   \geq 2 - 2\mu_\varphi(\{g\}).
\end{equation}
Notice that this result is valid for every self-adjoint $H$, irrespective of its spectrum.
If $H$ has a pure point spectrum,
with the spectral resolution
\begin{equation}
H= \sum_j h_j P_j,
\end{equation}
then the spectral measure at $\varphi$ of a set $\Omega\subset\mathbb{R}$ reads 
\begin{equation}
\mu_\varphi(\Omega) =
\sum_{j \,:\, h_j \in \Omega} \| P_j \varphi \|^2 ,
\end{equation}
which yields
\begin{equation}
\delta_\varphi^2(t)=2 - 2 \sum_j \| P_j \varphi \|^2 \cos[(h_j-g)t], 
\end{equation}
and	 
\begin{equation}
\mu_\varphi(\{g\}) = 
\sum_j  \delta_{h_j,g} \| P_j \varphi \|^2,
\end{equation}
with $\delta_{h,g}$ denoting the Kronecker delta.

\item Since by assumption $H$ and $G$ are not unitarily equivalent, either they have different spectra or they have the same spectrum with different multiplicities. In the first case, let $g$ be an eigenvalue of, say, $G$ with $g\notin\mathop{\mathrm{spec}}H$. Then, $\mu_\varphi(\{g\})=0$ and, by point 1 above,
\begin{equation}
\sup_{t\in\mathbb{R}}\| \rme^{-\rmi t H} - \rme^{-\rmi t G}\| \ge \sup_t \delta_\varphi(t) \geq \sqrt{2}.	
\label{eq:2.88}
\end{equation} 
In the second case, there is a common eigenvalue $g$ of $G$ and $H$ with different multiplicities. 
Let $P_{j_1}$ and $Q_{j_2}$, for some $j_1$ and $j_2$, be the associated eigenprojections of $H$ and $G$, respectively, i.e.\ $h_{j_1} = g_{j_2} =g$. We have, say, 
$\dim \mathop{\mathrm{ran}} Q_{j_2} > \dim \mathop{\mathrm{ran}} P_{j_1}$,
and thus there exists a unit vector $\varphi\in \mathop{\mathrm{ran}} Q_{j_2}$, with $P_{j_1} \varphi=0$.
Therefore,
\begin{equation}
\mu_\varphi(\{g\})= \| P_{j_1} \varphi \|^2 = 0,
\end{equation}
and~\eqref{eq:2.88} holds again.
\end{enumerate}
\end{proof}


\end{document}